\title{Spontaneous symmetry breaking in 2D supersphere sigma models and applications to intersecting loop soups}
\author[1,2]{Etienne Granet}
\author[1,2,3]{Jesper Lykke Jacobsen}
\author[1,4]{Hubert Saleur}
\affil[1]{Institut de Physique Th\'eorique, Paris Saclay, CEA, CNRS, 91191 Gif-sur-Yvette, France}
\affil[2]{Laboratoire de Physique Th\'eorique, D\'epartement de Physique de l'ENS,
  \'Ecole Normale Sup\'erieure, Sorbonne Universit\'e, CNRS, PSL Research University, 75005 Paris, France}
\affil[3]{Sorbonne Universit\'e, \'Ecole Normale Sup\'erieure, CNRS, Laboratoire de Physique Th\'eorique (LPT ENS), 75005 Paris, France}
\affil[4]{USC Physics Department, Los Angeles CA 90089, USA}
\date{}
\newtheorem{lemma}{Lemma}
\newtheorem{theorem}{Theorem}
\newcommand{\tr}{\,\text{tr}\,}
\newcommand{\cotan}{\,\text{cotan}\,}
\newcommand{\rvac}{|0\rangle}
\newcommand{\lvac}{\langle 0|}
\newcommand{\normord}[1]{:\mathrel{#1}:}
\newcommand{\expvalue}[2]{\langle #1| #2 |#1\rangle}
\newcommand{\tikTra}{ 
            \begin{tikzpicture}[scale=0.3, baseline=-3.5pt]
   \draw [purple,thick,domain=0:180] plot ({cos(\x)}, {-sin(\x)});
   \draw [purple,thick,domain=0:180] plot ({cos(\x)+1}, {-sin(\x)});
   \draw [purple,thick,domain=0:1] plot ({3}, {-\x});
   \draw [purple,thick,domain=0:1] plot ({4}, {-\x});
            \end{tikzpicture}
}
\newcommand{\tikTrb}{ 
            \begin{tikzpicture}[scale=0.3, baseline=-3.5pt]
   \draw [purple,thick,domain=0:1] plot ({-1}, {-\x});
   \draw [purple,thick,domain=0:180] plot ({cos(\x)*1.5+1.5}, {-sin(\x)*1.5});
   \draw [purple,thick,domain=0:180] plot ({cos(\x)/2+1.5}, {-sin(\x)/2});
   \draw [purple,thick,domain=0:1] plot ({4}, {-\x});
            \end{tikzpicture}
}
\newcommand{\tikfour}[4]{ 
            \begin{tikzpicture}[scale=0.3, baseline=-3.5pt]
                \draw[black,line width=1pt](0,-0.5) -- (#1,0.5); 
                \draw[black,line width=1pt](0.5,-0.5) -- (#2,0.5); 
                \draw[black,line width=1pt](1,-0.5) -- (#3,0.5); 
                \draw[black,line width=1pt](1.5,-0.5) -- (#4,0.5); 
                    \node at (0,-0.5){\tiny $\bullet$};
                    \node at (0,0.5){\tiny $\bullet$};
                    \node at (0.5,-0.5){\tiny $\bullet$};
                    \node at (0.5,0.5){\tiny $\bullet$};
                    \node at (1,-0.5){\tiny $\bullet$};
                    \node at (1,0.5){\tiny $\bullet$};
                    \node at (1.5,-0.5){\tiny $\bullet$};
                    \node at (1.5,0.5){\tiny $\bullet$};
            \end{tikzpicture}
}
\newcommand{\tikthree}[3]{ 
            \begin{tikzpicture}[scale=0.3, baseline=-3.5pt]
                \draw[black,line width=1pt](0,-0.5) -- (#1,0.5); 
                \draw[black,line width=1pt](0.5,-0.5) -- (#2,0.5); 
                \draw[black,line width=1pt](1,-0.5) -- (#3,0.5); 
                    \node at (0,-0.5){\tiny $\bullet$};
                    \node at (0,0.5){\tiny $\bullet$};
                    \node at (0.5,-0.5){\tiny $\bullet$};
                    \node at (0.5,0.5){\tiny $\bullet$};
                    \node at (1,-0.5){\tiny $\bullet$};
                    \node at (1,0.5){\tiny $\bullet$};
            \end{tikzpicture}
}
\newcommand{\tiktwo}[2]{ 
            \begin{tikzpicture}[scale=0.3, baseline=-3.5pt]
                \draw[black,line width=1pt](0,-0.5) -- (#1,0.5); 
                \draw[black,line width=1pt](0.5,-0.5) -- (#2,0.5); 
                    \node at (0,-0.5){\tiny $\bullet$};
                    \node at (0,0.5){\tiny $\bullet$};
                    \node at (0.5,-0.5){\tiny $\bullet$};
                    \node at (0.5,0.5){\tiny $\bullet$};
            \end{tikzpicture}
}
\newcommand{\tikc}{ 
            \begin{tikzpicture}[scale=0.3, baseline=-3.5pt]
                \draw[black,line width=1pt](0,-0.5) to[out=90, in=90] (0.5,-0.5); 
                \draw[black,line width=1pt](0,0.5) to[out=-90, in=-90] (0.5,0.5); 
                    \node at (0,-0.5){\tiny $\bullet$};
                    \node at (0,0.5){\tiny $\bullet$};
                    \node at (0.5,-0.5){\tiny $\bullet$};
                    \node at (0.5,0.5){\tiny $\bullet$};
            \end{tikzpicture}
}
\begin{document}
\maketitle

\begin{abstract}

Two-dimensional sigma models on superspheres $S^{r-1|2s}\cong OSp(r|2s)/OSp(r-1|2s)$ are known to flow to weak coupling $g_\sigma\to 0$ in the IR when $r-2s<2$. Their long-distance properties are described by a free ``Goldstone" conformal field theory (CFT) with $r-1$ bosonic and $2s$ fermionic degrees of freedom,  where the $OSp(r|2s)$ symmetry is spontaneously broken. This behavior is made possible by the lack of unitarity. 

The purpose of this paper is to study logarithmic corrections to the free theory at small but non-zero coupling $g_\sigma$. We do this in two ways. On the one hand, we perform perturbative calculations with the sigma model action, which are of special technical interest since the perturbed theory is logarithmic. On the other hand, we study an integrable lattice discretization of the sigma models provided by vertex models and spin chains with $OSp(r|2s)$ symmetry. Detailed analysis of the Bethe equations then confirms and completes the field theoretic calculations. Finally, we apply our results to physical properties of dense loop soups with crossings.

%
%

\end{abstract}

\newpage

\tableofcontents


\newpage

\section{Introduction}

The Mermin-Wagner theorem---which forbids spontaneous breaking of continuous symmetries in dimensions $D\leq 2$---relies in particular on the assumption of unitarity. It has long been known that when unitarity is broken---typically, in statistical mechanics systems where some ``Boltzmann-weights" can be negative---stable massless Goldstone phases can indeed appear. A simple and beautiful example of this phenomenon is provided by 2D superspin models with orthosymplectic symmetry. These models involve ``spins" with bosonic and fermionic components (more detail below), and enjoy symmetry under a supergroup, generalizing  the usual orthogonal symmetry group to superspins. With $r$ bosonic and $2s$ fermionic degrees of freedom, the symmetry is described by the orthosymplectic group $OSp(r|2s)$ which behaves, in many ways, like the ordinary group $O(N\equiv r-2s)$. The spontaneously broken symmetry phase can be described by a sigma model with target space $OSp(r|2s)/OSp(r-1|2s)\cong S^{r-1|2s}$---a supersphere---and a single coupling constant, $g_\sigma$. The perturbative $\beta$ function to leading order is $\beta={dg_\sigma\over d\log l}\propto (r-2s-2)g_\sigma^2$. For the physically relevant sign of positive $g_\sigma$ and $r-2s>2$ (which includes in particular the ordinary sphere sigma model), the flow is towards  large coupling as usual: fluctuations grow at large distance, and the symmetry is eventually restored. Meanwhile, if $r-2s<2$, the flow is towards weak coupling, and symmetry remains broken. In this case, the fixed point theory in the infra-red is a very simple Goldstone theory made of free massless scalars with $r-1$ bosonic and $2s$ fermionic components. It is a conformal field theory with central charge $c=r-1-2s$ and $r-1$ non-compact directions.

Microscopic realizations of supersphere sigma models were proposed in \cite{denseloops} in terms of a loop soup where loops cover all the vertices of the square lattice, with every edge visited exactly once, every vertex visited exactly twice, and with possibility of crossing. Each crossing is given a special Boltzmann weight $w$, which is the only parameter in the problem, apart from the loop weight, taken to be $r-2s\equiv N$. 
Why this model provides a realization of the spontaneously broken $OSp(r|2s)$ symmetry  phase was discussed in detail in \cite{denseloops}. In that reference, it was in particular checked numerically that the IR properties of the model are given by those of the Goldstone theory with central charge $c=r-1-2s$ indeed. 

The lattice model of dense loops is expected to provide a  physical realization of the broken symmetry phase for all finite, non-zero values of the coupling $w$. Different values of $w$ correspond to different bare values of the coupling constant $g_\sigma$. Since the bare $g_\sigma$ is usually finite,  there will be logarithmic corrections to scaling, appearing for instance in powers of $\log x$ corrections to the pure conformal behavior expected in the fixed point Goldstone theory.  The main goal of this paper is to study these corrections. This is interesting for several reasons. From a practical perspective,  these corrections affect directly the correlations measured in dense loop models, which have a variety of interesting applications.  From a more fundamental perspective, the dense loop model provides a regularization of the supersphere sigma model which is rather easy to study both analytically (via the Bethe-ansatz, see below) and  numerically, since it involves lattice models with a small number of discrete, compact degrees of freedom. A lot of weak-coupling properties can then be investigated, which are much harder to get in ordinary sigma models such as $O(3)$ \cite{niedermayer}. Finally, we recall that 2D sigma models on super-targets appear in a variety of contexts from string theory \cite{schomerusreview} to the study of phase transitions in non-interacting disordered 2D quantum electron gases \cite{Zirnbauer,Zirnbauer1,gruzberg} to the study of many statistical mechanics models \cite{parisisourlas,readsaleursigma}. Our understanding of these models remains sketchy, largely because the loss of unitarity leads to unpleasant features, such as indecomposable action of the conformal symmetry \cite{LCFTreview}. 

The supersphere sigma models provide an interesting ground where to gain experience in these matters---here, for instance, by investigating perturbation of logarithmic conformal field theory. 

 \begin{figure}
 \begin{center}
\includegraphics[scale=0.5]{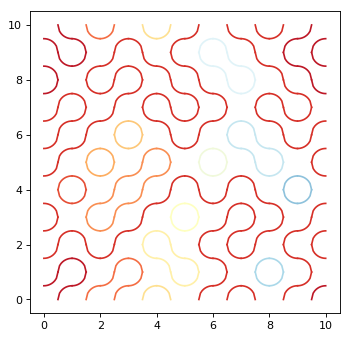} 
\includegraphics[scale=0.5]{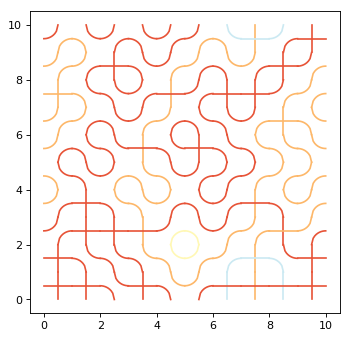} 
\end{center}
\caption{A sample configuration of a dense loop soup where crossings are forbidden (left) or allowed (right). For clarity, each loop has been given a different color: these colors do not form part of the definition of the model.}
\label{fig:loopsoup}
\end{figure}

Remarkably, there exists a series of integrable vertex models known to be in the universality class of the $OSp(r|2s)$ supersphere sigma models for $r-2s<2$ \cite{nienhuisrietman,martinsnienhuisrietman}. These models have edge degrees of  freedom taking values in the $OSp(r|2s)$ fundamental representations, and are closely associated with special points in the phase diagram of the dense loop model \cite{denseloops}. In the  limit of small spectral parameter, the transfer matrix of these vertex models gives rise to a spin chain Hamiltonian acting on the tensor product of $r+2s$ dimensional representations. 
 
 Our strategy in this paper will be to study the weak-coupling physics of the supersphere sigma models mostly by analyzing their integrable lattice regularizations using the Bethe ansatz technique. We will also compare these results with those of (logarithmic) conformal perturbation theory.

 It is important to stress that integrable spin chains are not usually associated with free Goldstone theories such as the ones we will encounter below.%
  \footnote{Note in particular that for $r>1$ the associated CFTs will have non-compact degrees of freedom, even though the spin chains  have a finite number of states per site.}
  The well known $SU(2)$ spin-${1\over 2}$ chain for instance can  be considered as a lattice regularization of the $O(3)$ sigma model but with topological angle $\theta=\pi$, and bare coupling of order unity \cite{affleck},\cite{haldane}. It 
flows to strong coupling in the IR, with long-distance conformal properties described by a  level-one Wess-Zumino theory (thus with left and right $SU(2)$ current algebra symmetries). Here in contrast, the  spin chains provide regularizations of  sigma models with no topological term, small bare coupling constants, flowing to weak coupling in the IR, with long-distance properties described by free Goldstone CFTs where the $OSp$ symmetry is spontaneously broken.

 The plan of the paper is as follows. We focus in the first section on  $OSp(1|2)$, a case which has already been partly studied in \cite{birgit2}.  We compute logarithmic corrections to the gaps directly from the   $\sigma$-model Hamiltonian, and match them with the Bethe-ansatz results. Such a field theory calculation of the low-lying spectrum directly from the sigma model Hamiltonian, and its comparison to the spin chain discretization, has not been done before to our knowledge.  The sigma model analysis involves perturbation of a logarithmic CFT, which we compare with  a discussion by  Cardy of  logarithmic corrections in ordinary CFTs perturbed by marginal operators, and explain to which extent the results on perturbed CFTs apply to the logarithmic case. We then move on to study of the $OSp(r|2)$ chains with $r=2,3,4$ in the following three  sections with a similar approach. In the last section, we apply our results to the discussion of physical observables in dense loop soups and derive new predictions for the power-law and logarithmic decay of intersecting loops correlations.  Some technical aspects are studied in the appendices. In particular, in Appendix~\ref{sec:appc}, we  give formulas for logarithmic corrections directly from the Bethe ansatz equations, and study how these are perturbed when there is an additional source term at one site in the Bethe equations (that occurs in case of modified boundary conditions, or inclusions of strings).

We note that our work is not the first to address the topic of logarithmic corrections in orthosymplectic spin chains. 
In a previous paper \cite{frahmmartins} the logarithmic corrections for some states in the  $OSp(3|2)$ chain (technically, those with spins $j=1/2$, $q$ arbitrary, see below) were  obtained numerically and interpreted as the value of the Casimir. A similar exercise was carried out in  \cite{FM} for similar states and other $OSp(r|2s)$ models. Our paper extends and in some cases significantly corrects the results of these two references. It also  provides analytical results from the Bethe-ansatz, as well as a detailed study of the relationship with the sigma models  (including in particular a perturbative calculation of their  properties) and the dense loop soups.

Before entering the subject, we briefly give a few definitions related to orthosymplectic spin chains.

\subsection{Definitions/Reminders}

\subsubsection{The orthosymplectic symmetry}

We give here a brief description of the orthosymplectic symmetry. 

We first remind that the superspace $\mathbb{R}^{r|2s}$ is parametrized by $r$ 'bosonic' variables $\phi_1,...,\phi_r$ and $2s$ 'fermionic' variables $\eta^1_1,\eta^2_1,...,\eta^1_s\eta^2_s$ that satisfy $[\phi_i,\phi_j]=0$, $[\phi_i,\eta^k_l]=0$, $\{ \eta^i_j, \eta^k_l \}=0$. The scalar product between two vectors $x,y\in \mathbb{R}^{r|2s}$ is defined by
\begin{equation}
\langle x,y\rangle = x^t\cdot  J_{r|2s}\cdot y\,,
\end{equation}
where
\begin{equation}
J_{r|2s}=\left(
\begin{matrix}
I_r &O_{r\times 2s}\\
O_{2s\times r} & I_{s}\otimes J_{0|2}
\end{matrix}
\right)\,, \quad J_{0|2}=\left(
\begin{matrix}
0&1\\
-1&0
\end{matrix}
\right)\,,
\end{equation}
with $I_r$ the identity matrix of size $r\times r$, and $O_{a\times b}$ the zero matrix of size $a\times b$.

The group $OSp(r|2s)$ is the set of linear transformations on $\mathbb{R}^{r|2s}$ that leave the norm invariant:
\begin{equation}
OSp(r|2s)=\{ M \in \mathcal{M}_{r+2s,r+2s}\,, \quad \forall x\in \mathbb{R}^{r|2s}\,, \langle x, M x\rangle = \langle x, x\rangle\}\,.
\end{equation}

The Lie superalgebra $osp(r|2s)$ of such a group can be represented as \cite{scheunert}
\begin{equation}
osp(r|2s)=\left\lbrace
\left(
\begin{matrix}
A & X & B\\
-Y & E & X^t \\
C & Y^t & -A^t
\end{matrix}
\right); A,B,C \in \mathcal{M}_{s,s}; X\in \mathcal{M}_{s,r}; Y\in \mathcal{M}_{r,s}; E\in \mathcal{M}_{r,r}; E^t=-E
\right\rbrace\,.
\end{equation}
In this representation the generators for $osp(r|2)$ will be denoted $J_z,J_+,J_-$, $F_1,...,F_r$, $G_1,...,G_r$, $Q_{ij}$, $i,j=1,...,r$, $i>j$ and are such that
\begin{equation}
\label{eq:generators}
j_zJ_z+j_+J_++j_-J_-+\sum_{k=1}^r (f_k F_k + g_k G_k) + \sum_{i>j}q_{ij}Q_{ij} =\left(
\begin{matrix}
j_z/2 & f_1 &...& f_n & j_+\\
g_1 & 0 &...& q_{n1}& f_1\\
... &...  &...&... & ..\\
g_n & -q_{n1} &...& 0 & f_n\\
j_- & -g_1 &...& -g_n & -j_z/2\\
\end{matrix}\right)\,.
\end{equation}
The commutation relations of the generators can be read off directly from this matrix representation. For example, for $osp(1|2)$ one has $5$ generators $J_z,J_+,J_-,F_+,F_-$ that satisfy the following relations
\begin{equation}
\label{eq:osp12}
\begin{aligned}
&[J_z,J_\pm ]=\pm J_\pm\,, \quad [J_+,J_-]=2J_z\\
&[J_z,F_1]=\frac{1}{2}F_1\,,\quad [J_z,G_1]=-\frac{1}{2}G_1\,, \quad [J_+,G_1]=-F_1\,,\quad [J_-,F_1]=-G_1\\
&\{F_1,F_1\}=2J_+\,,\quad \{G_1,G_1\}=-2J_-\,, \quad \{F_1,G_1\}=2J_z\,.
\end{aligned}
\end{equation}
As for the Casimir, we normalize it in such a way that it is the inverse of the Killing form of the generators \eqref{eq:generators} (i.e., the quadratic term in the eigenvalue $j$ of $J_z$ is $2j^2$). For example for $osp(1|2)$ it reads
\begin{equation}
{\cal C}=2J_z^2+(J_+ J_-+J_-J_+)+\frac{1}{2}(G_1F_1-F_1G_1)\,.
\end{equation}



The representation theory of the $osp(r|2s)$ algebras is a bit complicated, and involves (except when $r=1$) issues of typicality.  Rather than give generalities at this stage, we will recall necessary features in our case by case analysis below. 

\subsubsection{The supersphere $\sigma$-model}
A simple but non-trivial field theory with $OSp(r|2s)$ symmetry is the  theory of ``free'' fields constrained to lie on the supersphere of dimensions $(r-1|2s)$, i.e. the subset $S^{r-1|2s}\subset \mathbb{R}^{r|2s}$ such that $\forall x\in S^{r-1|2s}, \langle x, x\rangle=1$. This is the  non-linear $\sigma$-model with target space the supersphere $S^{r-1|2s}$. We will restrict in the following to the case $s=1$, and thus symmetries $OSp(r|2)$. The action is
\begin{equation}
S(\phi_1,...,\phi_r,\eta^1,\eta^2)=\frac{\kappa}{4\pi g_\sigma}\int  dxdt  \left( \sum_{i=1}^r \partial_\mu\phi_i\partial_\mu\phi_i+2\partial_\mu\eta^2\partial_\mu\eta^1 \right)\,,
\end{equation}
with  $\partial_\mu X\partial_\mu X \equiv-(\partial_x X)^2+(\partial_t X)^2$, and $\kappa$ a normalization factor to make matching with existing literature easier.


The constraint then translates into
\begin{equation}
\label{supersphere}
\sum_{i=1}^r\phi_i^2 +2\eta^2\eta^1=1\,.
\end{equation}
The model for $r<4$ is known to flow to a Goldstone free theory. From integration of the leading order in the $\beta$ function ${dg_\sigma\over d\log L}\propto (r-2s-2)g_\sigma^2$ we find, after properly adjusting normalizations, replacing the RG scale by the size of the system, and setting $g\equiv g_\sigma$ for simplicity
 \cite{wegner1989,floratospecther}
\begin{equation}
g\approx \frac{\kappa}{(4-r)\log L/L_0} \,, \quad \mbox{for } L\to\infty\,,
\label{scalingofg}
\end{equation}
where $L_0$ is, at the order we are working,  an irrelevant  length scale we shall take equal to unity in the following.

\subsubsection{The orthosymplectic spin chain \label{sec:chain}}

We will study spin chains built from an $R$-matrix that satisfies the Yang-Baxter equation and that belongs to the fundamental representation of the $osp(r|2s)$ superalgebra. Periodic boundary conditions will be considered in this paper, although the open boundary case is briefly addressed in appendix \ref{sec:appc}.

Consider a spin chain on $L$ sites with periodic boundary conditions, each site $i$ being described by a $\mathbb{Z}_2$-graded vector space $V_i$ of dimension $D=r+2s$. The Grassman parity $p_\alpha$ of the $\alpha$-th degree of freedom is defined as
\begin{equation}
\label{eq:grading}
\begin{aligned}
p_\alpha&=1\quad \text{if }\alpha=1,...,s \quad \text{ or }\quad \alpha=r+s+1,...,r+2s\\
&=0\quad \text{if }\alpha=s+1,...,r+s\,.
\end{aligned}
\end{equation}
The Boltzmann weight of the chain is given by a matrix $R_{ab}$ that acts on the (graded) tensor product of two spaces $V_a\times V_b$. It is thus a square matrix of size $D^2\times D^2$. The transfer matrix $t(\lambda)$ of the model at spectral parameter $\lambda$, that acts on the tensor product $V_L\otimes ...\otimes V_1$ of $L$ spaces, is given by the supertrace of the monodromy matrix
\begin{equation}
\label{eq:T}
\begin{aligned}
t(\lambda)&=\sum_{i=1}^D (-1)^{p_i}T_{ii}(\lambda)\\
\text{with }\quad T(\lambda)&=R_{aL}(\lambda)R_{aL-1}(\lambda)...R_{a1}(\lambda)\,.
\end{aligned}
\end{equation}
 $T_{ij}(\lambda)$ denotes the $(i,j)$ component of $T$ in the auxiliary vector space $V_a$. The notation $R_{a i}$ means that $R$ acts on the whole tensor product $V_a\otimes V_L\otimes ...\otimes V_1$, but non-trivially only on the spaces $V_a$ and $V_i$. However the graded tensor product introduces signs everywhere, and for clarity we give the explicit expression of the components of the transfer matrix
\begin{equation}
\label{tm}
\begin{aligned}
t(\lambda)^{\alpha_1...\alpha_L}_{\beta_1...\beta_L}=\sum_{c_1,...,c_L=1}^D R(\lambda)^{\alpha_L c_L}_{c_1 \beta_L}R(\lambda)^{\alpha_{L-1} c_{L-1}}_{c_L \beta_{L-1}}...R(\lambda)^{\alpha_1 c_1}_{c_2 \beta_1}
 (-1)^{p_{c_1}}(-1)^{\sum_{j=2}^L(p_{\alpha_j}+p_{\beta_j})\sum_{i=1}^{j-1}p_{\alpha_i}}\,.
\end{aligned}
\end{equation}
Such an explicit expression can be found in \cite{essler}.

Let us now choose a particular matrix $R$. We impose the two following conditions
\begin{enumerate}[label=(\roman*)]
\item $R$ satisfies the graded Yang-Baxter equation
\begin{equation}
\label{yb}
R_{12}(\lambda)R_{13}(\lambda+\mu)R_{23}(\mu)=R_{23}(\mu)R_{13}(\lambda+\mu)R_{12}(\lambda)\,,
\end{equation}

\item $R$ has $osp(r|2s)$ symmetry:  that is, for each generator $A$ of $osp(r|2s)$ (in a certain representation, not necessarily \eqref{eq:generators}), $t(\lambda)$ commutes with $A_{\rm tot}=\sum_{i=1}^L A_i$ where $A_i$ acts on the vector space $V_i$.

\end{enumerate}
We will write $R_{ik}^{lj}$ the component along $e_j\otimes e_l$ of the action of $R$ on $e_i\otimes e_k$, where the $e_i$'s denote the basis vector of $V$.

The following $R$-matrix is known to satisfy these two properties \cite{arnaudonavan2}
\begin{equation}
\label{Rmatrix}
R_{ab}(\lambda)=\lambda I_{ab}+P_{ab}+\frac{2\lambda}{2-r+2s-2\lambda}E_{ab}\,,
\end{equation}
where $I_{ab}$ is the $D^2\times D^2$ identity matrix, $P_{ab}$ is the graded permutation operator
\begin{equation}
(I)^{lj}_{ik}=\delta_{ij}\delta_{kl}\,,\qquad (P)^{lj}_{ik}=(-1)^{p_i p_j}\delta_{il}\delta_{jk}\,,
\label{idpermop}
\end{equation}
and $E_{ab}$ the matrix given by
\begin{equation}
(E)^{lj}_{ik}=(-1)^{i>r+s}(-1)^{j\leq s} \delta_{ik'}\delta_{jl'}\,,
\end{equation}
with $i'=D+1-i$, and $(-1)^{x>y}$ is $-1$ if $x>y$, $1$ if $x\leq y$. The Hamiltonian of the chain is then defined as
\begin{equation}
\mathcal{H}=\frac{d}{d\lambda} \log t(\lambda)\rvert_{\lambda=0}\,.
\end{equation}

The Yang-Baxter equation ensures that the transfer matrices $t(\lambda)$ and $t(\mu)$ at different spectral parameters commute, and it has been shown that the eigenvalues of $t$ can be determined with the Bethe ansatz \cite{galleasmartins,arnaudonavan} (completeness is not proven; moreover some eigenvalues may be a bit singular, like the ground state of the $osp(3|2)$ spin chain for example that is obtained with coinciding Bethe roots that should be normally excluded) .

\subsubsection{The critical exponents from the spin chain\label{sec:cftspectrum}}
If a model is described by a Conformal Field Theory (CFT) in the thermodynamic limit, then the content of the theory (central charge, conformal dimensions, structure constants; and in case of LCFT, the logarithmic couplings) can be seen from the finite-size corrections to the low-lying excited energies for large system sizes \cite{blotecardy,affleckc,cardyoperator}. More precisely, denoting $E_L^0$ the energy of the ground state of a periodic chain, and $e_L^0=E_L^0/L$ its intensive form, one has
\begin{equation}
e_L^0=e_\infty-\frac{\pi v_F }{6L^2}c+o(L^{-2})\,,
\end{equation}
where $e_\infty$ is the thermodynamic value of the energy, $c$ is the central charge and $v_F$ the Fermi velocity. The finite-size corrections to the energy of the excited states $e_L$ contain the conformal dimensions $h+\bar{h}$ of the theory
\begin{equation}
\label{eq:defsmallE}
e_L-e_L^0=\frac{2\pi v_F }{L^2}(h+\bar{h})+o(L^{-2})\,,
\end{equation}
Some structure constants can be seen in the next-order finite-size corrections, see \cite{cardyoperator}. In case of logarithmic finite-size corrections
\begin{equation}
e_L-e_L^0=\frac{2\pi v_F }{L^2}(h+\bar{h})+\frac{2\pi v_F}{L^2\log L} \alpha+o(L^{-2}(\log L)^{-1})\label{scalegaps}\,,
\end{equation}
the correlation functions of the field $\phi$ associated to $e_L$ are expected to satisfy at large $x$
\begin{equation}
\langle \phi(x)\phi(0)\rangle \sim \frac{1}{x^{2(h+\bar{h})}(\log x)^{2\alpha}}\label{afflecklogeq}\,,
\end{equation}
see \cite{afflecklog} for a derivation.

In the following we will often refer to the quantity ${L^2\over 2\pi v_F}(e_L-e_L^0)$ as a ``scaled gap''. 
We will also sometimes denote $e_L-e_L^0$ by $\Delta e_L$.

\section{$OSp(1|2)$}
\subsection{The spectrum from field theory}
We start by deriving the expected low-lying spectrum of the non-linear $\sigma$-model with $OSp(1|2)$ symmetry at first order in $(\log L)^{-1}$.

\subsubsection{General strategy}
There is a common strategy for studying the different models. The first step is to derive the Hamiltonian in terms of the modes of the fields from the lagrangian. 
The expectation values of the Hamiltonian within modes are naively divergent: to regularize them, we express them in terms of their normal-ordered versions and isolate infinite sums. Every regularized value for these gives a Hamiltonian with the desired $osp$ symmetry, and they have to be fixed by an additional condition. Once the expression of the states in term of the modes are derived, the eigenvalues of the Hamiltonian at order $g$ can then be obtained.
This gives access to the logarithmic corrections by using \eqref{scalingofg}.

\subsubsection{The action}
In the $osp(1|2)$ case the constraint \eqref{supersphere} can be satisfied by imposing
\begin{equation}
\phi_1=1-\eta^2\eta^1\,.
\end{equation}
The action becomes then
\begin{equation}
S(\eta^1,\eta^2)=\frac{\kappa}{2\pi g}\int dxdt(\partial_\mu \eta^2\partial_\mu \eta^1 -\eta^1\eta^2\partial_\mu\eta^1\partial_\mu\eta^2)\,.
\end{equation}
Upon the change of variable $\eta^{1,2}\to \sqrt{g}\eta^{1,2}$ it reads
\begin{equation}
\label{osp12action}
S(\eta^1,\eta^2)=\frac{\kappa}{2\pi}\int dxdt(\partial_\mu \eta^2\partial_\mu \eta^1 -g\eta^1\eta^2\partial_\mu\eta^1\partial_\mu\eta^2)\,.
\end{equation}
Here $g$ shall be treated at order $1$. Recall that for large $L$ we have from \eqref{scalingofg}
\begin{equation}
g=\frac{\kappa}{3\log L}\,.
\end{equation}

\subsubsection{The Hamiltonian}
The first task is to derive the Hamiltonian corresponding to the action \eqref{osp12action}. For calculation convenience we write \eqref{osp12action} as
\begin{equation}
S=\frac{\kappa}{2\pi}\int dxdt(-\partial_x\eta^2\partial_x\eta^1+\dot{\eta}^2\dot{\eta}^1+g\mathcal{V}(\eta^1,\eta^2))\,,
\end{equation}
with $\mathcal{V}(\eta^1,\eta^2)$ a generic potential. 
We take the system to be defined on a cylinder of unit radius, so that $x$ is integrated between $0$ and $2\pi$, and $t$ between $0$ and an arbitrary final time $T$. The derivatives with respect to fermions are always considered from the right (this means for example that $(d/d\dot{\eta}^2) (\dot{\eta}^2\dot{\eta}^1)=-\dot{\eta}^1$). In terms of modes
\begin{equation}
\eta^{1,2}(x,t)=\sum_k\eta^{1,2}_k(t)e^{ikx}\,,
\end{equation}
the action reads
\begin{equation}
S=\kappa\int dt\left(\sum_k-k^2\eta^2_k\eta^1_{-k}+\dot{\eta}^2_k\dot{\eta}^1_{-k}+gV\right)=\int dt \mathcal{L}\,,
\end{equation}
with $\mathcal{L}$ the lagrangian density, and
\begin{equation}
V(t)=\frac{1}{2\pi}\int dx\mathcal{V}(x,t)\,.
\end{equation}
The conjugate momenta to $\eta^1_k$ and $\eta^2_k$ are
\begin{equation}
\pi^1_k=\kappa\dot{\eta}^2_{-k}+\kappa g\frac{dV}{d\dot{\eta}^1_k},\quad \pi^2_k=-\kappa\dot{\eta}^1_{-k}+\kappa g\frac{dV}{d\dot{\eta}^2_k}\,.
\end{equation}
The quantization procedure imposes the following anticommutators 
at equal times at all orders in $g$:
\begin{equation}
\label{canonical}
\{\eta^{1,2}_k,\pi^{1,2}_p\}=i\delta_{k,p}\,.
\end{equation}
The corresponding Hamiltonian is then defined as
\begin{equation}
H=\sum_k (\pi^1_k\dot{\eta}^1_k-\dot{\eta}^2_k\pi^2_k)-\mathcal{L}\,.
\end{equation}
It gives, neglecting terms of order $O(g^2)$:
\begin{equation}
H=\sum_k (\kappa k^2\eta^2_k\eta^1_{-k}+\kappa^{-1}\pi^2_k\pi^1_{-k})-\kappa g  V+O(g^2)\,.
\end{equation}
We now use the following expression of the time derivative for a quantity $X$
\begin{equation}
\dot{X}=i[H,X]\,,
\end{equation}
and the relation valid for all (commuting or anticommuting) quantities $a_i$ and $b$ ($n\geq 2$)
\begin{equation}
[a_n...a_1,b]=\sum_{i=1}^n (-1)^{i-1} a_n...a_{i+1}\{a_i, b\}a_{i-1}...a_1\,,
\end{equation}
to compute
\begin{equation}
\label{derivatives}
\begin{aligned}
\dot{\eta}^1_k&=-\kappa^{-1}\pi^2_{-k}+\kappa g\frac{dV}{d\pi^1_k}\,,\qquad \dot{\eta}^2_k=\kappa^{-1}\pi^1_{-k}+\kappa g\frac{dV}{d\pi^2_k}\\
\dot{\pi}^1_k&=-\kappa k^2\eta^2_{-k}+\kappa g\frac{dV}{d\eta^1_k}\,,\qquad \dot{\pi}^2_k=\kappa k^2\eta^1_{-k}+\kappa g\frac{dV}{d\eta^2_k}\,.
\end{aligned}
\end{equation}
Now let us define the following charges
\begin{equation}
\begin{aligned}
J_z&=\sum_k \frac{\eta^1_k\pi^1_k-\eta^2_k\pi^2_k}{2i}\,,\qquad J_+=-\sum_ki\eta^1_k\pi^2_k\,,\qquad J_-=-\sum_k i\eta^2_k\pi^1_k\\
F_1&=\sum_{k+l+m=0}\pi^2_{-k}(\delta_{l,0}\delta_{m,0}+g\eta^1_l\eta^2_m)\,,\qquad G_1=\sum_{k+l+m=0}\pi^1_{-k}(\delta_{l,0}\delta_{m,0}-g\eta^2_l\eta^1_m)\,.
\end{aligned}
\end{equation}
Using \eqref{canonical} one can check that they satisfy the $osp(1|2)$ relations \eqref{eq:osp12}---remember that before \eqref{osp12action} we made the replacement $\eta^{1,2}\to\sqrt{g}\eta^{1,2}$. With the formulas \eqref{derivatives} one has:
\begin{equation}
\label{charges_deriv}
\begin{aligned}
\partial_t J_z&=\sum_k \frac{\kappa g}{2}\left(\frac{dV}{d\pi^1_k}\pi^1_k+\eta^1_k\frac{dV}{d\eta^1_k}-\eta^2_k\frac{dV}{d\eta^2_k}-\frac{dV}{d\pi^2_k}\pi^2_k \right)\\
\partial_t J_+&=\sum_k \kappa g\left(\frac{dV}{d\pi^1_k}\pi^2_k+\eta^1_k\frac{dV}{d\eta^2_k} \right)\,,\qquad \partial_t J_-=\sum_k \kappa g\left(\frac{dV}{d\pi^2_k}\pi^1_k+\eta^2_k\frac{dV}{d\eta^1_k} \right)\\
\partial_t F_1&=\sum_{k+l+m=0}g\left( \kappa\frac{dV}{d\eta^2_0}\delta_{l,0}\delta_{m,0}-\kappa km\eta^1_l\eta^2_m\eta^1_k+\kappa^{-1}\eta^1_k\pi^1_{-l}\pi^2_{-m}\right)\\
\partial_t G_1&=\sum_{k+l+m=0}g\left(\kappa\frac{dV}{d\eta^1_0}\delta_{l,0}\delta_{m,0}+\kappa km\eta^2_l\eta^2_m\eta^1_k-\kappa^{-1}\eta^2_k\pi^1_{-l}\pi^2_{-m} \right)\,,
\end{aligned}
\end{equation}
where the following relation has been used (it is an integration by part)
\begin{equation}
\begin{aligned}
\sum_{k+l+m=0}k^2\eta^1_k\eta^1_l\eta^2_m=\sum_{k+l+m=0}k(k-k-l-m)\eta^1_k\eta^1_l\eta^2_m=\sum_{k+l+m=0}-km\eta^1_k\eta^1_l\eta^2_m\,.
\end{aligned}
\end{equation}

With the equations \eqref{charges_deriv} it can be checked that the following potential implies the conservation of all these charges:
\begin{equation}
\mathcal{V}(\eta^1,\eta^2)=\eta^2\eta^1\partial_x\eta^2\partial_x\eta^1-\eta^2\eta^1\partial_t\eta^2\partial_t \eta^1\,,
\end{equation}
or in terms of $V$
\begin{equation}
V=\sum_{k+l+m+n=0}\left(-mn\eta^2_k\eta^1_l\eta^2_m\eta^1_n+\kappa^{-2}\eta^2_k\eta^1_l\pi^1_{-m}\pi^2_{-n}\right)\,.
\end{equation}
Define now the following modes for all $k$
\begin{equation}
\begin{aligned}
\psi^1_k&=\frac{-ik\eta^1_k\kappa^{1/2}-\pi^2_{-k}\kappa^{-1/2}}{\sqrt{2}},\quad \psi^2_k=\frac{-ik\eta^2_{k}\kappa^{1/2}+\pi^1_{-k}\kappa^{-1/2}}{\sqrt{2}}\\
\bar{\psi}^1_k&=\frac{-ik\eta^1_{-k}\kappa^{1/2}-\pi^2_{k}\kappa^{-1/2}}{\sqrt{2}},\quad \bar{\psi}^2_k=\frac{-ik\eta^2_{-k}\kappa^{1/2}+\pi^1_{k}\kappa^{-1/2}}{\sqrt{2}}\,.
\end{aligned}
\end{equation}
They satisfy
\begin{equation}
\{\psi^1_k,\psi^2_p\}=k\delta_{k+p,0}\,,\qquad \{\bar{\psi}^1_k,\bar{\psi}^2_p\}=k\delta_{k+p,0}
\end{equation}
for $k,p\neq 0$, the other anticommutators being zero. The original modes for $k\neq 0$ read in terms of the $\psi$'s:
\begin{equation}
\begin{aligned}
\eta^{1,2}_k&= \frac{i}{k\sqrt{2\kappa}}(\psi^{1,2}_k-\bar{\psi}^{1,2}_{-k})\\
\pi^{1,2}_{-k}&=\pm\sqrt{\kappa/2}(\psi^{2,1}_k+\bar{\psi}^{2,1}_{-k})\,.
\end{aligned}
\end{equation}

The potential then reads
\begin{equation}
V=\sum_{k+l+m+n=0}\frac{1}{2\kappa^2}\left(-i\sqrt{2\kappa}\eta^2_0+\frac{\psi^2_k-\bar{\psi}^2_{-k}}{k}\right)\left(-i\sqrt{2\kappa}\eta^1_0+\frac{\psi^1_l-\bar{\psi}^1_{-l}}{l}\right)(\psi^2_m\bar{\psi}^1_{-n}+\bar{\psi}^2_{-m}\psi^1_n)\,.
\end{equation}

\subsubsection{Normal order}
Up to now no normal order has been put on the fields. The elementary annihilation operators are set to be the $\psi^{1,2}_m$, $\bar{\psi}^{1,2}_m$ with $m\geq 0$, and the elementary creation operators the same modes but for $m<0$, as well as $\eta^{1,2}_0$. The normally ordered version of an operator $X$ is denoted $\normord{X}$ and is defined, for every product of elementary operators that appear in the expression of $X$, by putting all the annihilation operators to the right of their creation operators, multiplied by the corresponding fermionic sign. Equivalently (this is much more convenient for practical purposes), it amounts to forbidding contractions between modes that compose the Hamiltonian. Indeed, the only difference between the Hamiltonian and its normally-ordered version is the contractions that appear whenever an annhilation operator is moved to the right of a creation operator. For example if one computes the expectation value
\begin{equation}
\begin{aligned}
\lvac \psi_{1}^2 \left( \sum_k \normord{\psi^2_k\psi^1_{-k}}\right) \psi^1_{-1}\rvac&= \lvac \psi_{1}^2 \left( \sum_{k<0}\psi^2_k\psi^1_{-k}-\sum_{k>0}\psi^1_{-k}\psi^2_{k}+\psi^2_0\psi^1_0\right) \psi^1_{-1}\rvac\\
&= \lvac \psi_{1}^2 \left(- \psi^1_{-1}\psi^2_{1}\right) \psi^1_{-1}\rvac\\
&=-1\,,
\end{aligned}
\end{equation}
one actually contracts the $\psi^2_1$ inside the Hamiltonian with the $\psi^1_{-1}$ outside the Hamiltonian, without touching the $\psi^1_{-1}$ inside the Hamiltonian (but counting the $-$ sign that comes when going through it).

The normal ordering is known to remove 'infinite quantities' from the expression of the fields. These are actually sums of anticommutators $\{\eta^1_k,\pi^1_k\}=i$. While no expectation value is taken, one may equally consider that $\{\eta^1_k,\pi^1_k\}=i\cdot 1_{|k|}$ where $1_k=1_{-k}$ is a bosonic variable that commutes with everything, and that could be treated on the same footing as the fermionic variables $\eta$'s. This way these 'infinite quantities' are of the form $\sum_{k>0}1_k$ which are regular elements of the algebra we are using. The only point is then to define a vacuum expectation value for this element of the algebra. Let us define thus
\begin{equation}
\xi_0=\sum_{m>0}1_m,\quad \xi_{-1}=\sum_{m>0}\frac{1_m}{m}\,.
\end{equation}
The bosonic charges are not altered by the normal order:
\begin{equation}
\begin{aligned}
J_z&=\normord{J_z}\,,\qquad J_+=\normord{J_+}\,,\qquad J_-=\normord{J_-}\,.
\end{aligned}
\end{equation}
However the fermionic charges change. With an implicit sum over $k+l+m=0$ (that is explicitly written for $m$ in case of constraints), we have
\begin{equation}
\begin{aligned}
F_1&=\pi^2_{0}+g\pi^2_{-k}\eta^1_l\eta^2_m\\
&=\pi^2_0+g\pi^2_{-k}\eta^1_{-k}\eta^2_0+\frac{i g}{\sqrt{2\kappa}}\pi^2_{-k}\eta^1_l\left(\frac{\psi^2_m}{m}-\frac{\bar{\psi}^2_{-m}}{m} \right)\\
&=\normord{F_1}-ig\eta^1_0+\frac{i g}{\sqrt{2\kappa}}\left(\sum_{m<0}[\pi^2_{-k}\eta^1_l,\tfrac{\psi^2_m}{m}] -\sum_{m>0}[\pi^2_{-k}\eta^1_l,\tfrac{\bar{\psi}^2_{-m}}{m}]\right)\\
&=\normord{F_1}-ig\eta^1_0-\frac{2ig}{\sqrt{2\kappa}}\sum_{m>0}\left(\frac{i}{m\sqrt{2\kappa}}\pi^2_0+\sqrt{\kappa/2}\eta^1_0\right)1_m\\
&=\normord{F_1}-ig\eta^1_0-ig\eta^1_0 \xi_0+g\kappa^{-1}\pi^2_0 \xi_{-1}\,.
\end{aligned}
\end{equation}
To go from the second line to the third line, we noticed that $\pi^2_{-k}\eta^1_l$ only involves $\psi^1,\bar{\psi}^1$ that anticommute, so that the only 'ill-ordered' case that can occur is when a creation operator $\psi^2_{m},\bar{\psi}^2_m$ with $m<0$ is at the right. Similarly:
\begin{equation}
\begin{aligned}
G_1=\normord{G_1}+ig\eta^2_0+ig\eta^2_0 \xi_0+g\kappa^{-1}\pi^1_0 \xi_{-1}\,.
\end{aligned}
\end{equation}
As for the potential, it reads with an implicit summation over $k+l+m+n=0$
\begin{equation}
\begin{aligned}
V&=-\kappa^{-1}\eta^2_k\eta^1_l(\psi^2_m\bar{\psi}^1_{-n}+\bar{\psi}^2_{-m}\psi^1_n)\\
&=\normord{V}-\kappa^{-1}\sum_{\substack{m< 0\\n\leq 0}}[\eta^2_k\eta^1_l,\psi^2_m]\bar{\psi}^1_{-n}+\kappa^{-1}\sum_{\substack{m\geq 0\\n> 0}}[\eta^2_k\eta^1_l,\bar{\psi}^1_{-n}]\psi^2_{m}-\kappa^{-1}\sum_{\substack{m< 0\\n> 0}}\left([\eta^2_k\eta^1_l,\psi^2_m]\bar{\psi}^1_{-n}+\psi^2_m[\eta^2_k\eta^1_l,\bar{\psi}^1_{-n}]\right)\\
&+\kappa^{-1}\sum_{\substack{n< 0\\m\leq 0}}[\eta^2_k\eta^1_l,\psi^1_n]\bar{\psi}^2_{-m}-\kappa^{-1}\sum_{\substack{n\geq 0\\m> 0}}[\eta^2_k\eta^1_l,\bar{\psi}^2_{-m}]\psi^1_{n}+\kappa^{-1}\sum_{\substack{n< 0\\m> 0}}\left([\eta^2_k\eta^1_l,\psi^1_n]\bar{\psi}^2_{-m}+\psi^1_n[\eta^2_k\eta^1_l,\bar{\psi}^2_{-m}]\right)\\
&=\normord{V}-\frac{i}{\kappa\sqrt{2\kappa}}\sum_k\sum_{m>0}(\eta^2_k\bar{\psi}^1_{k}-\eta^1_{-k}\psi^2_k-\eta^1_k\bar{\psi}^2_{k}+\eta^2_{-k}\psi^1_k)1_m \\
&=\normord{V}+i\kappa^{-2}\sum_k(\eta^2_k\pi^2_k+\eta^1_k\pi^1_k)\xi_0\,.
\end{aligned}
\end{equation}
Here, to get to the second line we had to treat separately the 6 different cases where at least one of the fields in $(\psi^2_m\bar{\psi}^1_{-n}+\bar{\psi}^2_{-m}\psi^1_n)$ is a creation operator. The $\eta^2_k\eta^1_l$ part is then already normally ordered, since the only possible ill-ordered case is when $k=-l$, but then the ordering of the $\psi$ part inside $\eta^2_k\eta^1_l$ for $k>0$ cancels out with the ordering of the $\bar{\psi}$ part for $k<0$.

The whole Hamiltonian is
\begin{equation}
\label{eq:hami}
\begin{aligned}
H=&\sum_{k<0}(\psi^2_k\psi^1_{-k}-\psi^1_{k}\psi^2_{-k}+\bar{\psi}^2_k\bar{\psi}^1_{-k}-\bar{\psi}^1_{k}\bar{\psi}^2_{-k})+2\psi^2_0\psi^1_0-\frac{c}{12}-\kappa g \normord{V}\\
&-i\kappa^{-1}g\sum_k (\eta^2_k\pi^2_k+\eta^1_k\pi^1_k)\xi_0\,,
\end{aligned}
\end{equation}
with the central charge $c=-2$ obtained with a usual zeta regularization, namely assigning the value $\zeta(-1)=-1/12$ to the formal sum ``$\sum_{k>0}k$'' by analytically continuing the zeta function $\zeta(x)=\sum_{n>0}n^{-x}$ to the negative axis. The normally ordered Hamiltonian corresponds to the first line. Although the total Hamiltonian commutes with the $osp(1|2)$ charges, this is not the case anymore when normal order is imposed. One thus has to work with the total Hamiltonian, taking into account the $\xi$'s for which a prescription of expectation value has to be given. Note that any (complex) value of the $\xi$'s preserves the commutation relations.\\

\subsubsection{Building the states \label{sec:building}}
Let us define the states on which we will compute expectation values. We define the conformal variables $z=e^{-ix+\tau}$ and $\bar{z}=e^{ix+\tau}$
with $\tau=it$. A state $|\phi\rangle$ for a field $\phi(x,t)$ is defined by
\begin{equation}
|\phi\rangle=\phi(z=0)\rvac\,.
\end{equation}
For example one has
\begin{equation}
|\eta^1\rangle=\eta^1_0\rvac\,.
\end{equation}
The derivatives with respect to $z$ such as $|\partial_z\eta^1\rangle$ deserve some comments. Because of the definition of the modes $\psi$, we still have at all time
\begin{equation}
\eta^1(x,t)=\sum_k\eta_k^1(t)e^{ikx}=\eta^1_0+i\sum_{k\neq 0}\frac{\psi^1_k(t)-\bar{\psi}^1_{-k}(t)}{k\sqrt{2\kappa}}e^{ikx}\,.
\end{equation}
Without interaction we simply have $\psi^1_k(\tau)=\psi^1_k(0)e^{-k\tau}$. But with the interaction the time evolution of the $\psi$'s is not trivial anymore, and involves a term of order $g$ with a product of three $\psi$'s. The expression of $\eta$ in terms of $z,\bar{z}$, and even more for $\partial_z\eta$, is not valid anymore. In particular one cannot build the states as usual:
\begin{equation}
|\partial_z\eta^1 \rangle\neq  -i\psi^1_{-1}\rvac\,,
\end{equation}
contrary to the case $g=0$. The left-hand side now involves a term of order $g$. Precisely, using:
\begin{equation}
\partial_z=-\frac{1}{2i}e^{ix-\tau}\partial_x+\frac{1}{2}e^{ix-\tau}\partial_\tau\,,
\end{equation}
we get
\begin{equation}
\partial_z\eta^1(x,\tau)=\frac{-i}{\sqrt{2\kappa}}\sum_{k}\left(\psi^1_k(\tau)+ g\kappa \sqrt{\frac{\kappa}{2}}\frac{dV}{d\pi^1_k}(\tau)\right)e^{ix(k+1)}e^{-\tau}\,.
\end{equation}
Taking $z=0$ selects $k=-1$ because of the factor $e^{ix(k+1)}$, hence
\begin{equation}
| \partial_z\eta^1\rangle=\frac{-i}{\sqrt{2\kappa}}\psi^1_{-1}\rvac-i\frac{\kappa g}{2}\frac{dV}{d\pi^1_{-1}}\rvac\,,
\end{equation}
that is
\begin{equation}
| \partial_z\eta^1\rangle=-i(1+\frac{g}{2}\eta^2_0\eta^1_0)\psi^1_{-1}\rvac\,.
\end{equation}
Remark moreover that one has indeed $F_1(1+\frac{g}{2}\eta^2_0\eta^1_0)\psi^1_{-1}\rvac=0$ and $J_+(1+\frac{g}{2}\eta^2_0\eta^1_0)\psi^1_{-1}\rvac=0$ and this state is indeed highest-weight.

However, if we look at terms involving a $\eta$ times a derivative, since we have
\begin{equation}
\eta^1_k\frac{dV}{d\pi^1_{-k}}=0\,,
\end{equation}
we do have
\begin{equation}
|\eta^1 \partial_z\eta^1\rangle=-i\eta^1_0\psi^1_{-1}\rvac\,,
\end{equation}
without any corrections in $g$.\\

For terms such as $|\eta^1\partial_z\eta^1...\partial_z^m\eta^1\rangle$ the corresponding calculations happen to be not as easily tractable, but the state $\eta^1_0\psi^1_{-1}...\psi^1_{-m}$ is indeed annihilated by $F_1$ (and $J_+$). Similarly, $|\partial_z\eta^1...\partial_z^m\eta^1\rangle$ is not as easily computed, but $(1+m\frac{g}{2}\eta^2_0\eta^1_0)\psi^1_{-1}...\psi^1_{-m}$ is annihilated by $F_1$ and $J_+$.

\subsubsection{Regularization}
Any complex values for the $\xi$'s give a Hamiltonian that commutes with the charges (that also depend on $\xi$), and thus that has the $osp(1|2)$ symmetry and the classical non-linear sigma model as classical limit. However, since some of them appear in expectation values, one has to fix their value with an exterior argument. This is the only additional information that we need to quantize the model. Here we choose to fix the zero mode of the Hamiltonian. It reads
\begin{equation}
H_0=-\kappa^{-1}(1-g\eta^1_0\eta^2_0)\pi^1_0\pi^2_0-\kappa^{-1}ig(\eta^2_0\pi^2_0+\eta^1_0\pi^1_0)\xi_0\,.
\end{equation}
This zero mode should give the Laplace-Casimir operator of the algebra, see \cite{berezintolstoy}, thus be proportional to $(1-g\eta^1_0\eta^2_0)\pi^1_0\pi^2_0-ig(\eta^2_0\pi^2_0+\eta^1_0\pi^1_0)$. Hence we impose the value
\begin{equation}
\expvalue{0}{\xi_0}=-1\,.
\end{equation}

As for the value of $\expvalue{0}{\xi_{-1}}$, it does not enter the Hamiltonian nor the construction of the states, and thus has no influence on any expectation values.\\

Equivalently we could impose that the fermionic charges (and the bosonic ones) are not modified by the normal order. This constrains $\xi_0$ to take the value $-1$ and $\xi_{-1}$ the value $0$.

\subsubsection{Correction to the energy levels \label{sec:osp12corr}}
To evaluate the corrections at order $g$ to the energy levels, one has to compute the matrix elements $\langle \phi_1|H|\phi_2\rangle$ where $\phi_1$ and $\phi_2$ are eigenstates of the unperturbed $H$ with the same energy. In the following we will compute the action of $H$ on some state $|\phi\rangle$. Consider one term in the Hamiltonian \eqref{eq:hami} and denote $n$ and $\bar{n}$ the sum of the indices of the $\psi$'s and $\bar{\psi}'s$ that compose this term. In general we have $n-\bar{n}=0$, but not necessarily $n=\bar{n}=0$ separately. Thus a state $|\phi_1\rangle$ with a certain value of $n-\bar{n}$ is mapped by $H$ onto another state with the same value of $n-\bar{n}$. Since $|\phi_1\rangle$ and $|\phi_2\rangle$ have the same energy, they must have the same value of $n+\bar{n}$, hence the same value of $n$ and $\bar{n}$ separately to have a non-zero matrix elements by $H$. This way one can consider only the 'conservative' part of $H$, i.e. its terms with $n=\bar{n}=0$. It corresponds to summing the indices of the $\psi$'s to zero, and to summing the indices of the $\bar{\psi}$'s to zero separately as well. Note that in the term $\eta^2_k\pi^2_k+\eta^1_k\pi^1_k$, the only 'conservative' term is the zero mode $\eta^2_0\pi^2_0+\eta^1_0\pi^1_0$.
\begin{itemize}
\item $|0\rangle$. We have
\begin{equation}
H|0\rangle=0\,,
\end{equation}
that will be the reference state (in all finite sizes $L$) for our computations.

\item $|\eta^1\rangle\propto \eta^1_0\rvac$. We have
\begin{equation}
\normord{V}\eta^1_0\rvac=0\,,
\end{equation}
thus
\begin{equation}
\begin{aligned}
H\eta^1_0\rvac=-\kappa^{-1}g\eta^1_0\rvac\,,
\end{aligned}
\end{equation}
and a correction $-\kappa^{-1}g=-1/3\log L$. Note that it is below the ground state of the zero magnetization sector.

The other states in the multiplet are $(1-g\eta^2_0\eta^1_0)\rvac$ and $\eta^2_0\rvac$, i.e. the states obtained from $\eta^1_0\rvac$ by applying the lowering operators $J_-$ and $F_-$. They indeed have the same correction:
\begin{equation}
\begin{aligned}
H(1-g\eta^2_0\eta^1_0)\rvac&=0-gH\eta^2_0\eta^1_0\rvac\\
&=-\kappa^{-1}g\rvac+O(g^2)\\
&=-\kappa^{-1}g(1-g\eta^2_0\eta^1_0)\rvac+O(g^2)\,,
\end{aligned}
\end{equation}
and
\begin{equation}
H\eta^2_0\rvac=-\kappa^{-1}g\eta^2_0\rvac\,,
\end{equation}
hence
\begin{equation}
\frac{L^2 \Delta e_L}{2\pi v_F}=-\kappa^{-1}g\,.
\end{equation}


\item  $|\eta^1\partial\eta^1....\partial^m\bar{\partial}\eta^1....\bar{\partial}^{\bar{m}}\eta^1\rangle\propto \eta^1_0\psi^1_{-1}...\psi^1_{-m}\bar{\psi}^1_{-1}...\bar{\psi}^1_{-\bar{m}}\rvac$. We have
\begin{equation}
\begin{aligned}
&\normord{V} \eta^1_0\psi^1_{-1}...\psi^1_{-m}\bar{\psi}^1_{-1}...\bar{\psi}^1_{-\bar{m}}\rvac\\
&=\frac{1}{2\kappa^2}\left( \sum_{n=1}^{\bar{m}}\normord{\frac{-\bar{\psi}^2_n}{-n}(-i\eta^1_0)\psi^2_0\bar{\psi}^1_{-n}} +  \sum_{n=1}^{\bar{m}}\sum_{k=1}^m\normord{\frac{-\bar{\psi}^2_n}{-n}\frac{\psi^1_{-k}}{-k}\psi^2_k\bar{\psi}^1_{-n}}\right)\eta^1_0\psi^1_{-1}...\psi^1_{-m}\bar{\psi}^1_{-1}...\bar{\psi}^1_{-\bar{m}}\rvac\\
&+\frac{1}{2\kappa^2}\left(\sum_{n=1}^m \normord{\frac{\psi^2_n}{n}(-i\eta^1_0)\bar{\psi}^2_0\psi^1_{-n}}+\sum_{n=1}^m\sum_{k=1}^{\bar{m}} \normord{\frac{\psi^2_n}{n}\frac{-\bar{\psi}^1_{-k}}{k}\bar{\psi}^2_k\psi^1_{-n}}\right)\eta^1_0\psi^1_{-1}...\psi^1_{-m}\bar{\psi}^1_{-1}...\bar{\psi}^1_{-\bar{m}}\rvac\\
&=\frac{1}{2\kappa^2}\left(-\sum_{n=1}^{\bar{m}}\frac{-n}{n}-\sum_{n=1}^{\bar{m}}\sum_{k=1}^m\frac{-n}{n}\frac{k}{k}-\sum_{n=1}^{m}\frac{-n}{n}-\sum_{n=1}^m\sum_{k=1}^{\bar{m}}\frac{-n}{n}\frac{k}{k}\right)\eta^1_0\psi^1_{-1}...\psi^1_{-m}\bar{\psi}^1_{-1}...\bar{\psi}^1_{-\bar{m}}\rvac\\
&=\frac{1}{2\kappa^2}(\bar{m}(m+1)+m(\bar{m}+1))\eta^1_0\psi^1_{-1}...\psi^1_{-m}\bar{\psi}^1_{-1}...\bar{\psi}^1_{-m}\rvac\,.
\end{aligned}
\end{equation}
Indeed in $\normord{V}$, every $\psi^2_k$ with $k>m$ or $\bar{\psi}^2_k$ with $k>\bar{m}$ anticommutes with all the fields in the state and annihilates $\rvac$, hence the restriction over the summations. 

Thus
\begin{equation}
\label{correction1}
H\eta^1_0\psi^1_{-1}...\psi^1_{-m}\bar{\psi}^1_{-1}...\bar{\psi}^1_{-\bar{m}}\rvac=(m(m+1)-(2m\bar{m}+m+\bar{m}+2)g)\eta^1_0\psi^1_{-1}...\psi^1_{-m}\bar{\psi}^1_{-1}...\bar{\psi}^1_{-\bar{m}}\rvac\,,
\end{equation}
hence
\begin{equation}
\frac{L^2 \Delta e_L}{2\pi v_F}=\tfrac{1}{2}m(m+1)+\tfrac{1}{2}\bar{m}(\bar{m}+1)-(m\bar{m}+\tfrac{m+\bar{m}}{2}+1)\kappa^{-1}g\,.
\end{equation}
Note that the correction is not the sum of left and right contributions, but involves also a cross-term $m\bar{m}$. 

For symmetric states $m=\bar{m}$ we find
\begin{equation}
\label{eq:osp12gapsym}
\boxed{\frac{L^2 \Delta e_L}{2\pi v_F}=m(m+1)-(m^2+m+1)\kappa^{-1}g}\,,
\end{equation}
while when  $\bar{m}=0$,  the logarithmic correction becomes linear
\begin{equation}
\label{eq:correction2m}
\frac{L^2 \Delta e_L}{2\pi v_F}=\tfrac{1}{2}m(m+1)-(\tfrac{m}{2}+1)\kappa^{-1}g\,.
\end{equation}
In general, the correction is  thus not simply the sum of left and right contributions.

\item $|\partial\eta^1...\partial^m\eta^1\bar{\partial}\eta^1...\bar{\partial}^m\eta^1\rangle \propto (1+mg\eta^2_0\eta^1_0)\psi^1_{-1}...\psi^1_{-m}\bar{\psi}^1_{-1}...\bar{\psi}^1_{-m}$.  

Here the computation for the part involving the $\psi$'s is similar to the previous case. As for the $\eta^2_0\eta^1_0$ part, only the unperturbed Hamiltonian acts on it. Combining the two parts, one finds
\begin{equation}
\label{eq:osp12zero}
\frac{L^2 \Delta e_L}{2\pi v_F}=m(m+1)-m(m-1)\kappa^{-1}g\,.
\end{equation}

Note that the other states in the multiplet for $m=1$ are $(\eta^1_0\psi^2_{-1}+\psi^1_{-1}\eta^2_0)\rvac$ and $(1+\frac{g}{2}\eta^2_0\eta^1_0)\psi^2_{-1}\rvac$ and give the same correction as expected, hence the importance of the factor $mg\eta^2_0\eta^1_0$ that comes from the discussion in section \ref{sec:building}.


\end{itemize}

\subsection{The spectrum from the spin chain}

In the remainder of this section we provide two alternative means of deriving \eqref{eq:osp12gapsym}, or at least special cases thereof.
The first of these relies on the Bethe-ansatz diagonalization of the spin chain Hamiltonian, and the other on the computation of three-point functions---either
directly, or using a trick reminiscent of Wick's theorem. While both of these methods are of independent relevance, the reader interested mainly in results
for other models may chose to skip directly to section~\ref{sec:osp22}.

\subsubsection{Bethe equations}
We now move on to the corresponding $osp(1|2)$ spin chain.  Like the sigma model which involves as a basic degree of freedom a field in the vector representation of the algebra, the spin chain involves a tensor product of fundamental  representations  of $osp(1|2)$. In contrast with the other superalgebras we will encounter in this paper, $osp(1|2)$ has a simple representation theory. Its action on the spin chain  is fully reducible, and the Hilbert space decomposes onto a direct sum of ``spin $j$'' representations, with dimension $4j+1$. Here $j$ is the eigenvalue of the $J_z$ generator on the highest weight state, and $j=1/2$ corresponds to the fundamental. 

The spectrum of the Hamiltonian is described by one family of roots $\lambda_i$ satisfying the Bethe equations \cite{galleasmartins,martinsosp12}
\begin{equation}
\left( \frac{\lambda_i+i/2}{\lambda_i-i/2}\right)^L=\prod_{j\neq i}\frac{\lambda_i-\lambda_j+i}{\lambda_i-\lambda_j-i}\cdot \frac{\lambda_i-\lambda_j-i/2}{\lambda_i-\lambda_j+i/2}\,.
\end{equation}
An eigenvalue of the Hamiltonian for one set of solutions $\lambda_1,...,\lambda_M$ to these equations is then
\begin{equation}
e_L=-\frac{1}{L}\sum_{i=1}^M \frac{1}{\lambda_i^2+1/4}\,.
\end{equation}
The spin $j$ (ie, the eigenvalue of $J_z$) corresponding to a solution with $M$ roots is linked to $M$ through
\begin{equation}
\label{eq:relcharge}
M=L-2j\,.
\end{equation}
Moreover the $osp(1|2)$ Bethe states are highest-weight states.

This kind of relation, that links the number of Bethe roots to the value of the charges of a state, can be simply deduced from a direct diagonalisation of the Hamiltonian in small sizes. But in some cases it can be obtained analytically from commutation relation between the total charges and the monodromy matrix components.


\subsubsection{Bethe root structure}
The first task when studying a spin chain with the Bethe ansatz is to find the structure of the roots that correspond to the energies (at least the low-lying ones). There is no generic way of determining this structure from the Bethe equations, implying that a numerical study is an inevitable step. The two options are either to use the  McCoy method \cite{mccoy,mccoy2}, or to proceed by a trial-and-error approach.

We observe that on the lattice in size $L$, the field $\eta^1\partial\eta^1...\partial^m\eta^1\bar{\partial}\eta^1...\bar{\partial}^{\bar{m}}\eta^1$ is obtained with $L-1-m-\bar{m}$ real Bethe roots, with $m$ positive vacancies and $\bar{m}$ negative vacancies. The field $\partial\eta^1...\partial^m\eta^1\bar{\partial}\eta^1...\bar{\partial}^{m}\eta^1$ is obtained with $L-2m$ Bethe roots, among which $L-2m-2$ are real and symmetrically distributed, and $2$ form an exact $2$-string at $\pm i/2$, i.e. a pair of complex conjugate Bethe roots whose values are exactly $\pm i/2$. The field $\partial\eta^1...\partial^m\eta^1\bar{\partial}\eta^1...\bar{\partial}^{\bar{m}}\eta^1$ when $m\neq \bar{m}$ is obtained with $L-m-\bar{m}$ Bethe roots, among which $L-m-\bar{m}-2$ are real with $m$ positive vacancies and $\bar{m}$ negative vacancies, and $2$ form an approximate $2$-string at $\pm i/2$ with large real part, on the side where there are the most vacancies. See Figure \ref{fig:rootosp12} for a plot of some root structures.

 \begin{figure}[H]
 \begin{center}
\includegraphics[scale=0.5]{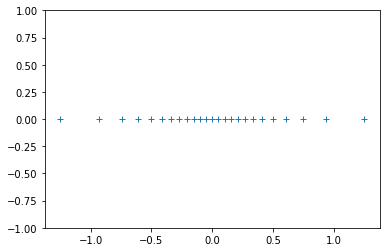} 
\includegraphics[scale=0.5]{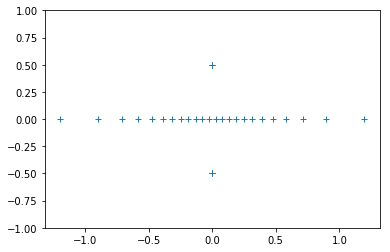}
\end{center}
\caption{Bethe roots in the complex plane for the lowest state of magnetization $1$ (left) and $0$ (right), for $L=26$.}
\label{fig:rootosp12}
\end{figure}

The results for the gaps $\Delta e_L$ of the ground state of the sectors of magnetization $j$ reads then
\begin{equation}
\label{osp12leadingcorr}
\frac{L^2 \Delta e_L}{2\pi v_F} =
\begin{cases}
 j^2-\frac{1}{4}-\left(j^2+\frac{3}{4}\right)\kappa^{-1}g\,, & \quad \text{if $j$ is half-integer} \,, \\
 j(j+1)-j(j-1)\kappa^{-1}g\,, & \quad \text{if $j$ is integer} \,,
\end{cases}
\end{equation}
in agreement  with (\ref{eq:osp12gapsym}) after identifying $j\equiv m+{1\over 2}$ for half-integer spins, and with \eqref{eq:osp12zero} where $j\equiv m$ for integer spins, while the value of the Casimir on these states is
\begin{equation}
\label{eq:casimirosp12}
{\cal C}=j(2j+1)\,.
\end{equation}
See below for a discussion of the symmetries at finite and vanishing $g$. 

Note that all these corrections previously derived from field theory can be computed analytically within the Bethe ansatz; see eq.~\eqref{eq:pert} in Appendix \ref{sec:appc}.

\subsubsection{Numerical results\label{sec:numericsosp12}}
We present here the numerical verification of the logarithmic corrections, carried out with the Bethe ansatz. Because of the logarithms, large sizes are needed to get a good precision. The general idea is to use a Newton method to solve the Bethe equations, using the solution at size $L$ to build an initial guess at size $L+2$ close enough to the solution to make the Newton method converge. Then a fit as a quotient of two polynomials in $(\log L)^{-1}$ is performed. Precisely, we used the function
\begin{equation}
f_n(L)=\frac{a_0+a_1 (\log L)^{-1}+...+a_{n-1}(\log L)^{-n+1}}{1+b_1(\log L)^{-1}+...+b_n(\log L)^{-n}}\,,
\end{equation}
and fitted the parameters $a_0,...,a_{n-1},b_1,...,b_n$ for a value of $n$ depending on the state.

From the Bethe ansatz one computes $Z^{m,\bar{m}}_L=(\frac{L^2}{2\pi v_F}(e_L-e_L^0)-(h+\bar{h}))\log L$, where $e_L^0$ is the energy of the ground state and $e_L$ the energy of the state of study (here, the one with $m$ positive vacancies and $\bar{m}$ negative vacancies; a $1/2$ vacancy on both sides counts for an odd number of total vacancies), and looks for its limit value, see Figure \ref{fig:osp12}. 

 \begin{figure}
 \begin{center}
\includegraphics[scale=0.5]{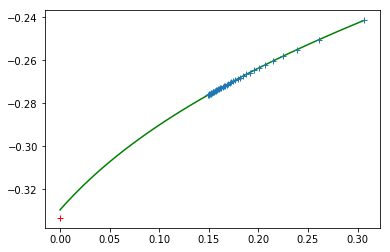} 
\includegraphics[scale=0.5]{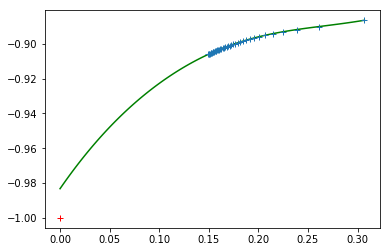}
\includegraphics[scale=0.5]{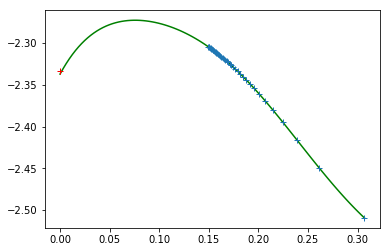} 
\includegraphics[scale=0.5]{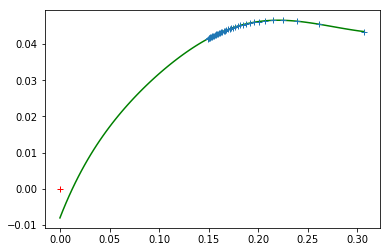} 
\includegraphics[scale=0.5]{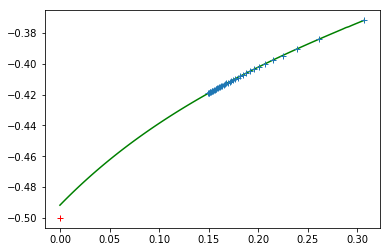} 
\includegraphics[scale=0.5]{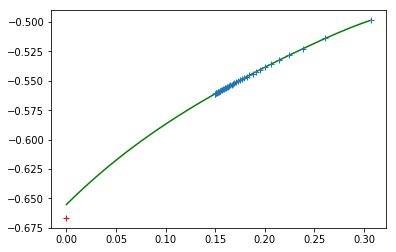} 
\end{center}
\caption{In reading direction: plots of $Z^{1/2,1/2}_L$, $Z^{1/2+1,1/2+1}_L$, $Z^{1/2+2,1/2+2}_L$, $Z^{1,1}_L$, $Z^{1/2+1,1/2}_L$, $Z^{1/2+2,1/2}_L$ as a function of $1/\log L$, together with their extrapolated curves $f_6$, $f_7$, $f_6$, $f_5$, $f_8$, $f_8$. The theoretical results are, respectively, $-1/3$, $-1$, $-7/3$, $0$, $-1/2$, $-2/3$.}
\label{fig:osp12}
\end{figure}

\subsection{Relation with $3$-point functions \label{sec:3ptf}}
We discuss in this section how the logarithmic corrections can be related to the $3$-point functions in the plane. Our calculation parallels  the work by Cardy for quasi-primary fields  \cite{cardy1986}, but applies here to the logarithmic case.

\subsubsection{Quasi-primary fields}
It is known that for quasi-primary fields the logarithmic corrections to the energy levels are linked to the structure constant between the fields of the level and the marginal operator that perturbs the Hamiltonian \cite{cardy1986}. We briefly remind here the reader of this relation. Assume that a Hamiltonian $H_0$ is perturbed by a potential $g V$: 
\begin{equation}
H=H_0-\kappa\frac{g}{2\pi} \int_0^{2\pi} \mathcal{V}(x,t)dx\,.
\end{equation}
Define $\delta \hat{e}$ by
\begin{equation}
\kappa^{-1}g\delta \hat{e}=\frac{L^2(e_L^{(g)}-e_L^{(0)})}{2\pi v_F}\,,
\end{equation}
where $e_L^{(g)}$ is the energy level of a given state with the perturbation $g$, and $e_L^{(0)}$ the energy level of the same state without the perturbation.
The corrections to the energy $\delta \hat{e}$  of a state $|\phi\rangle$ is
\begin{equation}
\delta \hat{e}=-\kappa^2\frac{1}{2\pi}\int_0^{2\pi} \langle\phi | \mathcal{V}(x,t) |\phi\rangle dx\,.
\end{equation}
And we have for a field $\phi$ of conformal weights $h=\bar{h}$
\begin{equation}
\langle\phi | \mathcal{V}(x,t) |\phi\rangle= \lim_{z,w\to 0} \lvac z^{-2h}\bar{z}^{-2h}\phi(1/z) \mathcal{V}(x,t) \phi(w)\rvac\,.
\end{equation}
If $\phi$ and $\mathcal{V}$ are quasi-primary then their three-point function is constrained to be exactly
\begin{equation}
\lvac \phi(z) \mathcal{V}(y) \phi(w)\rvac=\frac{C}{|z-y|^4 |y-w|^4 |z-w|^{4h-4}}\,,
\end{equation}
where $C$ is the structure constant. This gives
\begin{equation}
\delta \hat{e}=-\kappa^2 C\,.
\end{equation}

\subsubsection{Fields with logarithms}
The previous argument does not apply to the logarithmic cases discussed above. Indeed, $\eta^1$ and $\eta^2$ are not quasi-primary: their correlation involves $\log$ that is not a scale covariant function. It thus needs a more detailed study. The purpose of the subsequent sections is to study the link between the correction to the energy levels for the states $|\eta^1\partial\eta^1...\partial^m\eta^1\rangle$ and the three-point function between them and the perturbative potential.\\

Denote $\phi^1_{m}(z)$ the field $\eta^1(z)\partial\eta^1(z)...\partial^m\eta^1(z)$ and $\phi^2_{m}(z)$ the field $\partial^m\eta^2(z)...\partial\eta^2(z)\eta^2(z)$. They have scaling dimensions $m(m+1)/2$. The corresponding state $|\phi^1_m\rangle$ is given by the constant coefficient in $\phi^1_m(z)$, and is $\propto \eta^1_0\psi^1_{-1}...\psi^1_{-m}$. The state $\langle \phi^2_m|$ is defined as the 
conjugate of $|\phi^1_m\rangle$, thus $\propto \psi^2_{m}...\psi^2_{1}\pi^1_0$. It is given by the coefficient $\log |z|^2 z^{-m(m+1)}$ in $\phi^2_m(z)$, denoted $\text{coeff }_{\log |z|^2 z^{-m(m+1)}} (\lvac \phi^2_m(z))$ (one could give an integral formula for this, but it is unnecessary). Note that in absence of $\log$, this matches the usual definitions. One can express these as
\begin{equation}
|\phi^1_m\rangle=\phi^1_m(0,0)\rvac,\quad \quad \langle \phi^2_m|= \text{coeff }_{\log |z|^2 z^{-m(m+1)}}( \lvac \phi^2_m(z))\,.
\end{equation}
The Hamiltonian is perturbed by $-\kappa g/(2\pi)\int \mathcal{V}(x,t=0)dx$. Since the perturbation $\kappa^{-1}g\delta \hat{e}$ to the energy level of state $\phi_m$ is given by

\begin{equation}
\delta \hat{e}=-\frac{\kappa^2}{\langle \phi^2_m|\phi^1_m\rangle} \langle \phi^2_m|\frac{1}{2\pi} \int_0^{2\pi} \mathcal{V}(x,t=0)dx |\phi^1_m\rangle\,,
\end{equation}
one sees that it can be expressed in terms of the 3-point function $G_{\phi^1_m}(z_2,z,z_1)$ where
\begin{equation}
G_{X}(z_2,z,z_1)=\lvac X^\dagger(z_2) \mathcal{V}(z)X(z_1)\rvac\,,
\label{3ptfctGX}
\end{equation}
for a field $X(z)$. Precisely:
\begin{equation}
\label{delta}
\delta \hat{e}=-\frac{\kappa^2}{\langle \phi^2_m|\phi^1_m\rangle} \text{coeff }_{\log|z_2|^2 z_2^{-m(m+1)}\times z^0}(G(z_2,z,0))\,.
\end{equation}

\subsubsection{An explicit calculation of the 3-point function $\langle \eta^2\mathcal{V}\eta^1\rangle$}

The fields are assumed to be radially ordered $|z_1|<|z|<|z_2|$.

Let us first compute explicitly the 3-point function $G_{\eta}(z_2,z,z_1)$:
\begin{equation}
\label{correlator}
G_{\eta}(z_2,z,z_1)=\lvac \eta^2(z_2) \mathcal{V}(z) \eta^1(z_1)\rvac\,,
\end{equation}
at points $z=e^{-ix+\tau}$. The potential $\mathcal{V}(x,t)$ is
\begin{equation}
\mathcal{V}(x,t)=\normord{\mathcal{V}(x,t)}+i\kappa^{-2}\xi_0\sum_{p,k}(\eta^2_k(t)\pi^2_{k-p}(t)+\eta^1_k(t)\pi^1_{k-p}(t))e^{ipx}\,.
\end{equation}
Within the correlator \eqref{correlator} the part with the normally ordered potential $\normord{\mathcal{V}(x,t)}$ is zero since there are four fields to contract with only two fields at our disposal. Thus we have
\begin{equation}
G_{\eta}(z_2,z,z_1)=-\lvac \eta^2(z_2)\left( i\kappa^{-2}\xi_0\sum_{p,k}(\eta^2_k(t)\pi^2_{k-p}(t)+\eta^1_k(t)\pi^1_{k-p}(t))e^{ipx} \right) \eta^1(z_1)\rvac\,.
\end{equation}
We can now use \eqref{derivatives} to express each of the $\eta^{1,2}_k(t)$ and $\pi^{1,2}_k(t)$ in terms of $z=e^{-ix+\tau}=e^{-ix+it}$. We find
\begin{equation}
\begin{aligned}
G_{\eta}(z_2,z,z_1)=&-\kappa^{-2}i\lvac \left(\eta^2_0 -i\frac{\pi^1_0}{2\kappa} \log |z_2|^2 +i\sum_{n\neq 0}\frac{\psi^2_nz_2^{-n}-\bar{\psi}^2_{-n}\bar{z}_2^{n}}{n\sqrt{2\kappa}} \right)\\
&\sum_{k,p}\frac{i}{2k}(\psi^1_k \psi^2_{-k+p}z^{-p}+\psi^1_k\bar{\psi}^2_{k-p}z^{-k}\bar{z}^{-k+p}-\bar{\psi}^1_{-k}\psi^2_{-k+p}\bar{z}^kz^{k-p}-\bar{\psi}^1_{-k}\bar{\psi}^2_{k-p}\bar{z}^{p}\\
&-\psi^2_k\psi^1_{-k+p}z^{-p}-\psi^2_k\bar{\psi}^1_{k-p}z^{-k}\bar{z}^{-k+p}+\bar{\psi}^2_{-k}\psi^1_{-k+p}z^{k-p}\bar{z}^{k}+\bar{\psi}^2_{-k}\bar{\psi}^1_{k-p}\bar{z}^{p})\\
&\left(\eta^1_0 +i\frac{\pi^2_0}{2\kappa} \log |z_1|^2 +i\sum_{m\neq 0}\frac{\psi^1_mz_1^{-m}-\bar{\psi}^1_{-m}\bar{z_1}^{m}}{m\sqrt{2\kappa}} \right)\rvac\,,
\end{aligned}
\end{equation}
where we use the shortcuts $i\psi^{1,2}_0/(0\sqrt{2\kappa})=\eta^{1,2}_0\pm i (2\kappa)^{-1}\pi^{2,1}_0\log |z|^2$ and $i\bar{\psi}^{1,2}_0/0=0$ to simplify the notations. One has
\begin{equation}
\begin{aligned}
&G_{\eta}(z_2,z,z_1)=\\
&-\tfrac{\kappa^{-3}}{2}\log |z_2|^2\lvac \pi^1_0 \eta^1_0 \psi^2_0\eta^1_0\rvac-\tfrac{\kappa^{-3}}{4}i\log |z_2|^2 \sum_{p>0}\lvac \pi^1_0\eta^1_0\psi^2_p\tfrac{\psi^1_{-p}}{-p}\rvac  z^{-p}z_1^p-\tfrac{\kappa^{-3}}{4}i\log |z_2|^2\sum_{p<0}\lvac \pi^1_0\eta^1_0\bar{\psi}^2_{-p}\tfrac{\bar{\psi}^1_p}{p}\rvac \bar{z}^p \bar{z_1}^{-p}\\
&-\tfrac{\kappa^{-3}}{4}\sum_{n\neq 0,m}\lvac \tfrac{\psi^2_n}{n}\tfrac{\psi^1_{-n}}{-n}\psi^2_{-m}\tfrac{\psi^1_m}{m}\rvac z_2^{-n}z^{n+m}z_1^{-m}-\tfrac{\kappa^{-3}}{4}\sum_{n\neq 0,m}\lvac \tfrac{\psi^2_n}{n}\psi^1_{-n}\frac{\psi^2_{-m}}{-m}\tfrac{\psi^1_m}{m}\rvac z_2^{-n}z^{n+m}z_1^{-m}\\
&-\tfrac{\kappa^{-3}}{4}\sum_{n\neq 0,m}\lvac \tfrac{\psi^2_n}{n}\tfrac{\psi^1_{-n}}{-n}\bar{\psi}^2_{m}\tfrac{\bar{\psi}^1_{-m}}{-m}\rvac z_2^{-n}z^{n}\bar{z}^{-m}\bar{z}_1^{m}+\tfrac{\kappa^{-3}}{4}\sum_{n\neq 0,m}\lvac \tfrac{\psi^2_n}{n}\psi^1_{-n}\tfrac{\bar{\psi}^2_{m}}{-m}\tfrac{\bar{\psi}^1_{-m}}{-m}\rvac z_2^{-n}z^{n}\bar{z}^{-m}\bar{z}_1^{m}\\
&+\tfrac{\kappa^{-3}}{4}\sum_{n\neq 0,m}\lvac \tfrac{\bar{\psi}^2_{-n}}{-n}\tfrac{\bar{\psi}^1_{n}}{-n}\psi^2_{-m}\tfrac{\psi^1_{m}}{m}\rvac \bar{z_2}^{n}z^{m}\bar{z}^{-n}z_1^{-m}-\tfrac{\kappa^{-3}}{4}\sum_{n\neq 0,m}\lvac \tfrac{\bar{\psi}^2_{-n}}{-n}\bar{\psi}^1_{n}\tfrac{\psi^2_{-m}}{-m}\tfrac{\psi^1_{m}}{m}\rvac \bar{z_2}^{n}z^{m}\bar{z}^{-n}z_1^{-m}\\
&+\tfrac{\kappa^{-3}}{4}\sum_{n\neq 0,m}\lvac \tfrac{\bar{\psi}^2_{-n}}{-n}\tfrac{\bar{\psi}^1_{n}}{-n}\bar{\psi}^2_{m}\tfrac{\bar{\psi}^1_{-m}}{-m}\rvac \bar{z_2}^{n}\bar{z}^{-n-m}\bar{z}_1^{m}+\tfrac{\kappa^{-3}}{4}\sum_{n\neq 0,m}\lvac \tfrac{\bar{\psi}^2_{-n}}{-n}\bar{\psi}^1_{n}\tfrac{\bar{\psi}^2_{m}}{-m}\tfrac{\bar{\psi}^1_{-m}}{-m}\rvac \bar{z_2}^{n}\bar{z}^{-n-m}\bar{z}_1^{m}\,.
\end{aligned}
\end{equation}
Evaluating each scalar product gives
\begin{equation}
\begin{aligned}
&G_{\eta}(z_2,z,z_1)=\\
&\tfrac{\kappa^{-3}}{2}\log |z_2|^2+\tfrac{\kappa^{-3}}{4}\log |z_2|^2 \sum_{p>0}z^{-p}z_1^p+\tfrac{\kappa^{-3}}{4}\log |z_2|^2 \sum_{p<0}\bar{z}^p\bar{z_1}^{-p}\\
&-\tfrac{\kappa^{-3}}{4}\log |z|^2 \sum_{p>0}z_2^{-p}z^p-\tfrac{\kappa^{-3}}{4}\log |z|^2 \sum_{p<0}\bar{z_2}^p\bar{z}^{-p}\\
&-\tfrac{\kappa^{-3}}{4}\sum_{n>0,m<0}\tfrac{z_2^{-n}z^{n+m}z_1^{-m}}{n}+\tfrac{z_2^{-n}z^{n+m}z_1^{-m}}{m}-\tfrac{\kappa^{-3}}{4}\sum_{n>0,m>0}\tfrac{z_2^{-n}z^n\bar{z}^{-m}\bar{z_1}^m}{n}-\tfrac{z_2^{-n}z^n\bar{z}^{-m}\bar{z_1}^m}{m}\\
&+\tfrac{\kappa^{-3}}{4}\sum_{n<0,m<0}\tfrac{\bar{z}_2^nz^{m}\bar{z}^{-n}z_1^{-m}}{n}-\tfrac{\bar{z}_2^nz^m\bar{z}^{-n}z_1^{-m}}{m}+\tfrac{\kappa^{-3}}{4}\sum_{n<0,m>0}\tfrac{\bar{z}_2^n\bar{z}^{-n-m}\bar{z}_1^m}{n}+\tfrac{\bar{z}_2^n\bar{z}^{-n-m}\bar{z}_1^m}{m}\\
&-\tfrac{\kappa^{-3}}{2}\sum_{n>0}\tfrac{z_2^{-n}z^n}{n}+\tfrac{\kappa^{-3}}{2}\sum_{n<0}\tfrac{\bar{z}_2^n\bar{z}^{-n}}{n}\,,
\end{aligned}
\end{equation}
or in a simpler form
\begin{equation}
G_{\eta}(z_2,z,z_1)=\frac{\kappa^{-3}}{4}\left(\frac{z}{z-z_2}+\frac{\bar{z}}{\bar{z}-\bar{z}_2} \right)\log |z-z_1|^2+\frac{\kappa^{-3}}{4}\left(\frac{z}{z-z_1}+\frac{\bar{z}}{\bar{z}-\bar{z}_1} \right)\log |z-z_2|^2\,.
\end{equation}
Formula \eqref{delta} gives here, with $\langle \phi^1_0|\phi^2_0\rangle=(2\kappa)^{-1}$
\begin{equation}
\delta \hat{e}=-\kappa^2 \times 2\kappa \times \tfrac{\kappa^{-3}}{4} \times 2=-1\,.
\end{equation}
which is indeed the correction computed in \eqref{eq:correction2m}.

\subsubsection{2-point functions}
If we were to compute the 3-point function $\lvac \partial\eta^2\eta^2\mathcal{V}\eta^1\partial\eta^1\rvac$ with the same method as in the previous example, one would have to take into account the term $\normord{\mathcal{V}}$ and the computations would become quite cumbersome. Actually, such a computation can always be recast into a product of 2-point functions, like a Wick's theorem. Indeed, since the anticommutator of two modes is a complex number, to evaluate the 3-point function \eqref{3ptfctGX} one has to contract every mode of the middle field $\mathcal{V}$ with modes of the right and left fields, and then contract the remaining modes between them. The 2-point functions that appear in the result involve the following fields and their derivatives:
\begin{equation}
\begin{aligned}
&\eta^1(z)=\eta^1_0 +i\frac{\pi^2_0}{2\kappa} \log |z|^2 +i\sum_{n\neq 0}\frac{\psi^1_nz^{-n}+\bar{\psi}^1_{n}\bar{z}^{-n}}{n\sqrt{2\kappa}}\,,\qquad \eta^2(z)=\eta^2_0 -i\frac{\pi^1_0}{2\kappa} \log |z|^2 +i\sum_{n\neq 0}\frac{\psi^2_nz^{-n}+\bar{\psi}^2_{n}\bar{z}^{-n}}{n\sqrt{2\kappa}}\\
&\partial_x\eta^1(z)=-\frac{1}{\sqrt{2\kappa}}\sum_{n\neq 0}\psi^1_nz^{-n}-\bar{\psi}^1_{n}\bar{z}^{-n}\,,\qquad \partial_x\eta^2(z)=-\frac{1}{\sqrt{2\kappa}}\sum_{n\neq 0}\psi^2_nz^{-n}-\bar{\psi}^2_{n}\bar{z}^{-n}\\
&\pi^1(z)=\pi^1_0 +\sqrt{\frac{\kappa}{2}}\sum_{n\neq 0}\psi^2_nz^{-n}+\bar{\psi}^2_n\bar{z}^{-n}\,,\qquad \pi^2(z)=\pi^2_0 -\sqrt{\frac{\kappa}{2}}\sum_{n\neq 0}\psi^1_nz^{-n}+\bar{\psi}^1_n\bar{z}^{-n}\,.
\end{aligned}
\end{equation}
The 2-point function between these fields are known or computed without problems \cite{kausch}. For example
\begin{equation}
\begin{aligned}
2\kappa\lvac \eta^2(z)\eta^1(w)\rvac&=-i\log |z|^2 \lvac \pi^1_0\eta^1_0\rvac -\sum_{n>0,m<0}\frac{z^{-n}w^{-m}}{nm}\lvac \psi^2_n\psi^1_m\rvac -\sum_{n>0,m<0}\frac{\bar{z}^{-n}\bar{w}^{-m}}{nm}\lvac \bar{\psi}^2_n\bar{\psi}^1_m\rvac\\
&=\log |z|^2-\sum_{n>0}\frac{z^{-n}w^n}{n}-\sum_{n>0}\frac{\bar{z}^{-n}\bar{w}^n}{n}\\
&=\log |z|^2+\log(1-w/z)+\log (1-\bar{w}/\bar{z})
\end{aligned}
\end{equation}
hence
\begin{equation}
\lvac \eta^2(z)\eta^1(w)\rvac=\frac{1}{2\kappa}\log|z-w|^2\,.
\end{equation}

Similarly
\begin{equation}
\begin{aligned}
\lvac \eta^2(z)\partial_x\eta^1(w)\rvac&=-\lvac \eta^1(z) \partial_x\eta^2(w)\rvac=\frac{1}{2\kappa}\left(\frac{iw}{z-w}+\frac{-i\bar{w}}{\bar{z}-\bar{w}}\right)\\
\lvac \eta^1(z)\pi^1(w)\rvac&=\lvac \eta^2(z)\pi^2(w)\rvac=\frac{i}{2}\left(\frac{w}{z-w}+\frac{\bar{w}}{\bar{z}-\bar{w}} \right)\,.
\end{aligned}
\end{equation}
For instance, these formulas enable us to reexpress the previous 3-point function as
\begin{equation}
G_\eta(z_2,z,z_1)=-\kappa^{-2}i(-\langle \eta^2(z_2) \pi^2(z)\rangle\langle \eta^2(z)\eta^1(z_1)\rangle+\langle \eta^2(z_2)\eta^1(z)\rangle\langle\pi^1(z)\eta^1(z_1)\rangle)\,,
\end{equation}
where we use the simplified notation $\langle X\rangle$ for $\lvac X\rvac$.

Because of the fields $\eta$ that involve $\log$ there is no scale invariance and the 2-point function of the fields $\phi^1_m$ is not as simply constrained as usual. In particular there are sub-leading corrections to the dominant terms. In the following we will denote by $\sim$ an equality up to sub-leading terms. The computation of the dominant behaviour of the 2-point functions of the fields $\phi^1_m$ is classical. We have
\begin{equation}
\begin{aligned}
\lvac \phi^2_m(z_2)\phi^1_m(z_1)\rvac&=\lvac \partial^m\eta^2(z_2)...\partial\eta^2(z_2)\eta^2(z_2)\eta^1(z_1)\partial\eta^1(z_1)...\partial^m\eta^1(z_1)\rvac\\
& \sim \frac{1}{2\kappa}\log|z_2-z_1|^2 \lvac \partial^m\eta^2(z_2)...\partial\eta^2(z_2)\partial\eta^1(z_1)...\partial^m\eta^1(z_1)\rvac\\
&\sim\frac{1}{2\kappa} \log|z_2-z_1|^2 \sum_{\sigma\in \mathfrak{S}_m} (-1)^\sigma \prod_{k=1}^m \lvac \partial^k\eta^2(z_2)\partial^{\sigma(k)}\eta^1(z_1)\rvac\\
&\sim \frac{1}{2\kappa}\log|z_2-z_1|^2 \sum_{\sigma\in \mathfrak{S}_m} (-1)^\sigma \prod_{k=1}^m \frac{(-1)^{k-1}(k+\sigma(k)-1)!}{(z_2-z_1)^{k+\sigma(k)}2\kappa}\\
&\sim\frac{ \log|z_2-z_1|^2}{(z_2-z_1)^{m(m+1)}(2\kappa)^{m+1}} \det ( (-1)^{k-1}(k+p-1)!)_{k,p}\,,
\end{aligned}
\end{equation}
where in the second line the dominant term is given by contracting $\eta^1$ with $\eta^2$ (otherwise the power-law is the same but without $\log$, thus sub-dominant). This gives the norm 
\begin{equation}
\langle \phi^2_m|\phi^1_m\rangle =(2\kappa)^{-m-1} \det \left\{ (-1)^{k-1}(k+p-1)! \right\}_{k,p=1}^m =(2\kappa)^{-m-1}(-1)^{\lfloor m/2 \rfloor}m! \prod_{k=1}^{m-1}(k!)^2 \,.
\end{equation}

\subsubsection{Dominant behaviour of the 3-point functions $\langle \phi^2_m\mathcal{V}\phi^1_m\rangle$}
Using the 2-point functions one can compute all the $\langle \phi^2_m\mathcal{V}\phi^1_m\rangle$. For example one has
\begin{equation}
\label{corr1}
\begin{aligned}
&\langle \partial\eta^2(z_2)\eta^2(z_2)\mathcal{V}(z)\eta^1(z_1)\partial\eta^1(z_1)\rangle=\\
&\tfrac{\kappa^{-4}}{8}\log|z_2-z|^2 \frac{|z|^2}{(z_2-z)^2|z_1-z|^2}+\tfrac{\kappa^{-4}}{8}\log|z_2-z_1|^2\frac{z}{(z_2-z)^2(z_1-z)} \\
&-\tfrac{\kappa^{-4}}{8}\log|z_2-z|^2\left(\frac{z}{(z_1-z)(z_2-z_1)^2}+\frac{\bar{z}}{(\bar{z}_1-\bar{z})(z_2-z_1)^2}+\frac{z}{(z_2-z_1)(z_1-z)^2} \right)\\
&+\tfrac{\kappa^{-4}}{4}\frac{|z|^2}{(z_2-z)^2|z_1-z|^2}+\tfrac{\kappa^{-4}}{8}\frac{\bar{z}}{(\bar{z}_2-\bar{z})(z_1-z)(z_2-z_1)}+\tfrac{\kappa^{-4}}{8}\frac{z}{(z_2-z)(z_1-z)(z_2-z_1)}\\
&+(z_1\to z_2,\quad z_2\to z_1)\,.
\end{aligned}
\end{equation}
The dominant behaviour is given by
\begin{equation}
G_{\eta\partial\eta}(z_2,z,z_1)\sim \tfrac{\kappa^{-4}}{8}\log|z_2-z|^2\frac{|z|^2}{(z_2-z)^2|z_1-z|^2}+\tfrac{\kappa^{-4}}{8}\log|z_1-z|^2\frac{|z|^2}{(z_1-z)^2|z_2-z|^2}\,.
\end{equation}
Formula \eqref{delta} gives for the full correlation function
\begin{equation}
\delta \hat{e}=-\kappa^2\times (2\kappa)^2\times \frac{\kappa^{-4}}{8}\times 3=-3/2\,.
\end{equation}
However formula \eqref{delta} does not capture only the dominant term in \eqref{corr1}, but also the sub-dominant term $-\log|z_2-z|^2\left(\frac{z}{(z_1-z)(z_2-z_1)^2}+\frac{\bar{z}}{(\bar{z}_1-\bar{z})(z_2-z_1)^2}\right)$. The dominant term comes from the normally ordered part of the potential $\normord{\mathcal{V}}$ whereas the second term comes from the regularized part $\xi_0(\eta^2\pi^2+\eta^1\pi^1)$. Both contribute to the displacement of energies, but only the first one is visible at leading order in the 3-point function. Note that this sub-dominant term is  not even the next-to-leading order term.\\

Let us evaluate the dominant term in the 3-point function $G_{\phi_m}(z_2,z,z_1)$. Let us first remark that in $\mathcal{V}$ the regularized term will always contribute one power less than the normally ordered term, so that the dominant term is given by $\normord{\mathcal{V}}$. We have thus
\begin{equation}
G_{\phi_m}(z_2,z,z_1)\sim \lvac \partial^m\eta^2(z_2)...\eta^2(z_2) \normord{(\eta^2\eta^1\partial_x\eta^2\partial_x\eta^1+\kappa^{-2}\eta^2\eta^1\pi^1\pi^2)(z)}\eta^1(z_1)...\partial^m\eta^1(z_1)\rvac\,.
\end{equation}
A priori, the dominant term will be given by contracting the four $\eta$ together, yielding a $\log |z_1-z|^2 \log |z_2-z|^2$. However we have the relations, using the abbreviated notations $\partial \eta$ for $\partial_z\eta$:
\begin{equation}
\label{relations}
\begin{aligned}
\lvac \pi^1(z)\partial^k\eta^1(w)\rvac&=-\kappa\lvac \partial_x\eta^2(z)\partial^k\eta^1(w)\rvac\\
\lvac \pi^2(z)\partial^k\eta^2(w)\rvac&=\kappa\lvac \partial_x\eta^1(z)\partial^k\eta^2(w)\rvac\,,
\end{aligned}
\end{equation}
valid for all $k\geq 1$. Thus
\begin{equation}
\lvac \partial^m\eta^2(z_2)...\partial\eta^2(z_2) \normord{(\partial_x\eta^2\partial_x\eta^1+\kappa^{-2}\pi^1\pi^2)(z)}\partial\eta^1(z_1)...\partial^m\eta^1(z_1)\rvac=0\,,
\end{equation}
and the $\log^2$ term vanishes. Similarly, if one contracts only $\eta^2(z_2)$ with $\eta^1(z)$, then one has to contract $\partial_x\eta^2(z)$ with $\eta^1(z_1)$ and $\pi^1(z)$ with $\eta^1(z_1)$ since the relations \eqref{relations} are verified for all $k$ but $k=0$ (otherwise the terms in $\partial_x\eta^2$ and $\pi^1$ will cancel out). Then:
\begin{equation}
\begin{aligned}
G_{\phi_m}(z_2,z,z_1)\sim & (2\kappa)^{-3}\log|z_2-z|^2\frac{4i\bar{z}}{\bar{z}-\bar{z}_1}\lvac \partial^m\eta^2(z_2)...\partial\eta^2(z_2)\normord{\eta^2\pi^2(z)}\partial\eta^1(z_1)...\partial^m\eta^1(z_1)\rvac\\
&+(2\kappa)^{-3}\log|z_1-z|^2\frac{4i\bar{z}}{\bar{z}-\bar{z}_2}\lvac \partial^m\eta^2(z_2)...\partial\eta^2(z_2)\normord{\eta^1\pi^1(z)}\partial\eta^1(z_1)...\partial^m\eta^1(z_1)\rvac\,.
\end{aligned}
\end{equation}
Contracting the remaining fields in the normal order, one gets
\begin{equation}
\begin{aligned}
&\lvac \partial^m\eta^2(z_2)...\partial\eta^2(z_2)\normord{\eta^2\pi^2(z)}\partial\eta^1(z_1)...\partial^m\eta^1(z_1)\rvac\\
&=\sum_{k,p=1}^m(-1)^{k+p+1}\lvac \partial^k\eta^2(z_2)\pi^2(z)\rvac \lvac \eta^2(z)\partial^p\eta^1(z_1)\rvac  \lvac \prod_{a=1,\neq k}^m \partial^a\eta^2(z_2)\prod_{b=1,\neq q}^m \partial^b\eta^1(z_1)\rvac\\
&=-\sum_{k,p=1}^m \frac{iz}{2}\frac{k!(p-1)! (2\kappa)^{-m}}{(z_2-z)^{k+1}(z_1-z)^p(z_2-z_1)^{m(m+1)-k-p}} \det ((-1)^{a-1} (a+b-1)!)_{a\neq k, b\neq q}\,.
\end{aligned}
\end{equation}
Denote $H_m$ the $m\times m$ matrix whose $(a,b)$ coefficient is $(-1)^{a-1}(a+b-1)!$. Using the relation between the adjugate matrix $\text{adj} H_m$ and its inverse, $\text{adj} H_m=(H_m^{-1})^t\det H_m$, we have
\begin{equation}
\label{corr3}
\begin{aligned}
G_{\phi_m}(z_2,z,z_1)&\sim (2\kappa)^{-m-3} \frac{ |z|^2 \log|z_2-z|^2}{|z-z_1|^2} 2\det H_m\sum_{k,p=1}^m \frac{(-1)^{k+p}k! (p-1)!(H_m^{-1})_{p,k}}{(z_2-z)^{k+1}(z_1-z)^{p-1}(z_2-z_1)^{m(m+1)-k-p}}\\
&+(z_1\to z_2,\quad z_2\to z_1)\,.
\end{aligned}
\end{equation}
This is the full dominant terms in the 3-point function. As already said, formula \eqref{delta} also counts a sub-dominant term in the 3-point function that is obtained by taking the regularized part of the potential. This term is
\begin{equation}
\begin{aligned}
&\lvac \partial^m\eta^2(z_1)...\eta^2(z_2)\normord{\eta^1\pi^1}(z)\eta^1(z_1)...\partial^m\eta^1(z_1)\rvac\\
&\sim \log |z_2-z|^2  \lvac \partial^m\eta^2(z_1)...\eta^2(z_2)\pi^1(z)\eta^1(z_1)...\partial^m\eta^1(z_1)\rvac\,.
\end{aligned}
\end{equation}
If one contracts $\pi^1(z)$ with a $\partial^k\eta^1(z_1)$, the resulting power of $z_2$ will be $-m(m+1)+k$ when the power of $z$ is zero and $z_1=0$, and will contribute to formula \eqref{delta} only if $k=0$. Thus the only term that counts is $(2\kappa)^{-m-3}\log|z-z_2|^2 \langle \pi^1(z)\eta^1(z_1) \rangle \det H_m (z_2-z_1)^{-m(m+1)}$. It contributes to $-1$ to $\delta \hat{e}$.

To apply now formula \eqref{delta} to the 3-point function \eqref{corr3}, let us expand
\begin{equation}
\begin{aligned}
&\frac{(-1)^{k+p}k! (p-1)!(H_m^{-1})_{p,k}}{(z_2-z)^{k+1}(0-z)^{p-1}(z_2-0)^{m(m+1)-k-p}}\\
&=(-1)^{k-1}k!(p-1)! (H_m^{-1})_{p,k} z^{1-p}z_2^{-m(m+1)+p-1}\sum_{q\geq 0}\frac{(k+q)!}{k!q!}(z/z_2)^q\,.
\end{aligned}
\end{equation}
Only $q=p-1$ contributes to $\delta \hat{e}$. This gives
\begin{equation}
\begin{aligned}
\delta \hat{e}&=-1-\frac{1}{2}\sum_{k,p=1}^m(-1)^{k-1}(k+p-1)!(H_m^{-1})_{p,k}\\
&=-1-\frac{1}{2}\sum_{k,p=1}^m(H_m)_{k,p}(H_m^{-1})_{p,k}\\
&=-1-\frac{1}{2}\tr H_m H_m^{-1}\,,
\end{aligned}
\end{equation}
hence
\begin{equation}
\delta \hat{e}=-1-\frac{m}{2}\,,
\end{equation}
recovering the previously derived correction \eqref{eq:correction2m}.

\subsubsection{Conclusion}
We conclude that in case of non-quasi-primary fields the relation between the logarithmic corrections to the energy levels and the three-point function is more involved than in \cite{cardy1986}. In particular the three-point function exhibits $m^2$ many equally dominant terms (i.e. with the same total divergence power, see \eqref{corr3}), that all contribute to the scaled gap. It moreover involves sub-dominant terms that contribute to the energy (although in a way independent from the magnetization).

\subsection{Symmetries}

The action of the $OSp(1|2)$ symmetry in the supersphere sigma model was already discussed in \cite{birgit2}. Interestingly, while the symmetry is spontaneously broken right at the conformal fixed point (here the sympletic fermion theory), the fact that this symmetry is present for all finite values of the system size (and thus finite values of the coupling constant $g$) gives rise to an enhancement of degeneracies at the fixed point. A very simple example of this is the ground state with $h=\bar{h}=0$. This state is degenerate four times at the fixed point, as the result of a degeneracy between the ground state (corresponding to an $OSp(1|2)$ singlet) and the order parameter multiplet (the $OSp(1|2)$ vector). Accordingly, the leading term in (\ref{osp12leadingcorr}) vanishes when $j={1\over 2}$. The correction term does break the degeneracy, in agreement with the fact that the only remaining symmetry at finite coupling is $OSp(1|2)$.

Similarly, we get eight fields with $(h,\bar{h})=(1,0)$: five come from the $OSp(1|2)$ currents (in the five dimensional adjoint), and three come from derivatives of the order parameter fields with vanishing conformal weight.

\section{$OSp(2|2)$}
\label{sec:osp22}

\subsection{The spectrum from field theory}
\subsubsection{The action}
For the $osp(2|2)$ case the constraint \eqref{supersphere} can be satisfied by setting
\begin{equation}
\phi_1=(1-\eta^2\eta^1)\cos \phi,\quad \phi_2=(1-\eta^2\eta^1)\sin \phi\,,
\end{equation}
so that the action reads
\begin{equation}
S=\frac{\kappa}{2\pi g}\int dxdt \left( \frac{1}{2}\partial_\mu\phi\partial_\mu\phi+\partial_\mu\eta^2\partial_\mu\eta^1-\eta^1\eta^2\partial_\mu\eta^1\partial_\mu\eta^2-\partial_\mu\phi\partial_\mu\phi\eta^2\eta^1\right)\,.
\end{equation}
Rescaling all the fields $\phi\to \sqrt{g}\phi$, $\eta^{1,2}\to \sqrt{g}\eta^{1,2}$ yields
\begin{equation}
\label{osp22action}
S=\frac{\kappa}{2\pi}\int dxdt \left( \frac{1}{2}\partial_\mu\phi\partial_\mu\phi+\partial_\mu\eta^2\partial_\mu\eta^1-g\eta^1\eta^2\partial_\mu\eta^1\partial_\mu\eta^2-g\partial_\mu\phi\partial_\mu\phi\eta^2\eta^1\right)\,,
\end{equation}
with here
\begin{equation}
g=\frac{\kappa}{2\log L}\,. \label{gkappa2logL}
\end{equation}

Note that the boson $\phi$ with original radius $2\pi$ becomes a boson with radius $2\pi/\sqrt{g}$.

\subsubsection{The normally ordered Hamiltonian}
To find the Hamiltonian, we write the action as
\begin{equation}
S=\frac{\kappa}{2\pi}\int dxdt\left( -\partial_x\eta^2\partial_x\eta^1+\dot{\eta}^2\dot{\eta}^1-\frac{1}{2}(\partial_x\phi)^2+\frac{1}{2}\dot{\phi}^2+g\mathcal{V}(\eta^1,\eta^2,\phi)\right)\,.
\end{equation}
With the modes
\begin{equation}
\phi(x,t)=\sum_k\phi_k(t)e^{ikx}\,,
\end{equation}
it reads
\begin{equation}
S=\kappa\int dt\left(-k^2\eta^2_k\eta^1_{-k}+\dot{\eta}^2_k\dot{\eta}^1_{-k}-\frac{1}{2}k^2\phi_k\phi_{-k}+\frac{1}{2}\dot{\phi}_k\dot{\phi}_{-k}+gV\right)\,.
\end{equation}
The conjugate momentum to $\phi_k$ is
\begin{equation}
\pi^\phi_k=\kappa\dot{\phi}_{-k}+\kappa g\frac{dV}{d\dot{\phi}_k}\,.
\end{equation}
The quantization procedure imposes at equal times
\begin{equation}
[\phi_k,\pi^\phi_p]=i\delta_{k,p}\,.
\end{equation}

The Hamiltonian is then
\begin{equation}
H=\kappa k^2\eta^2_k\eta^1_{-k}+\kappa^{-1}\pi^2_k\pi^1_{-k}+\frac{\kappa}{2}k^2\phi_k\phi_{-k}+\frac{\kappa^{-1}}{2}\pi^\phi_k\pi^\phi_{-k}-\kappa g  V\,.
\end{equation}

The $osp(2|2)$ charges are
\begin{equation}
\begin{aligned}
J_z&=\frac{\eta^1_k\pi^1_k-\eta^2_k\pi^2_k}{2i}\,,\qquad J_+=-i\eta^1_k\pi^2_k\,,\qquad J_-=-i\eta^2_k\pi^1_k\\
F_1&=\cos(\sqrt{g}\phi)_k (1-g\eta^2_l\eta^1_m)\pi^2_{-n}-\sqrt{g}\eta^1_k\sin(\sqrt{g}\phi)_l\pi^\phi_{-m}\\
F_2&=\sin(\sqrt{g}\phi)_k (1-g\eta^2_l\eta^1_m)\pi^2_{-n}+\sqrt{g}\eta^1_k\cos(\sqrt{g}\phi)_l\pi^\phi_{-m}\\
G_1&=\cos(\sqrt{g}\phi)_k (1-g\eta^2_l\eta^1_m)\pi^1_{-n}+\sqrt{g}\eta^2_k\sin(\sqrt{g}\phi)_l\pi^\phi_{-m}\\
G_2&=\sin(\sqrt{g}\phi)_k (1-g\eta^2_l\eta^1_m)\pi^1_{-n}-\sqrt{g}\eta^2_k\cos(\sqrt{g}\phi)_l\pi^\phi_{-m}\\
Q&=\frac{1}{2}\pi^\phi_0\,.
\end{aligned}
\end{equation}
The temporal derivatives can be computed as in the $osp(1|2)$ case. For example at order $g$ one has
\begin{equation}
\partial_t F_1=\partial_t F_1^{osp(1|2)}+g\left(-\frac{\kappa}{2}\phi_k\phi_l\eta^1_m m^2-\kappa^{-1}\eta^1_k\pi^\phi_{-l}\pi^\phi_{-m}+\kappa\eta^1_k\phi_l\phi_m m^2+\kappa \frac{dV_\phi}{d\eta^2_0}\right)\,,
\end{equation}
if one decomposes the potential $V$ as $V=V^{osp(1|2)}+V_\phi$.

We now impose the following perturbation
\begin{equation}
\mathcal{V}(\eta^1,\eta^2,\phi)=(\partial_x\phi)^2\eta^2\eta^1-(\partial_t\phi)^2\eta^2\eta^1+\eta^2\eta^1\partial_x\eta^2\partial_x\eta^1-\eta^2\eta^1\partial_t\eta^2\partial_t\eta^1\,,
\end{equation}
that gives
\begin{equation}
V=-kl\phi_k\phi_l\eta^2_m\eta^1_n-\kappa^{-2}\pi^\phi_{-k}\pi^\phi_{-l}\eta^2_m\eta^1_n-mn\eta^2_k\eta^1_l\eta^2_m\eta^1_n+\kappa^{-2}\eta^2_k\eta^1_l\pi^1_{-m}\pi^2_{-n}\,.
\end{equation}
This ensures the conservation of the $osp(2|2)$ charges.

Define now the modes
\begin{equation}
a_k=\frac{-ik\phi_k\kappa^{1/2}+\pi^\phi_{-k}\kappa^{-1/2}}{\sqrt{2}},\quad \bar{a}_k=\frac{-ik\phi_{-k}\kappa^{1/2}+\pi^\phi_{k}\kappa^{-1/2}}{\sqrt{2}}\,,
\end{equation}
that satisfy
\begin{equation}
[a_k,a_{-k}]=k,\quad [\bar{a}_k,\bar{a}_{-k}]=k\,.
\end{equation}
The former modes read
\begin{equation}
\phi_k=\frac{i}{k\sqrt{2\kappa}}(a_k-\bar{a}_{-k}),\quad \pi^\phi_k=\sqrt{\kappa/2}(a_k+\bar{a}_{-k})\,,
\end{equation}
and the potential can be rewritten
\begin{equation}
\label{potentialosp22}
V=\frac{1}{2\kappa^2}\left(-i\sqrt{2\kappa}\eta^2_0+\frac{\psi^2_k-\bar{\psi}^2_{-k}}{k}\right)\left(-i\sqrt{2\kappa}\eta^1_0+\frac{\psi^1_l-\bar{\psi}^1_{-l}}{l}\right)(\psi^2_m\bar{\psi}^1_{-n}+\bar{\psi}^2_{-m}\psi^1_n+a_m\bar{a}_{-n}+\bar{a}_{-m}a_n)\,.
\end{equation}
The bosonic part is already normally ordered, and the fermionic part is the same as in the $osp(1|2)$ case. Hence
\begin{equation}
V=\normord{V}+i\kappa^{-2}(\eta^2_k\pi^2_k+\eta^1_k\pi^1_k)\xi_0\,.
\end{equation}
The total Hamiltonian reads then
\begin{equation}
\label{Hosp22}
\begin{aligned}
H&=\frac{1}{2}\sum_{k<0} (a_k a_{-k}+ \bar{a}_k\bar{a}_{-k})+a_0^2+\sum_{k<0}(\psi^2_k\psi^1_{-k}-\psi^1_{-k}\psi^2_k+\bar{\psi}^2_k\bar{\psi}^1_{-k}-\bar{\psi}^1_{-k}\bar{\psi}^2_k)+2\psi^2_0\psi^1_0-\frac{c}{12}-\kappa g\normord{V}\\
&-ig\kappa^{-1}(\eta^2_k\pi^2_k+\eta^1_k\pi^1_k)\xi_0\,.
\end{aligned}
\end{equation}

\subsubsection{Construction of the states}
Once again derivatives $\partial_z\eta^1$ involve terms $(1+g/2\eta^2_0\eta^1_0)$, that vanish when a $\eta^1$ is already present in a state. The highest-weight state of $J_z$-charge $m+1/2$ and $Q_z$-charge $n$ is
\begin{equation}
|\cos(2n\sqrt{g}\phi)\eta^1\partial\eta^1...\partial^m\eta^1\bar{\partial}\eta^1...\bar{\partial}^m\eta^1\rangle=\cos(2n\sqrt{g}\phi_0)\eta^1_0\psi^1_{-1}...\psi^1_{-m}\bar{\psi}^1_{-1}...\bar{\psi}^1_{-m}\rvac\,.
\end{equation}

\subsubsection{Regularization}
As in the $osp(1|2)$ case,  we need a regularization, i.e., fixing the value of $\xi_0$. In this case we did not write the $osp(2|2)$ charges in terms of their normally ordered version. They also would depend on the $\xi$'s, and any value for the $\xi$'s would give a Hamiltonian with the $osp(2|2)$ symmetry and with the classical non-linear sigma model as classical limit. In the $osp(1|2)$ case, the regularization that we chose corresponded to imposing that the charges are not modified by the normal order. Here we can impose a similar constraint by constraining $\eta^1_0\rvac$, $\eta^2_0\rvac$, $\cos(\sqrt{g}\phi_0)(1-g\eta^2_0\eta^1_0)\rvac$ and $\sin(\sqrt{g}\phi_0)(1-g\eta^2_0\eta^1_0)\rvac$ to belong to the same representation, and thus to have the same energy at order $g$. We have
\begin{equation}
H\eta^1_0\rvac=g\kappa^{-1}\xi_0 \eta^1_0\rvac,\quad H\eta^2_0\rvac=g\kappa^{-1}\xi_0 \eta^2_0\rvac\,,
\end{equation}
and
\begin{equation}
\begin{aligned}
H\cos(\sqrt{g}\phi_0)(1-g\eta^2_0\eta^1_0)\rvac&=\frac{1}{2\kappa}(g-2g)\cos(\sqrt{g}\phi_0)(1-g\eta^2_0\eta^1_0)\rvac\\
&=-\frac{\kappa^{-1}g}{2}\cos(\sqrt{g}\phi_0)(1-g\eta^2_0\eta^1_0)\rvac\,.
\end{aligned}
\end{equation}
Thus we impose
\begin{equation}
\lvac \xi_0\rvac=-\frac{1}{2}\,.
\end{equation}

\subsubsection{Corrections to the energy levels \label{sec:osp22corr}}
As soon as the states involve bosons only through $\cos(2n\sqrt{g}\phi_0)$ the bosonic part of the potential \eqref{potentialosp22} does not play any role at order $g$, and the fermionic part is the $osp(1|2)$ case. Only the unperturbed part of the Hamiltonian plays a role for the bosons at order $g$. Thus the bosonic and the fermionic part are actually decoupled at order $g$ and the calculations for the fermionic part are identical to the $osp(1|2)$ case. With
\begin{equation}
a_0^2\cos(2n\sqrt{g}\phi_0)\rvac=2\kappa^{-1}n^2 g\cos(2n\sqrt{g}\phi_0)\rvac\,,
\end{equation}
we get
\begin{equation}
\label{eq:osp22}
\begin{aligned}
&H|\cos(2n\sqrt{g}\phi)\eta^1\partial\eta^1...\partial^m\eta^1\bar{\partial}\eta^1...\bar{\partial}^m\eta^1\rangle\\
&=(m(m+1)-(m(m+1)+\tfrac{1}{2}-2n^2)\kappa^{-1}g)|\cos(2n\sqrt{g}\phi)\eta^1\partial\eta^1...\partial^m\eta^1\bar{\partial}\eta^1...\bar{\partial}^m\eta^1\rangle\,,
\end{aligned}
\end{equation}
hence
\begin{equation}
\label{eq:osp22gapnm}
\boxed{
\frac{L^2 \Delta e_L}{2\pi v_F}= m(m+1)-(m(m+1)+\tfrac{1}{2}-2n^2)\kappa^{-1}g\,.
}
\end{equation}
Similarly for non-symmetric states $|\cos(2n\sqrt{g}\phi)\eta^1\partial\eta^1...\partial^m\eta^1\bar{\partial}\eta^1...\bar{\partial}^{\bar{m}}\eta^1\rangle$ one has
\begin{equation}
\label{eq:osp22nonsym}
\frac{L^2 \Delta e_L}{2\pi v_F}=\tfrac{m(m+1)}{2}+\tfrac{\bar{m}(\bar{m}+1)}{2}-(m\bar{m}+\tfrac{m+\bar{m}}{2}+\tfrac{1}{2}-2n^2)\kappa^{-1}g\,.
\end{equation}

\subsubsection{Density of critical exponents}

The previous formula gives an infinite number of fields with the same conformal weight $h=m(m+1)$ when $L\to\infty$, thanks to the bosonic degree of freedom $n$. In finite size, the degenerescence is lifted with a $2\kappa^{-1}gn^2=n^2/\log L$ and one actually sees a continuum of conformal weights starting from $m(m+1)$. The question is to find the number of states that are there between $h$ and $h+dh$ in finite size $L$. Denote $C^m_L(h)$ the number of such states for magnetization $m$. This number is for large $L$
\begin{equation}
C^m_L(h)=\# \left\{n\geq 0, h\leq \frac{n^2}{\log L}\leq h+dh \right\}\,,
\end{equation}
if we assume that the higher-order correction terms to the conformal dimensions behave as $n^{k} (\log L)^{-p}$ with $k<2p$ (this is true in the XXX case for example, see \cite{lukyanov}). $\# S$ denotes the number of elements in the set $S$. At first order in $dh/h$ these $n$ must satisfy
\begin{equation}
\sqrt{h\log L}\leq n\leq \sqrt{h\log L}(1+\tfrac{dh}{2h})\,.
\end{equation}
Hence:
\begin{equation}
C^m_L(h)=\frac{1}{2}\sqrt{\frac{\log L}{h}}dh+O(dh^2)\,.
\end{equation}
Introduce now the variable $s$ by $h=s^2$. This gives the density of states $\rho^m(s)$ for the variable $s$ for magnetization $m$
\begin{equation}
\rho^m(s)=1\,,
\end{equation}
in the sense that there are $\rho^m(s)\sqrt{\log L}ds$ states with magnetization $m$ whose $s$ is between $s$ and $s+ds$ in size $L$. This is the dominant behaviour of the density as $L\to\infty$. The  corrections in finite size may contain a more complicated behaviour such as the black hole CFT \cite{ikhlef2}.


\subsection{The spectrum from the spin chain}
\subsubsection{Bethe equations}

In the  $osp(2|2)$ case, typical irreducible representations are characterized by a pair of numbers $q,j$ (and denoted $[q,j]$ in what follows) which are the eigenvalues of the generators $Q_z$ and $J_z$ on the highest-weight state. Here $q$ can be any complex number, while $j=0,1/2,\ldots$. These representations have dimension $8j$ \cite{lu}, and Casimir 
\begin{equation}
\label{eq:casimirosp22}
{\cal C}=2j^2-2q^2\,.
\end{equation}

Note that in contrast with $osp(1|2)$, the tensor products of the $[0,1/2]$ representations at each site of the chain involve not completely reducible representations. The simplest example of this is the tensor product of $[0,1/2]$ with itself which is a direct sum of the eight-dimensional adjoint $[0,1]$ and of an indecomposable mixing the atypical representations $[\pm 1/2,1/2]$ (both of dimension 3) and two copies of the identity $[0,0]$. For example, the ground state in even sizes is $8$ times degenerated, has a Bethe state with charges $Q_z=1/2,J_z=1/2$, and decomposes into $[1/2,1/2]$, $[-1/2,1/2]$ and two copies of the identity $[0,0]$. The Bethe state with charges $Q_z=3/2,J_z=1/2$ belongs to a $8\times \tfrac{1}{2}=4$-dimensional irreducible representation $[3/2,1/2]=(3/2,1/2)\oplus (1,0)\oplus (2,0)$. Another Bethe state with charges $Q_z=-3/2,J_z=1/2$ belongs to a similar $4$-dimensional irreducible representation, making this energy level $8$ times degenerated.

The $osp(2|2)$ spin chain is described by two families of roots $\lambda_i$ and $\mu_i$ satisfying the Bethe equations \cite{galleasmartins,galleasmartins2}
\begin{equation}
\begin{aligned}
\left( \frac{\lambda_i+i/2}{\lambda_i-i/2}\right)^L&=\prod_{j}\frac{\lambda_i-\mu_j+i}{\lambda_i-\mu_j-i}\\
\left( \frac{\mu_i+i/2}{\mu_i-i/2}\right)^L&=\prod_{j}\frac{\mu_i-\lambda_j+i}{\mu_i-\lambda_j-i}\,.
\end{aligned}
\end{equation}
An eigenvalue of the Hamiltonian for one set of solutions $\lambda_1,...,\lambda_{M_1},\mu_1,...,\mu_{M_2}$ to these equations is then
\begin{equation}
e_L=-\frac{1}{L}\sum_{i=1}^{M_1} \frac{1}{\lambda_i^2+1/4}-\frac{1}{L}\sum_{i=1}^{M_2} \frac{1}{\mu_i^2+1/4}\,.
\end{equation}
The spins $j$ and $q$ (ie, the eigenvalue of $J_z$ and $Q_z$ respectively) corresponding to a solution with $M_{1,2}$ roots $\lambda_i,\mu_i$ are given by
\begin{equation}
M_1=L/2-(j+q)\,,\quad \quad M_2=L/2-(j-q)\,.
\end{equation}

Note that if another grading is chosen, i.e., another choice for the fermionic sign in \eqref{eq:grading}, the Bethe equations would be different. As far as the eigenvalues are concerned, the two gradings are equivalent, see appendix \ref{sec:grading}.

In the $osp(2|2)$ we observe that the Bethe states have same charges $J_z$ and $Q_z$ as the highest-weight state of the multiplet they belong to.

\subsubsection{Root structure}

On the lattice in even size $L$, the field $\cos(2n\sqrt{g}\phi_0)\eta^1\partial\eta^1...\partial^m\eta^1\bar{\partial}\eta^1...\partial^m\eta^1$ is obtained with $L/2-(n+m+1/2)$ roots $\lambda$ and $L/2-(m-n+1/2)$ roots $\mu$. They are real and symmetrically distributed. See Figure \ref{fig:rootosp22} for a plot of some root structures.

 \begin{figure}
 \begin{center}
\includegraphics[scale=0.5]{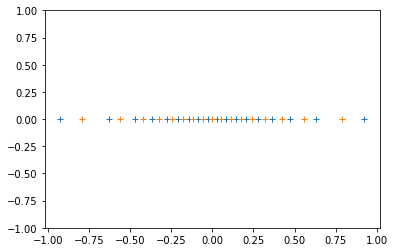} 
\end{center}
\caption{Bethe roots in the complex plane for the ground state (first family $\lambda_i$ in blue, second family $\mu_i$ in orange), for $L=36$.}
\label{fig:rootosp22}
\end{figure}

The gap computed previously for the ground state of the sector $Q_z=q$, $J_z=j$ reads, when $2j$ and $q$ are integers with same parity:
\begin{equation}
\frac{L^2\Delta e_L}{2\pi v_F}=j^2-\frac{1}{4}-\left(j^2-2q^2+\frac{1}{4} \right)\kappa^{-1}g \label{osp22gaps}\,.
\end{equation}

Like for $osp(1|2)$, we see that the vector representation is degenerate with the ground state in the limit $g\to 0$ since $j^2-{1\over 4}=0$ for $j=1/2$. We also see that the order $g$ corrections vanishes when  $j=1/2,q=\pm 1/2$: this is compatible with the fact that the corresponding representations are ``mixed'' with the identity in a bigger $osp(2|2)$ indecomposable representation. 

\subsubsection{Numerical results}
We give numerical evidence in Figure \ref{fig:osp22} for the formula given in \eqref{eq:osp22}, with $Z_L^{q,j}$ denoting the measured $Z^{q,j}_L=(\frac{L^2}{2\pi v_F}(e_L-e_L^0)-(h+\bar{h}))\log L$ in finite size $L$ for the state $[q,j]$. Here $e_L^0$ denotes the state $[-1/2,1/2]$.

 \begin{figure}
 \begin{center}
\includegraphics[scale=0.5]{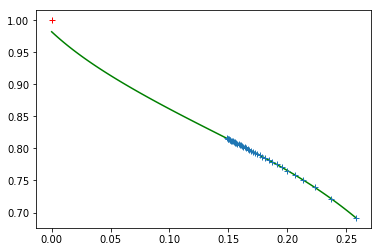} 
\includegraphics[scale=0.5]{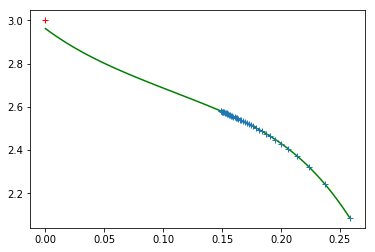} 
\includegraphics[scale=0.5]{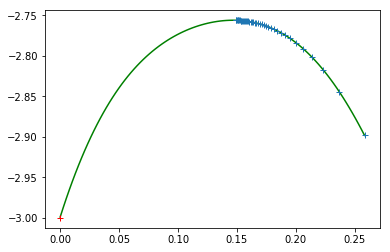} 
\includegraphics[scale=0.5]{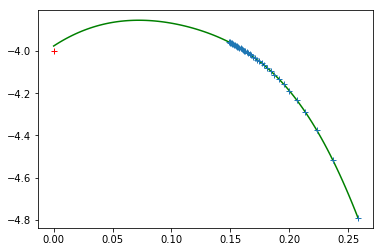} 
\includegraphics[scale=0.5]{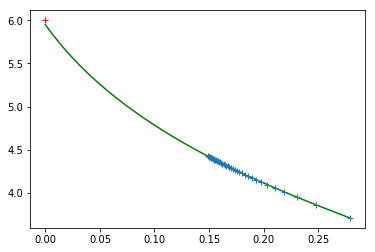} 
\includegraphics[scale=0.5]{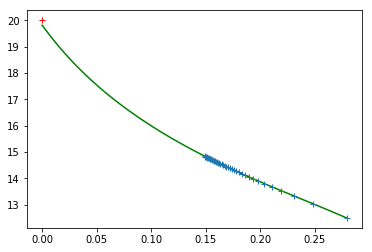} 
\end{center}
\caption{In reading direction: plots of $Z^{-3/2,3/2}_L$, $Z^{-5/2,5/2}_L$, $Z^{-1/2,5/2}_L$, $Z^{-3/2,7/2}_L$, $Z^{-5/2,1/2}_L$,$Z^{-9/2,1/2}_L$, as a function of $1/\log L$, together with their extrapolated curve with functions $f_{12}$, $f_{12}$, $f_{10}$, $f_{10}$, $f_{12}$, $f_{10}$. The theoretical results are, respectively, $1$, $3$, $-3$, $-4$, $6$, $20$.}
\label{fig:osp22}
\end{figure}

\section{$OSp(3|2)$}

\subsection{The spectrum from the spin chain}

\subsubsection{The Bethe equations \label{sec:osp32multiplet}}

Irreducible  representations of $osp(3|2)$ are characterized by a pair of numbers $q,j$ corresponding to the spin of the underlying $o(3)$ and $sp(2)$ bosonic sub-algebras. Here $q=0,1,\ldots $ is integer, and $j=0,1/2,\ldots$ is half-integer. The five-dimensional fundamental representation is $[0,1/2]$ and the twelve-dimensional adjoint representation is $[0,1]$. The (quadratic) Casimir eigenvalues are 
\begin{equation}
\label{eq:casimirosp32}
{\cal C}=j (2j-1)-\frac{1}{2}q(q+1)\,.
\end{equation}

The $osp(3|2)$ spin chain is described by two families of roots $\nu_i$ and $\mu_i$ satisfying the Bethe equations \cite{galleasmartins,frahmmartins}
\begin{equation}
\begin{aligned}
\left(\frac{\nu_i+i/2}{\nu_i-i/2} \right)^L&=\prod_{j=1}^{M_2}\frac{\nu_i-\mu_j+i/2}{\nu_i-\mu_j-i/2}\\
1&=\prod_{j=1}^{M_1'}\frac{\mu_i-\nu_j+i/2}{\mu_i-\nu_j-i/2}\prod_{j\neq i}\frac{\mu_i-\mu_j-i/2}{\mu_i-\mu_j+i/2}\,.
\end{aligned}
\end{equation}
As already explained, the Bethe equations depend on the choice of the grading, see appendix \ref{sec:grading}. It turns out that they are more convenient in another grading. We write $\lambda_i$ and $\mu_i$ the roots of the Bethe equations in this second grading. They read
\begin{equation}
\label{osp32bis}
\begin{aligned}
\left(\frac{\lambda_i+i/2}{\lambda_i-i/2} \right)^L&=\prod_{j=1}^{M_2}\frac{\lambda_i-\mu_j+i/2}{\lambda_i-\mu_j-i/2}\\
1&=\prod_{j=1}^{M_1}\frac{\mu_i-\lambda_j+i/2}{\mu_i-\lambda_j-i/2}\prod_{j\neq i}\frac{\mu_i-\mu_j+i/2}{\mu_i-\mu_j-i/2}\cdot \frac{\mu_i-\mu_j-i}{\mu_i-\mu_j+i}\,.
\end{aligned}
\end{equation}
For each solution $(\nu_i,\mu_i)$ of the equations in the first grading there exist a solution $(\lambda_i,\mu_i)$ of the equations in the second grading, and vice versa. The $\mu_i$ stay the same (as anticipated by the notation), and the roots $\lambda_i$ and $\nu_i$ are related by the fact that together, they form all the roots of the following polynomial
\begin{equation}
P(X)=(X+i/2)^L\prod_{j=1}^{M_2}(X-\mu_j-i/2)-(X-i/2)^L\prod_{j=1}^{M_2}(X-\mu_j+i/2)\,.
\end{equation}
In the second grading an eigenvalue of the Hamiltonian is given by
\begin{equation}
e_L=\frac{1}{L}\sum_{i=1}^{M_1} \frac{1}{\lambda_i^2+1/4}\,.
\end{equation}
The spins $j$ and $q$ (ie, the eigenvalue of $J_z$ and $Q_z$ respectively) for a solution with $M_1$ roots $\lambda_i$ and $M_2$ roots $\mu_i$ are related to $M_1$ and $M_2$ through
\begin{equation}
\label{eq:chargesosp32}
M_1=L-q\,,\quad \quad M_2=L-2j-q\,.
\end{equation}
However an important remark has to be made. It is known that for the XXX spin chain, the Bethe vectors are highest-weight vectors with respect to $J_z$, meaning that they are annihilated by the total $J_+$. It turns out that it is not the case for $osp(3|2)$: some Bethe vectors are indeed annihilated by the raising operators of the $sp(2)$ and $o(3)$ subalgebras, but not by the raising operators of the full $osp(3|2)$ algebra. To see this, one can go to the $q$-deformed version where most of the $osp(3|2)$ degeneracies are lifted. In this case there are states with similar root structure as in the $q=1$ undeformed case, but with additional roots with imaginary part $i\pi/2$. When $q\to 1$, the energy of theses states converge to the same multiplet with the same energy in finite size, since the extra roots at $i\pi/2$ have no effect in this limit. For example there is one state at $q\neq 1$ that has one extra root $\lambda_1=i\pi/2$ compared to the $q=1$ case, that falls into the multiplet when $q\to 1$. In its multiplet at $q\neq 1$ there is the state that becomes annihilated by all the raising operators of $osp(3|2)$ when $q\to 1$, which is the highest-weight state. The important point is that the charges of this state with an extra root has a $Q_z$ decreased by $1$ and a $J_z$ increased by $1/2$ compared to the state that can be built with the Bethe ansatz at $q=1$. Therefore the charge of the multiplet of the Bethe vector has actually a $Q_z$ decreased by $1$ and a $J_z$ increased by $1/2$. This is important for the bosonic part of logarithmic corrections to match the value of the Casimir, but it will also be important in section \ref{sec:eigenvalues}.

This created some confusion in \cite{frahmmartins}. To make contact with their\footnote{The subscripts are the author's initials \cite{frahmmartins,vanderjeugt}.} notations $(p_{\rm FM},q_{\rm FM})$ for labelling the Bethe states (but not the multiplets), we have $p_{\rm FM}=2j$ and $q_{\rm FM}=q/2$. As for the notations $(p_{\rm VdJ},q_{\rm VdJ})$ in \cite{vanderjeugt}, we have $p_{\rm VdJ}=q$ and $q_{\rm VdJ}=j$.

For example the first excited state belongs to a $12$-dimensional multiplet and the Bethe state has charges $(Q_z,J_z)=(2,0)$. In \cite{frahmmartins} it was interpreted as the irreducible representation $q_{\rm VdJ}=1, p_{\rm VdJ}=0$ of dimension $12$ in  \cite{vanderjeugt}, whereas it is actually the irreducible representation $q_{\rm VdJ}=1/2, p_{\rm VdJ}=1$ of dimension $12$ as well. It is a reducible representation for $o(3)\otimes sp(2)$ that reads $(1,1/2)\oplus (0,0) \oplus (2,0)$ in terms of $Q_z,J_z$. Only the state with $(Q_z,J_z)=(1,1/2)$ is annihilated by the raising operators of the whole $osp(3|2)$ algebra.

\subsubsection{Root structure}
A particular feature of this model is that the energy of the ground state $e_L^0$ is independent of $L$. In terms of Bethe roots it is given by $L$ coinciding roots $\mu$ at zero, see \cite{frahmmartins}. Note that in the $O(1)$ model a similar phenomenon inspired the Razumov-Stroganov conjecture concerning the entries of the eigenvector associated to this particular eigenvalue \cite{razumovstroganov,knutsonpzj}.

The first state whose Bethe state has charges $j$ integer and $q$ even is given on the lattice in the second grading by $L/2-(q/2+j)$ strings composed of $2$ roots $\lambda_i$ whose imaginary part is approximately $\pm 3/4$ and $2$ roots $\mu_i$ whose imaginary part is approximately  $\pm 1/4$, plus $2j$ real roots $\lambda_i$ taking a large value, lying outside the strings. This is illustrated in Figure~\ref{fig:rootosp32}.

 \begin{figure}[H]
 \begin{center}
\includegraphics[scale=0.5]{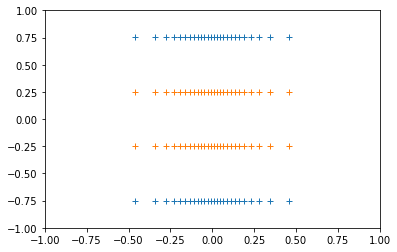} 
\includegraphics[scale=0.5]{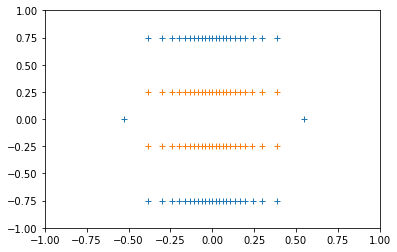}
\end{center}
\caption{Bethe roots in the complex plane for the states $j=0,q=2$ (left) and $j=1,q=2$ (right), for $L=54$.}
\label{fig:rootosp32}
\end{figure}

The presence of strings is a complication, both numerically and analytically. The 
typical deviation of their imaginary part from $\pm i/4$ or $\pm 3i/4$ is observed to behave as $\log L/L$ with the size of the system.

\subsubsection{Numerical results}
We observe numerically the following behaviour at large $L$, in terms of the charges $j$ and $q$ in \eqref{eq:chargesosp32} \textit{of the Bethe states}
\begin{equation}
\label{eq:osp32logcorr}
\frac{L^2\Delta e_L}{2\pi v_F}= j(j+1)-\left(j(j+1)-\frac{1}{2}q(q-1) \right)\kappa^{-1}g\,.
\end{equation}
In terms of the charges $j$ and $q$ \textit{of the multiplet} it belongs to, it reads
\begin{equation}
\label{eq:osp32logcorr2}
\boxed{
\frac{L^2\Delta e_L}{2\pi v_F}= j^2-\frac{1}{4}-\left(j^2-\frac{1}{4}-\frac{1}{2}q(q+1) \right)\kappa^{-1}g\,.}
\end{equation}
The bosonic part corresponds to the Casimir \eqref{eq:casimirosp32}, but not the fermionic part.

Here we see the importance of considering the charges of the multiplet and not those of the Bethe state. To our knowledge it has not been noticed before for this model. It is observed only for $osp(3|2)$ and not $osp(1|2)$ nor $osp(2|2)$, and in some other spin chains with Lie superalgebra symmetry their highest-weight property has been proven \cite{essler}, suggesting that it is peculiarity of this model rather than a common feature. However one can ask if this also happens in the higher-rank superalgebras studied numerically in \cite{FM}. From our experience it seems that studying the $q$-deformed version of the spin chain helps understanding these aspects: most of the $osp(3|2)$ degeneracies are then lifted and more Bethe states can be built that fall into a same multiplet as $q\to 1$; but with different charges, and in particular with higher $J_z$ charge than that of the only Bethe state that can be built at $q=1$.

In Figure \ref{fig:osp32num} are shown the numerical verifications of formula \eqref{eq:osp32logcorr}, where $Z_L^{j,q}$ denotes the measured $Z^{j,q}_L=(\frac{L^2}{2\pi v_F}(e_L-e_L^0)-(h+\bar{h}))\log L$ in finite size $L$ for the state whose Bethe state has $J_z=j$, $Q_z=q$, where $e_L^0$ is the reference state $j,q=0,0$.

\begin{figure}
 \begin{center}
\includegraphics[scale=0.5]{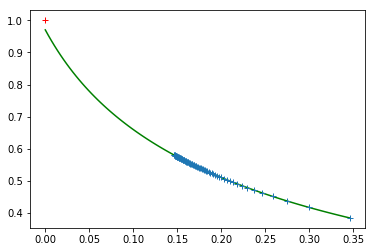} 
\includegraphics[scale=0.5]{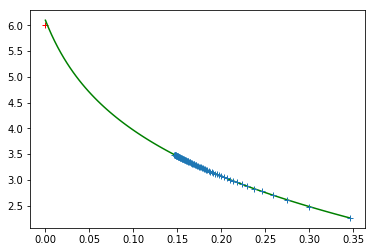} 
\includegraphics[scale=0.5]{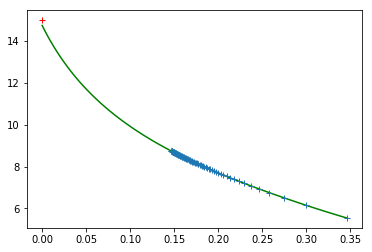} 
\includegraphics[scale=0.5]{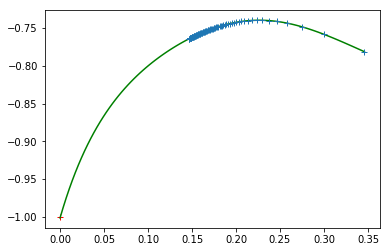} 
\includegraphics[scale=0.5]{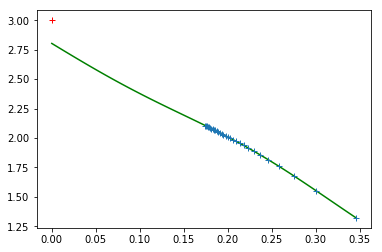}
\includegraphics[scale=0.5]{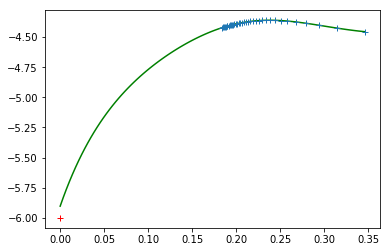}  
\end{center}
\caption{In reading direction: plots of $Z^{0,2}_L$, $Z^{0,4}_L$, $Z^{0,6}_L$, $Z^{1,2}_L$, $Z^{1,4}_L-Z^{0,2}_L$, $Z^{2,2}_L-Z^{0,2}_L$, as a function of $1/\log L$, together with their extrapolated curve with functions $f_{8}$, $f_{8}$, $f_{8}$, $f_{5}$, $f_5$, $f_8$. The theoretical results are, respectively, $1$, $6$, $15$, $-1$, $3$, $-6$.}
\label{fig:osp32num}
\end{figure}

\section{$OSp(4|2)$}
\subsection{Motivations}
The $OSp(4|2)$ sigma model is special from the field theoretic point of view. In this case indeed, it is known that the beta function vanishes to all orders, and it is expected that the sigma model is exactly conformal, with a line of fixed points that is closely related with the Kosterlitz-Thouless phase in the underlying $O(2)$ XY-model. It is also expected that the sigma model is dual in a certain sense to a Gross-Neveu model---just like the $O(2)$ free compact boson model is dual to the massless Thirring model. A lot of progress on this special case has been obtained in the string theory literature \cite{candu1,cagnazzo,mitev}. On the side of lattice regularizations, the dense loop soup with varying intersection weight $w$ has been studied with algebraic and direct diagonalization techniques \cite{candu2,candu3}. To this day however, no integrable version of this model is known to exist for arbitrary $w$.  This is related with singular properties of the $osp(r|2s)$ $R$ matrix for $r-2s=2$, as we now discuss.

\subsection{The $R$-matrix}
The $osp(4|2)$ spin chain constructed from section \ref{sec:chain} has an ill-defined Hamiltonian that can be somehow regularized \cite{FM} by replacing $\lambda$ by $\lambda(2-r+2s)/2$ and then setting $r-2s\to 2$. One gets the $R$-matrix
\begin{equation}
R_{ab}(\lambda)=P_{ab}+\frac{\lambda}{1-\lambda}E_{ab}\,,
\end{equation}
where $E_{ab}$ is a generator of the Temperley-Lieb algebra 
with parameter $N=2$, well-known from the $su(2)$ case, but represented here as a $36\times 36$ matrix. This $R$-matrix has indeed the $osp(4|2)$ symmetry and the theory obtained is relativistic. However, the aforementioned limit makes no sense in terms of the Bethe equations: these do not depend smoothly on the variables $r,s$ that are anyway discrete.

To get around this problem, we can have a look at the spin chains with $sl(4|2)^{(2)}$ symmetry, which contains the $osp(4|2)$ symmetry. It is well known that the spin chains with $sl(n|m)$ symmetry are non-relativistic \cite{saleur1999}, and the same is in fact true for the spin chains with 
 $sl(r|2)^{(2)}$ symmetry for $r=1,2,3,4$ (this was previously noticed for $sl(2|2)^{(2)}$ \cite{martinsramos}). However, we can now consider the $q$-deformed version of $sl(4|2)^{(2)}$, denoted $sl(4|2)^{(2)}_q$ \cite{galleasmartins}. This model contains as a sub-spectrum the levels of the $su(2)_{q'}$ (with $q'=-q^2$)  spin chain,  among which is the ground state for the antiferromagnetic regime. While the limit $q\to 1$ is non-relativistic (just like the ferromagnetic $su(2)$ spin $1/2$ chain),  the limit $q\to i$, after a rescaling $\lambda\to \lambda (q-i)$ happens to be relativistic, with a spectrum that contains the  levels of the antiferromagnetic $su(2)$ spin $1/2$ chain. It turns out that the transfer matrix obtained this way has exactly the same eigenvalues with the same degeneracies as the $osp(4|2)$ rescaled $R$-matrix discussed above, and the limit $q\to i$ makes perfect sense in terms of the Bethe ansatz equations. We shall thus study this model in the following.

\subsection{A brief description of $sl(4|2)^{(2)}_q$}
Let us first discuss briefly the Bethe equations of $sl(4|2)^{(2)}_q$. These are, in the second grading \cite{galleasmartins} with $q=e^{i\gamma}$
\begin{equation}
\left\{
\begin{aligned}
\left(\dfrac{\sinh(\lambda_i+i\gamma/2)}{\sinh(\lambda_i-i\gamma/2)} \right)^L&=\prod_{j=1}^{M_2}\dfrac{\sinh(\lambda_i-\mu_j+i\gamma/2)}{\sinh(\lambda_i-\mu_j-i\gamma/2)}\\
1&=\prod_{j=1}^{M_1}\dfrac{\sinh(\mu_i-\lambda_j+i\gamma/2)}{\sinh(\mu_i-\lambda_j-i\gamma/2)}\prod_{j\neq i}^{M_2}\dfrac{\sinh(\mu_i-\mu_j-i\gamma)}{\sinh(\mu_i-\mu_j+i\gamma)}\prod_{j=1}^{M_3}\dfrac{\sinh(2\mu_i-2\nu_j+i\gamma)}{\sinh(2\mu_i-2\nu_j-i\gamma)}\\
1&=\prod_{j=1}^{M_2}\dfrac{\sinh(2\nu_i-2\mu_j-i\gamma)}{\sinh(2\nu_i-2\mu_j+i\gamma)}\prod_{j\neq i}^{M_3}\dfrac{\sinh(2\nu_i-2\nu_j+2i\gamma)}{\sinh(2\nu_i-2\nu_j-2i\gamma)}\,.
\end{aligned}
\right.
\end{equation}
The ground state and first excitations are essentially given by configurations with three degrees of freedom $(n,m,p)$. These numbers correspond to   $L/2-m$ strings composed of $2$ roots $\lambda$ approximately at $\pm i\pi/4$ and $2$ roots $\mu$ approximately at $\pm i(\pi/4-\gamma/2)$, $n-1$ large real roots $\lambda$, and  an antistring at $i\pi/2$ composed of $p$ roots $\lambda$ and $p-1$ roots $\mu$.

\subsection{The $osp(4|2)$ equations \label{sec:osp42deg}}
To get the limit $q\to i$, let us set $\gamma=\pi/2-\epsilon$, and denote $\epsilon\xi_i$ the (real) center of the $L/2-m$ strings. Note that the antistrings disappear in this limit, so that all the states $(n,m,p)$ are the same for different $p$'s. The product of the first Bethe equations for $\lambda_i$ and 
its conjugate $\lambda_i^*$ gives 
\begin{equation}
\begin{aligned}
&\left( \frac{\sinh(\epsilon\xi_i +i (\pi/2-\epsilon/2))}{\sinh(\epsilon\xi_i+i\epsilon/2)} \frac{\sinh(\epsilon\xi_i +i (-\epsilon/2))}{\sinh(\epsilon\xi_i+i(-\pi/2+\epsilon/2))} \right)^L\\
&=\prod_{j} \frac{\sinh(\epsilon\xi_i-\epsilon\xi_j+i\pi/2)}{\sinh(\epsilon\xi_i-\epsilon\xi_j+i\epsilon)} \frac{\sinh(\epsilon\xi_i-\epsilon\xi_j+i(\pi/2-\epsilon))}{\sinh(\epsilon\xi_i-\epsilon\xi_j+i(-\pi/2+\epsilon))}  \frac{\sinh(\epsilon\xi_i-\epsilon\xi_j+-i\epsilon)}{\sinh(\epsilon\xi_i-\epsilon\xi_j-i\pi/2)} \,,
\end{aligned}
\end{equation}
which is, in the limit $\epsilon\to 0$, exactly the square of the Bethe equations for $su(2)$. Thus we have
\begin{equation}
\label{BExxx}
\left( \frac{\xi_i+i/2}{\xi_i-i/2}\right)^L=\pm\prod_{j\neq i}\frac{\xi_i-\xi_j+i}{\xi_i-\xi_j-i}\,,
\end{equation}
where we note the $\pm 1$. These equations are always true for the strings, whether there are real roots $\lambda$ or not. In case of isolated $\lambda_i$ roots, that we denote $\theta_i$ in the limit $\epsilon\to 0$, they have to satisfy the equation
\begin{equation}
\left( \frac{\sinh(\theta_i+i\pi/4)}{\sinh(\theta_i-i\pi/4)}\right)^L=1\,,
\end{equation}
hence
\begin{equation}
\Theta_i\equiv \frac{\sinh(\theta_i+i\pi/4)}{\sinh(\theta_i-i\pi/4)}=e^{2ik\pi/L}\,,
\end{equation}
for an integer $k$. Moreover, if there are strings, the second Bethe equation give another constraint
\begin{equation}
\prod_j \Theta_j^2=1\,.
\end{equation}
The eigenvalue of the transfer matrix becomes for $L/2-m=M$ strings
\begin{equation}
\Lambda(\lambda)=(-1)^L(-1)^M\left((\lambda+i)^L\prod_j \frac{\lambda-\xi_j-i/2}{\lambda-\xi_j+i/2}\prod_j \Theta_j +\lambda^L \prod \frac{\lambda-\xi_j+3i/2}{\lambda-\xi_j+i/2}\prod_j \Theta_j^{-1} \right)\,.
\end{equation}
We see that the eigenvalues obtained are either $su(2)$ eigenvalues, or $(-1)$ times $su(2)$ eigenvalues, except for the pseudo-vacuum which can modified (almost multiplied, up to an exponentially small term in $L$) by any root of unity. See appendix \ref{sec:osp42} for the numerical verification of these observations by direct diagonalization of the transfer matrix in small sizes.

In conclusion, it seems unfortunately that the only integrable $OSp(4|2)$ model accessible to us corresponds to a very special point on the sigma model critical line, where the underlying $O(2)$ theory is in fact at the $SU(2)$ invariant point. This corresponds to the special value $w=0$ in the dense loop soup, where loops are in fact not allowed to cross. The exponents are exactly the same as those of the level-one $SU(2)$ WZW theory, only degeneracies are different. This could of course have been expected a priori, since the $R$ matrix has the same abstract  Temperley-Lieb form as the $R$ matrix for the 6-vertex model at $\Delta=-1$. The only difference is that we have here a $36\times 36$ representation of the generator $E_i$, as opposed to a $4\times 4$ one for the 6-vertex model.

\section{Physical properties of dense loop soups with crossings}
This section is devoted to the application of the previous energy calculations to  ``watermelon'' exponents in loop soups with crossings.

We begin with some results on the $osp(r|2s)$ spin chains by showing that they describe intersecting loop soups with loop weight $N=r-2s$ \cite{martinsnienhuisrietman}. We then show for the first time that the spectrum of the $osp(r|2s)$ model is exactly included---in finite-size---in the spectrum of all the $osp(r+p|2s+p)$ models, as observed but not understood nor proved in \cite{FM}. We finally establish a correspondance between sectors of fixed charges and specific properties of loop configurations. This enables us to compute some watermelon 2-point functions that exhibits logarithmic behavior, which is the main new result of this section.

\subsection{A model for loops with crossings}


Let us first explain why the $osp(r|2s)$ vertex model can be reformulated as  a model for intersecting loops with weight $N=r-2s$. The mere observation that the $R$-matrix \eqref{Rmatrix} is built from elements of the Brauer algebra \cite{martinsnienhuisrietman,nienhuisrietman} is a bit unsatisfactory, since it does not explain how to treat the boundary conditions and the special weight that comes with them, and also because in this context the role of the graded tensor product is not clear.

In this section we prove the equivalence of the $osp(r|2s)$ model with a model of intersecting loops, starting directly from the expression of the transfer matrix \eqref{tm}, where from \eqref{Rmatrix}
\begin{equation}
\label{eq:r}
R(\lambda)^{lj}_{ik}=\lambda \delta_{ij}\delta_{kl}+(-1)^{p_i p_j}\delta_{il}\delta_{jk}+\frac{2\lambda}{2-r+2s-2\lambda}(-1)^{i>r+s}(-1)^{j\leq s} \delta_{ik'}\delta_{jl'}\,.
\end{equation}
We recall the definition of the conjugate index, $i'=D+1-i$, for any $i=1,\ldots,D$ with $D=r+2s$. We will work at constant spectral parameter $\lambda$ and will omit the dependence on $\lambda$ in order to simplify the notations. Define the partition function $Z$ of this model on an $L\times M$ lattice as
\begin{equation}
\label{eq:partitionfunction}
Z=\tr (t(\lambda)^M)\,,
\end{equation}
where $t(\lambda)$ is the transfer matrix given by \eqref{eq:T}--\eqref{tm}.
 It is thus
\begin{equation}
\begin{aligned}
Z=\sum_{c_\cdot^\cdot, \alpha_\cdot^\cdot =1}^D \prod_{m=1}^M R^{\alpha_L^{m+1} c_L^m}_{c_1^m \alpha_L^{m}}R^{\alpha_{L-1}^{m+1} c_{L-1}^m}_{c_L^m \alpha_{L-1}^{m}}...R^{\alpha_1^{m+1} c_1^m}_{c_2^m \alpha^{m}_1}
 (-1)^{p_{c_1^m}}(-1)^{\sum_{j=2}^L(p_{\alpha_j^m}+p_{\alpha^{m+1}_j})\sum_{i=1}^{j-1}p_{\alpha^m_i}}\,,
\end{aligned}
\end{equation}
with the identification $M+1\equiv 1$ due to periodic boundary conditions. As usual, each $R^{lj}_{ik}$ can be represented as the intersection of four lines, with the upper line carrying the index $\alpha_i^{m+1}$, the bottom line $\alpha_i^{m}$, the right line $c_i^m$ and the left line $c_{i+1}^m$:
\[\begin{tikzpicture}
 \draw[thick] (0,0)--(1.5,0);
 \draw[thick] (0.75,-0.75)--(0.75,0.75);
 \draw (0,0) node[left] {$c_{i+1}^m$};
 \draw (1.5,0) node[right] {$c_{i}^m$};
 \draw (0.75,-0.75) node[below] {$\alpha_i^m$};
 \draw (0.75,0.75) node[above] {$\alpha_i^{m+1}$};
\end{tikzpicture}
\]

 The $R$-matrix \eqref{Rmatrix} is a sum of three terms that impose 
 $(\alpha_i^m,c_{i+1}^m)=(c_i^m,\alpha^{m+1}_i)$, or $(\alpha_i^m,c_i^m)=((c_{i+1}^m)',(\alpha_i^{m+1})')$, or $(\alpha_i^m,c_i^m)=(\alpha_i^{m+1},c_{i+1}^m)$. 
 These terms are generators in the Brauer algebra \cite{martinsnienhuisrietman,nienhuisrietman} and can be represented diagrammatically as in Figure~\ref{fig:loopsoup}. The first two diagrams consist of a pair of ``corners'' that we shall henceforth refer to as North-East (NE), South-West (SW),
 South-East (SE) and North-West (NW), as indicated in the figure. The third diagram realizes loop crossings and corresponds algebraically to the (graded) permutation operator.
  \begin{figure}[H]
 \begin{center}

\begin{tikzpicture}
 \draw[thick] (0,0)--(0.5,0) arc(270:360:0.25) -- (0.75,0.75);
 \draw[thick] (0.75,-0.75)--(0.75,-0.25) arc(180:90:0.25) -- (1.5,0);
 \draw (0.75,-0.75) node[below] {$I$};
 \draw (0.35,0.35) node {\scriptsize SE};
 \draw (1.15,-0.35) node {\scriptsize NW};
\begin{scope}[xshift=3cm]
 \draw[thick] (0,0)--(0.5,0) arc(90:0:0.25) -- (0.75,-0.75);
 \draw[thick] (0.75,0.75)--(0.75,0.25) arc(180:270:0.25) -- (1.5,0);
 \draw (0.75,-0.75) node[below] {$E$};
 \draw (0.35,-0.35) node {\scriptsize NE};
 \draw (1.15,0.35) node {\scriptsize SW};
\end{scope}
\begin{scope}[xshift=6cm]
 \draw[thick] (0,0)--(1.5,0);
 \draw[thick] (0.75,-0.75)--(0.75,0.75);
 \draw (0.75,-0.75) node[below] {$X$};
\end{scope}
\end{tikzpicture}

\end{center}
\caption{Graphical representation of the three possible ways to match the four indices, corresponding to the three generators of \eqref{Rmatrix} (where the first term is $X$, the second one $I$ and the last one $E$). The first two diagrams define four types of corners denoted NE, SW, SE and NW. }
\label{fig:loopsoupne}
\end{figure}
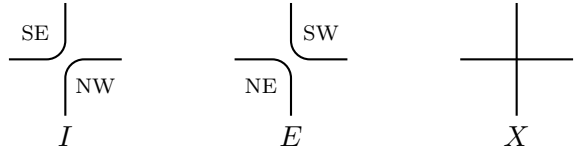
The graphical representation of Figure~\ref{fig:loopsoupne} naturally induces a representation of the partition function as a sum over dense intersecting loops, each vertical (resp. horizontal) edge carrying an index $\alpha$ (resp. $c$). In this representation $L$ is the horizontal length, and $M$ the vertical height of the $L\times M$ lattice.

\medskip

There are three issues to be resolved in order to define a proper model of intersecting loops:
\begin{enumerate}
 \item There are all the fermionic signs that seem to weigh each configuration with an arbitrary sign.
 \item The loops are not all equivalent since they carry an index $i=1,\ldots,D$ that must eventually be summed over.
  Moreover this index changes to its conjugate value along a loop at the SE
  and NW corners. 
 \item The weight of each fixed-index loop is one, but the proper weight after summation over the index will have to be worked out by
 taking carefully into account the boundary conditions.
\end{enumerate}
All these issues are of course related, and analysing them properly will lead to the resolution of the problem. 
The crux resides in a proper understanding of the fermionic signs.
This relies on the following lemma,
the proof of which is relegated to Appendix~\ref{app:lemma}.

\begin{lemma}
\label{lemma}
If the index $a$ of a loop in a configuration is changed so that $p_a$ changes, 
all other things being equal, then the weight of the configuration is multiplied by $(-1)^{b_v+1}$ where $b_v$ is the number of times the loop crosses the top and bottom boundaries (i.e., those corresponding to the direction of length $M$).
\end{lemma}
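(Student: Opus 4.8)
The plan is to reduce the statement to a purely combinatorial count of fermionic signs and then to show that this count is a topological invariant of the loop, equal to $b_v+1 \pmod 2$. First I would record two preliminary facts. The conjugation $a\mapsto a'=D+1-a$ acting at the corners preserves the grading \eqref{eq:grading}: $a\le s \Leftrightarrow a'\ge r+s+1$ and $s<a\le r+s \Leftrightarrow s<a'\le r+s$, so $p_{a'}=p_a$ and the parity $p$ of a loop is well defined regardless of where along it the index is conjugated. Moreover all coefficients of the $R$-matrix \eqref{eq:r}, namely $\lambda$, $1$ and $2\lambda/(2-r+2s-2\lambda)$, are independent of the indices, so the \emph{magnitude} of a configuration's weight does not depend on the index carried by any loop. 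Changing the loop's index therefore only affects the overall sign, and it suffices to compute the ratio $\rho(\ell)$ of the configuration's sign after and before the flip $p\mapsto 1-p$.

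Next I would isolate the parity-dependent sign sources in \eqref{tm}: (i) the supertrace factors $(-1)^{p_{c_1^m}}$ that close the auxiliary (horizontal) space; (ii) the graded reordering factor $(-1)^{\sum_{j\ge2}(p_{\alpha_j^m}+p_{\alpha_j^{m+1}})\sum_{i<j}p_{\alpha_i^m}}$; (iii) the permutation signs $(-1)^{p_ip_j}$ carried by the $I$-type corner pairs of Figure~\ref{fig:loopsoupne}; and (iv) the $E$-corner signs $(-1)^{i>r+s}(-1)^{j\le s}$. I would first dispose of (iv): rewriting the two indicators in terms of the north/south indices $l,k$ at the vertex gives the local factor $(-1)^{[k\le s]+[l>r+s]}$, which is $+1$ at every $E$-corner of a bosonic loop and, on a fermionic loop, depends only on which of the two fermionic blocks the index currently lies in; travelling once around the loop these contributions pair up with the block-alternation forced by the conjugations, so their total is a parity invariant that I can fold into the global count.

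The core of the argument is then to show that the combined $p_\ell$-dependence of sources (i)--(iii) collapses to $b_v+1 \pmod 2$. The cleanest way to organise this is to prove that $\rho(\ell)$ is invariant under the elementary lattice moves that deform $\ell$ while preserving $b_v \bmod 2$ (sliding an arc across a plaquette, creating or annihilating a pair of corners, and sliding through a crossing), and then to evaluate $\rho$ on the two topological representatives: a small contractible loop, for which a direct local computation gives the usual fermion-loop value $\rho=-1$ (matching $b_v=0$), and a loop winding once through the top/bottom seam, for which $\rho=+1$ (matching $b_v=1$). The conceptual reason only $b_v$ survives is that the horizontal cycle is closed by a \emph{supertrace} (the factors $(-1)^{p_{c_1^m}}$), which renders horizontal windings and crossings between distinct loops sign-neutral, whereas the vertical cycle is closed by an ordinary trace, so only fermionic strands threading the top and bottom boundaries retain a sign.

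The hard part will be the invariance step, and specifically the interplay between the non-local reordering factor (ii) and the local corner and crossing signs (i),(iii). Under a single plaquette move the reordering exponent $\sum_m\sum_{j\ge2}(p_{\alpha_j^m}+p_{\alpha_j^{m+1}})\sum_{i<j}p_{\alpha_i^m}$ changes by a sum of terms coupling the moved edges to all vertical edges lying to their left, and I would need to check that this change is cancelled mod $2$ exactly by the change in the permutation signs $(-1)^{p_ip_j}$ at the crossings swept out by the move. This is the point where the precise graded form chosen in \eqref{tm} is indispensable, and where I expect the bulk of the bookkeeping to lie.
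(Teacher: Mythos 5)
Your identification of the four sign sources in \eqref{tm}--\eqref{eq:r}, and your observation that the horizontal direction is closed by a supertrace while the vertical one is closed by an ordinary trace, are correct and are indeed the conceptual heart of the lemma. But your proof strategy --- invariance of the sign ratio $\rho(\ell)$ under local deformation moves, then evaluation on two base cases --- has a genuine gap, and it is not the ``bookkeeping'' you defer. The model is fully packed: every edge is visited exactly once and every vertex twice, so there is no move that deforms $\ell$ ``all other things being equal.'' Any plaquette or arc move necessarily reconnects strands of neighbouring loops and can merge or split them, so $\rho(\ell)$ before and after the move compares configurations whose loop content differs; meanwhile the nonlocal reordering factor $(-1)^{\sum_{j\ge 2}(p_{\alpha_j^m}+p_{\alpha_j^{m+1}})\sum_{i<j}p_{\alpha_i^m}}$ changes by terms coupling the moved edges to every vertical edge to their left. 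Showing that suitable moves exist, preserve $\rho$, and connect an arbitrary configuration to a standard one is essentially as hard as the lemma itself, and nothing in your outline indicates how it would go through.

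Even granting such moves, your two representatives do not exhaust the cases, so the final step would fail. Isotopy moves preserve the homotopy class of $\ell$ on the torus, and loops realize all classes: a loop winding once horizontally has $b_v=0$ and must give $\rho=-1$, a loop of class $(1,1)$ has $b_v=1$ and must give $\rho=+1$, and neither is deformable to a small contractible circle or to a purely vertical non-contractible loop. These are precisely the cases in which the lemma's content --- that horizontal windings and crossings are neutralized by the supertrace factors $(-1)^{p_{c_1^m}}$ while top/bottom crossings are not --- actually has to be established, and your argument never reaches them. The paper's proof avoids deformation altogether: it fixes the configuration and counts signs globally, showing (i) that the $(-1)^{p_ip_j}$ signs at NW/SE corners cancel against the off-loop part of the reordering factor via an inside/outside parity argument, (ii) that what survives is one $-1$ per NE corner together with the $E$-corner signs, whose product over the loop is governed by the parity of the number of NE (or SW) corners --- odd for a contractible loop, shifted by each boundary crossing, with left/right crossings compensated by the supertrace and top/bottom ones not --- and (iii) that self-intersections of $\ell$ can be removed beforehand by trading each crossing for a pair of corners times $-1$. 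This yields $(-1)^{b_v+1}$ uniformly in all homotopy classes at once, which is exactly what your two-base-case reduction cannot deliver.
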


\begin{proof}
See Appendix~\ref{app:lemma}.
\end{proof}

One sees that the number of times a boundary is crossed in the vertical or horizontal directions plays a different role. We will say that a loop is \textit{non-contractible} in the vertical direction (or simply non-contractible loop) if it crosses the whole lattice in the vertical direction, i.e. if it is possible to follow the loop from top to bottom without crossing the vertical boundary conditions. We say that a loop is contractible if it is not non-contractible. We will denote by \textit{even/odd non-contractible} a non-contractible loop that crosses the top and bottom boundaries an even/odd number of times (without saying anything about the right and left boundaries). In the subsequent subsections we will also represent by a diagram like \tiktwo{0}{0.5} or \tiktwo{0.5}{0} the sum of all configurations of loops which possess a certain number of non-contractible loops linked from top to bottom as indicated by the diagram. We refer the reader to Figure \ref{fig:exampleloops} for illustrative examples on a $2\times 2$ lattice with periodic boundary conditions.

From this lemma comes the theorem:

\begin{theorem}[Intersecting loop soup model]
$Z=\tr (t^M)$  is the partition function for a model of intersecting loops with loop weight $r-2s$ for contractible loops and $r\pm 2s$ for odd/even non-contractible loops.
\end{theorem}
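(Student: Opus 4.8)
The plan is to fix a configuration $\mathcal{C}$ of the three local vertices $X$, $I$, $E$ (equivalently, a dense intersecting loop pattern on the $L\times M$ cylinder) and to evaluate the remaining sum over the edge indices $\alpha,c\in\{1,\dots,D\}$ that appears in the expansion of $Z=\tr(t^M)$. For fixed $\mathcal{C}$ the summand factorises into a product of scalar Boltzmann factors attached to each vertex (a $\lambda$ at an $X$, a $1$ at an $I$, the $E$-prefactor at an $E$), which depend only on $\mathcal{C}$ and assemble into the per-vertex weight of the loop model, times a product of index-dependent signs: the $(-1)^{p_ip_j}$ of the graded permutations, the $(-1)^{i>r+s}(-1)^{j\le s}$ of the $E$-vertices, and the global fermionic signs of \eqref{tm}. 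Pulling the scalar factors out, what remains is to show that the sum over indices of the product of signs factorises over loops and yields exactly the claimed fugacities.

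First I would describe the index data compatible with $\mathcal{C}$. The $\delta$-constraints force the index to be constant along a loop through the $X$ and $I$ vertices and to be sent to its conjugate $a\mapsto a'=D+1-a$ at each $E$-corner. Since conjugation preserves the Grassmann parity $p_a$, the parity of a loop is well defined; and provided the conjugations compose consistently around each closed loop (so that all $D$ values of the index remain admissible), the free data is precisely one index $a_\ell\in\{1,\dots,D\}$ per loop $\ell$. The sum over admissible assignments then runs over the $\prod_\ell\{1,\dots,D\}$ choices.

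Then I would invoke Lemma~\ref{lemma}: for fixed $\mathcal{C}$ the sign of an admissible assignment depends only on the parities $p_{a_\ell}$, and flipping the parity of a single loop $\ell$ multiplies the weight by $(-1)^{b_v(\ell)+1}$, independently of the other loops, where $b_v(\ell)$ is the number of crossings of $\ell$ with the top and bottom boundaries. Taking the all-bosonic assignment as reference, every sign listed above equals $+1$ (no fermionic index ever occurs: $(-1)^{p_ip_j}=1$, and for bosonic $i,j$ one has $i\le r+s$ and $j>s$ so the $E$-signs are $+1$, and the global signs reduce to $1$), so the reference weight is $1$. Summing each loop independently — $r$ bosonic values of relative sign $+1$ and $2s$ fermionic values of relative sign $(-1)^{b_v(\ell)+1}$ — factorises the index sum into $\prod_\ell\bigl(r+2s\,(-1)^{b_v(\ell)+1}\bigr)$.

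Finally I would read off the three cases of the per-loop fugacity $w(\ell)=r+2s\,(-1)^{b_v(\ell)+1}$: a contractible loop has $b_v=0$ and weight $r-2s$; an odd non-contractible loop has $b_v$ odd and weight $r+2s$; an even non-contractible loop has $b_v$ even and nonzero and weight $r-2s$. Hence $Z=\sum_{\mathcal{C}}\bigl(\prod_{\text{vertices}}\text{Boltzmann weight}\bigr)\prod_\ell w(\ell)$ with exactly the stated fugacities, matching the $r\pm 2s$ for odd/even non-contractible loops. The entire analytic difficulty — the bookkeeping of the graded-tensor-product signs and the proof that only the boundary-crossing parity survives — is concentrated in Lemma~\ref{lemma} and deferred to the appendix; granting it, the theorem reduces to the above factorisation, whose only delicate point is the justification that the total sign is a function of the per-loop parities alone, so that the single-loop flips may be applied independently and the sum over indices factorises.
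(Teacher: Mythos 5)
Your proof is correct and takes essentially the same route as the paper's: both reduce everything to Lemma~\ref{lemma}, take the all-bosonic index assignment (weight $+1$ per loop) as the reference, and sum the $r$ bosonic and $2s$ fermionic index values loop by loop to obtain the per-loop fugacity $r+2s\,(-1)^{b_v+1}$, hence $r-2s$ for contractible (even $b_v$) loops and $r\pm 2s$ for odd/even non-contractible ones. Your write-up merely makes explicit the factorisation of the index sum over loops and the consistency of the conjugations at $E$-corners, details the paper leaves implicit.
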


\begin{proof}
If a configuration contains only loops with bosonic indices, all the $(-1)^{p_a}$ are $+1$ and the weight for a loop (with an index) is $+1$. But if a loop $l$ is contractible, because of lemma \ref{lemma} its weight is $1$ if it is bosonic and $-1$ if it is fermionic, thus after summation over the indices the weight is $r-2s$. If the loop is non-contractible the 
fermionic weight is $\pm 1$ if it is odd/even, thus a weight $r\pm 2s$ after summation.
\end{proof}

We note that if the transfer matrix were defined as the trace (and not the supertrace) of the monodromy matrix, i.e. if in \eqref{eq:T} there were no $(-1)^{p_i}$, then one would have a weight $r+2s$ for the odd non-contractible loops in the horizontal direction as well. On the contrary, to give the same weight $r-2s$ to all the loops (contractible or not), one would need to modify the trace in \eqref{eq:partitionfunction} and define $Z=\tr (K t^M)$ with $K$ a matrix that will assign the desired weights according to the sector. 

Note finally that this discussion is reminiscent of  the problem of dimer covering on the torus \cite{kasteleyn}, and also of variants of Kirchhoff's theorem for modified Laplacians, see \cite{saleurpolymer}. 

\begin{figure}[H]
\begin{center}
 \includegraphics[scale=0.3]{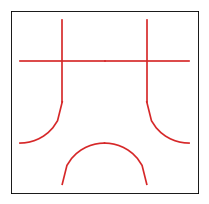}\hspace{1cm} \includegraphics[scale=0.3]{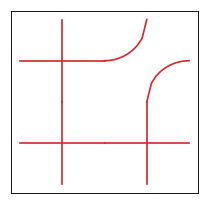} \hspace{1cm}\includegraphics[scale=0.3]{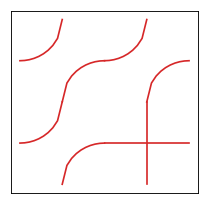}
\end{center}
\caption[blabla]{First figure: two contractible loops. Indeed, you cannot follow any of the two loops from top to bottom without crossing the vertical periodic boundary. Second figure: two odd non-contractible loops, and one contractible loop. Indeed, the horizontal line at the bottom is a contractible loop, and the two other loops are non-contractible and cross the vertical periodic boundary exactly once. Third figure: one even non-contractible loop. Indeed, there is only one loop that crosses the vertical periodic boundary twice. 

If we now denote with a diagram like $\tiktwo{0}{0.5}$ the connections between the four beginnings of strands $\bullet$ at the top and the bottom of the lattice, then the strands in the first figure are connected like $\tikc$, in the second figure like $\tiktwo{0}{0.5}$, and in the last figure like $\tiktwo{0.5}{0}$. 
}
\label{fig:exampleloops}
\end{figure}

\subsection{Inclusion of $osp$ spectra}
To prove the inclusion of the spectra for the $osp$ chains, another lemma is needed:
\begin{lemma}
 \label{lemma2}
If $A$ and $M$ are square matrices of size $n$ and $n+m$ such that
\begin{equation}
\label{eq:condition}
\forall k\in \mathbb{N}\,,\quad  \forall i,j \in \{1,...,n\}\,,\quad (M^k)_{i,j}=(A^k)_{i,j}\,,
\end{equation}
then the spectrum of $A$ is included in the spectrum of $M$ (with the degeneracies).
\end{lemma}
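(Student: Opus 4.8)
The plan is to translate the combinatorial hypothesis into an identity between resolvents and then exploit the Schur complement to factorize the characteristic polynomial. First I would record the generating-function expansion: for $|z|$ large,
\[
[(zI_{n+m}-M)^{-1}]_{ij}=\sum_{k\ge 0}(M^k)_{ij}\,z^{-k-1}\,,
\]
and the analogous formula for $A$. Hence the hypothesis $(M^k)_{ij}=(A^k)_{ij}$ for all $k$ and all $i,j\in\{1,\dots,n\}$ says precisely that the top-left $n\times n$ block of the resolvent $(zI_{n+m}-M)^{-1}$ coincides with $(zI_n-A)^{-1}$, as an identity of $n\times n$ matrices of rational functions of $z$.

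At this point a cheap argument already yields the set-theoretic inclusion: writing $\iota\colon\mathbb{C}^n\hookrightarrow\mathbb{C}^{n+m}$ for the inclusion and $\pi\colon\mathbb{C}^{n+m}\to\mathbb{C}^n$ for the projection onto the first $n$ coordinates, condition \eqref{eq:condition} reads $A^k=\pi M^k\iota$. Applying this to the minimal polynomial $\mu_M$ of $M$ gives $\mu_M(A)=\pi\,\mu_M(M)\,\iota=0$, so the minimal polynomial of $A$ divides $\mu_M$ and every eigenvalue of $A$ is an eigenvalue of $M$. The real difficulty, and the whole point of the lemma, is that this says nothing about multiplicities, which is what the phrase ``with the degeneracies'' requires.

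To capture the multiplicities I would use the Schur complement. Write $M$ in block form with blocks $M_{11}$ (size $n$), $M_{22}$ (size $m$), $M_{12}$ and $M_{21}$, and set
\[
S(z)=(zI_n-M_{11})-M_{12}(zI_m-M_{22})^{-1}M_{21}\,.
\]
The standard block-inversion formula identifies the top-left $n\times n$ block of $(zI_{n+m}-M)^{-1}$ with $S(z)^{-1}$. Equating this with $(zI_n-A)^{-1}$ from the first step, and inverting both sides (legitimate for all but finitely many $z$, hence as rational functions), forces the identity $S(z)=zI_n-A$.

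Finally, the Schur determinant formula $\det(zI_{n+m}-M)=\det(zI_m-M_{22})\cdot\det S(z)$, valid wherever $zI_m-M_{22}$ is invertible, combines with $S(z)=zI_n-A$ to give $\chi_M(z)=\chi_{M_{22}}(z)\,\chi_A(z)$, where $\chi_X(z)=\det(zI-X)$; since both sides are polynomials agreeing off the finite set of eigenvalues of $M_{22}$, this is a genuine polynomial identity. Thus $\chi_A$ divides $\chi_M$, which is exactly the statement that the spectrum of $A$ is contained in that of $M$ with algebraic multiplicities. The main obstacle to anticipate is not the inclusion of eigenvalues as a set but the promotion to multiplicities; the resolvent-plus-Schur-complement route is precisely what manufactures the explicit factorization of the characteristic polynomial and thereby delivers the degeneracy-preserving inclusion.
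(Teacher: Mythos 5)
Your proof is correct and follows essentially the same route as the paper: both hinge on the Schur complement of the block $zI_m-D$ (in your notation $zI_m-M_{22}$) and the resulting factorization $\det(zI_{n+m}-M)=\det(zI_n-A)\det(zI_m-D)$, which is exactly what delivers the inclusion with multiplicities. The only difference is how the coupling term is killed: the paper deduces $BD^kC=0$ for all $k$ from the hypothesis and invokes Cayley--Hamilton to conclude $B(zI_m-D)^{-1}C=0$, whereas you reach the same identity by matching the resolvent series of $M$ against the block-inversion formula --- an equivalent, generating-function repackaging of the same fact.
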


\begin{proof}
Writing $M$ in block form
\begin{equation}
M=\left(\begin{matrix}
A&B\\
C&D
\end{matrix}
 \right)\,,
\end{equation}
the condition \eqref{eq:condition} implies
\begin{equation}
\label{eq:condi2}
\forall k\in \mathbb{N},\quad BD^kC=0\,.
\end{equation}
Let now $\lambda\in\mathbb{C}$ not in the spectrum of $D$. Since $D-\lambda I_m$ is invertible one can use Schur's complement to write
\begin{equation}
\det (M-\lambda I_{n+m})=\det(A-\lambda I_n - B(D-\lambda I_m)^{-1}C) \det(D-\lambda I_m)\,.
\end{equation}
From Cayley-Hamilton theorem, $(D-\lambda I_m)^{-1}$ is a polynomial in $(D-\lambda I_m)$, thus in $D$, so that with \eqref{eq:condi2} we have $B(D-\lambda I_m)^{-1}C=0$. It follows that
\begin{equation}
\det(M-\lambda I_{n+m})=\det (A-\lambda I_n) \det(D-\lambda I_m)\,.
\end{equation}
Since the function $\lambda\to \det(M-\lambda I_{n+m})$ is continuous in $\lambda$ and since the spectrum of $D$ is finite, the previous equation is true for all $\lambda\in\mathbb{C}$. Thus whenever $\lambda$ is an eigenvalue of $A$, $\det(M-\lambda I_{n+m})=0$ and it is also an eigenvalue of $M$. Moreover since $\det(D-\lambda I_m)$ is a polynomial in $\lambda$ there cannot be poles and the eigenvalues of $A$ in the spectrum of $M$ have at least the degeneracies they have in the spectrum of $A$. (However in general the eigenvectors of $M$ corresponding to the eigenvalues of $A$ cannot be expressed simply: in particular they may have non-zero $i$-th components for $i>n$.)
\end{proof}

One can now prove the theorem:

\begin{theorem}[Inclusion of spectra]
The spectrum of the $osp(r|2s)$ spin chain is included in finite size in the spectrum of the $osp(r+p|2s+p)$ spin chain for all even $p>0$.
\end{theorem}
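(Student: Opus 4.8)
The plan is to apply Lemma~\ref{lemma2} with $A$ the transfer matrix of the $osp(r|2s)$ chain (size $n=D^L$, $D=r+2s$) and $M$ the transfer matrix of the $osp(r+p|2s+p)$ chain (size $D'^{\,L}$, $D'=r+2s+2p$); note that $p$ must be even precisely so that $s+p/2$ is an integer and the larger model is well defined. The structural input that makes this work is that the $R$-matrix coefficient $\frac{2\lambda}{2-r+2s-2\lambda}$ depends on $r,s$ only through $N=r-2s$, which is invariant under $(r,2s)\mapsto(r+p,2s+p)$. Hence both transfer matrices are representations of one and the same diagrammatic (loop/Brauer) object, and the intersecting loop soup theorem assigns weight $N$ to every contractible loop in both models.

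First I would fix an embedding of index sets. Because the grading \eqref{eq:grading} places the $r$ bosonic indices in the middle with the fermionic ones symmetrically at the two ends, and conjugation is $i\mapsto D'+1-i$, I would choose $S\subset\{1,\dots,D'\}$ to consist of the central $r$ bosonic indices together with the innermost $s$ fermionic indices at each end. By construction $S$ is stable under conjugation, contains $r$ bosonic and $2s$ fermionic elements, and after relabelling its elements in increasing order as $1,\dots,D$ reproduces exactly the grading pattern and the conjugation pairing $i\mapsto D+1-i$ of the $osp(r|2s)$ chain. Let $V_S\subset(\mathbb{C}^{D'})^{\otimes L}$ be spanned by the basis states all of whose indices lie in $S$; ordering the basis of the big Hilbert space so that $V_S$ comes first, the relabelling identifies $V_S$ with the full small-chain Hilbert space $(\mathbb{C}^{D})^{\otimes L}$.

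Next I would verify the hypothesis of Lemma~\ref{lemma2}, namely $(M^k)_{ij}=(A^k)_{ij}$ for all $k$ and all $i,j$ labelling basis states of $V_S$. I expand $(t^k)^{\{\alpha\}}_{\{\beta\}}$ over loop configurations on the $L\times k$ cylinder with the top and bottom boundary indices $\{\alpha\},\{\beta\}$ held fixed. The decisive point is that in a matrix element—as opposed to the trace $\mathrm{tr}(t^M)$ on the torus—the top and bottom are \emph{not} identified, so no closed loop can wind vertically: every closed loop has $b_v=0$, is therefore contractible, and carries weight $N$ in both models. The only weights that differ between the two models, $r+2s$ versus $r+2s+2p$ for odd non-contractible loops, simply never occur. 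The open strands join boundary points, and since $S$ is conjugation-closed and $\{\alpha\},\{\beta\}\in S^{\otimes L}$, they carry indices in $S$ consistently along each strand, including the conjugation flips at the SE and NW corners. Thus each matrix element equals a sum over connectivity patterns—identical diagrammatic data for both models—of $N^{(\#\,\text{closed loops})}$ times a fermionic sign which, by the loop soup theorem together with Lemma~\ref{lemma}, depends only on the connectivity and on the boundary indices in $S$, whose grading matches the small chain. The two matrix elements therefore coincide, and Lemma~\ref{lemma2} gives the inclusion of the transfer-matrix spectra with their degeneracies; since this holds for $t(\lambda)$ at every spectral parameter, the shared eigenvalues are the same analytic branches $\Lambda(\lambda)$, so $\mathcal{H}=\frac{d}{d\lambda}\log t(\lambda)|_{\lambda=0}$ inherits the inclusion.

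The main obstacle I anticipate is the bookkeeping of the fermionic signs: one must check that after summing over all internal and closed-loop indices in the \emph{larger} model, the residual sign attached to a configuration with $S$-valued boundary data is exactly the one produced by the \emph{smaller} model. This is precisely where the loop soup theorem and Lemma~\ref{lemma} do the real work, guaranteeing that each contractible loop contributes the scalar $N$ with all signs absorbed and leaving a boundary/connectivity-dependent sign that is insensitive to the ambient model. Checking that the chosen $S$ genuinely respects both the grading and the conjugation pairing is a routine but necessary index computation.
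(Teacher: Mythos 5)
Your proof is correct and follows essentially the same route as the paper: expand matrix elements of $T^M$ as loop configurations with fixed top/bottom boundaries, observe that every closed loop is then vertically contractible so that (by Lemma~\ref{lemma}) summing over its index gives the same weight $N=r-2s$ in both models—equivalently, the paper's statement that configurations containing a loop with an index among the $2p$ extra ones cancel—and then invoke Lemma~\ref{lemma2}. Your explicit construction of a conjugation-closed, grading-compatible index subset $S$ makes precise a point the paper leaves implicit, but it is the same argument.
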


\begin{proof}
Denote $t$ the transfer matrix of the $osp(r|2s)$ spin chain, and $T$ the one of the $osp(r+p|2s+p)$ spin chain. Let $J$ be a subset of $2p$ indices among which $p$ are bosonic and $p$ are fermionic, and $I=\{1,...,r+2s+2p\} \setminus J$. The indices of $t$ are identified with $I$.

$(T^M)^{\alpha_1^{M+1}...\alpha_L^{M+1}}_{\alpha_1^1...\alpha_L^1}$ is the partition function of the model on a $L\times M$ lattice with fixed boundary conditions at the top and bottom boundaries. Inside the configuration, every loop whose index is in $J$ has to be contractible, since at the up and down boundaries the indices must be in $I$. As $J$ contains as many bosonic as fermionic indices, lemma \ref{lemma} implies that these configurations add up to zero. Therefore all the loops can be considered having their indices in $I$, which is exactly $(t^M)^{\alpha_1^{M+1}...\alpha_L^{M+1}}_{\alpha_1^1...\alpha_L^1}$. Then lemma \ref{lemma2} applies and proves the theorem.
\end{proof}

Note that taking the supertrace of the monodromy matrix is crucial to have this property. Otherwise the non-contractible loops in the horizontal direction would not cancel out. Notice also that integrability does not play any role here, so it is true for arbitrary weights in the $R$-matrix. 
Recall that such an inclusion is observed for $gl(r|s)$ models as well \cite{canduglnm}.

\subsection{Charges and loop configurations \label{sec:chargesloops}}

%

While most of our discussion about critical exponents has been based on studies of the integrable Hamiltonian, it is usually the case that the same universal properties would be obtained by focussing instead on the transfer matrix. Indeed, taking the Hamiltonian limit amounts to taking the continuum limit in the (imaginary) time direction, something that is not supposed to modify the continuum description of the lattice model. The transfer matrix language is on the other hand more natural to describe loops, especially when the spectral parameter $\lambda=1$, corresponding to an {\sl isotropic} loop soup on the square lattice. We have checked that the log of the largest eigenvalues of the transfer matrix have the same behaviour as those of the Hamiltonian \eqref{scalegaps}, simply with the Fermi velocity $v_F$ replaced by a sound velocity $\sin (\lambda v_F)$, that is $1$ at the isotropic point.

This means that the finite-size corrections to the first excited states of the Hamiltonian, among which are the lowest eigenvalues in a sector imposing specific values of charges, correspond in the transfer matrix point of view to finite-size corrections to the largest eigenvalues of the transfer matrix $t$ in a sub vector space with specific values of charges. We recall that the partition function in \eqref{eq:partitionfunction} is given by the trace of the $M$-th power of $t=t(\lambda)$ the transfer matrix, that is dominated when $M\to\infty$ by the largest eigenvalue of $t$. Similarly, the trace of the $M$-th power of $t$ over a sub vector space where the charges take specific values, is dominated by the largest eigenvalue of $t$ in this sub vector space.

The question is now to understand the kind of constraint that is imposed on the intersecting loop soup when this trace over a sub vector space where the charges take specific values is performed.

Let us treat the case of $osp(2|2)$, the simplest example with a ``fermionic charge'' $J_z$ and a ``bosonic charge'' $Q_z$. In the grading given by \eqref{eq:grading}, they are represented by

\begin{equation}
2J_z=\left(\begin{matrix} 1 &0&0&0\\ 0&0&0&0\\0&0&0&0\\0&0&0&-1 \end{matrix} \right)\,,\quad 2Q_z=\left(\begin{matrix} 0&0&0&0\\ 0&1&0&0\\0&0&-1&0\\0&0&0&0 \end{matrix} \right)\,.
\end{equation}

When one traces over the vector space where $Q_z$ is equal to $q$, one considers only the configurations where at the bottom (and at the top) of the lattice, there are $2q$ more strands with index $2$ than strands with index $3$. As already said, along a loop an index $i$ is replaced by its conjugate $i'$ (i.e., $1 \leftrightarrow 4$, $2 \leftrightarrow 3$ ) every time a NW or SE corner is encountered. It implies that a strand carrying a $2$ at the bottom cannot directly (without crossing the vertical boundary) join another strand carrying a $2$ at the bottom. Then the $2q$ extra $2$'s at the bottom and at the top of the lattice have to be connected between themselves by going through the whole lattice in the vertical direction. Since the loops with bosonic index (whether contractible or not) are always given the same weight equal to $1$, it comes, after summing over the indices, that the boundary condition imposes to have (at least) $2q$ loops that propagate through the lattice in the vertical direction that are given weight $1$. Note that an extra strand with index $2$ at the bottom can be connected to any extra strand with index $2$ at the top, with the same weight $1$. For $2q=2$ these configurations are $\tiktwo{0}{0.5}+\tiktwo{0.5}{0}$.

If one traces over the vector space where $J_z$ is equal to $j$, the same reasoning shows that the configurations are constrained to have $2j$ more strands with index $1$ than index $4$ at the bottom, and that those at the top and the bottom of the lattice have to be connected between themselves. However since the index is fermionic, the weight of such a loop is $1$ (resp. $-1$) if it crosses the top and bottom boundary an odd (resp. even) number of times, from lemma \ref{lemma}. The total weight given to these loops is then exactly the signature of the permutation that maps the bottom $2j$ extra $1$'s to the top $2j$ extra $1$'s they are connected to. For $2j=2$ these configurations are $\tiktwo{0}{0.5}-\tiktwo{0.5}{0}$.

If one traces over the vector space where both $J_z$ and $Q_z$ are fixed as $j$ and $q$ respectively, then the configurations are constrained to possess $2j$ strands with index $1$ and $2q$ strands with index $2$ to propagate through the lattice from bottom to top. The total weight given to these loops is then the signature $\epsilon(\sigma_j)$ of the permutation  $\sigma_j$ that maps the bottom $2j$ extra $1$'s to the top $2j$ extra $1$'s they are connected to, without considering the $2q$ bosonic strands. Let us now denote $P_{L,M}(\sigma)$  the sum of all the configurations on a $L\times M$ lattice with $2j+2q$ non-contractible strands, where the $2j$ 'fermionic' strands are at the left of the lattice at row $1$, and where the $2j+2q$ strands are permuted by $\sigma$ at row $M$. Since a fermionic strand has to be connected to another fermionic strand through the periodic vertical boundary, this permutation has to be decomposable into $\sigma=\sigma_j\sigma_q$ where $\sigma_j$ acts trivially on the bosonic strands and $\sigma_q$ trivially on the fermionic strands. Then, denoting by $\tr_{j,q}$ the trace over the sector $J_z=j,Q_z=q$, one can express the trace of the $M$-th power of the transfer matrix $t$ as
\begin{equation}
\label{eq:youngsym}
\tr_{j,q} (t^M)=\sum_{\sigma_j,\sigma_q}\epsilon(\sigma_j)Z_{L,M}(\sigma_j\sigma_q)\,,
\end{equation}
where the sum runs over the permutations $\sigma_j$ and $\sigma_q$ of $2j+2q$ elements, that leave invariant the last $2q$ elements (respectively the first $2j$ elements). We recall that $\epsilon(\sigma_j)$ is the signature of the permutation $\sigma_j$. Here are some examples:
\begin{equation}
\begin{aligned}
\tr_{1,0} (t^M)&=Z_{LM}((1\,2))-Z_{LM}((2\,1))\\
\tr_{0,1} (t^M)&=Z_{LM}((1\,2))+Z_{LM}((2\,1))\\
\tr_{1,1} (t^M)&=Z_{LM}((1\,2\,3\,4))-Z_{LM}((2\,1\,3\,4))+Z_{LM}((1\,2\,4\,3))-Z_{LM}((2\,1\,4\,3))
\end{aligned}
\end{equation}
where we write $(i_1\,...\,i_n)$ the permutation that maps $1$ onto $i_1$, etc, $n$ onto $i_n$. The configuration of strands connections to which these three traces correspond are respectively $\tiktwo{0}{0.5}-\tiktwo{0.5}{0}$, $\tiktwo{0}{0.5}+\tiktwo{0.5}{0}$ and $\tikfour{0}{0.5}{1}{1.5}-\tikfour{0.5}{0}{1}{1.5}+\tikfour{0}{0.5}{1.5}{1}-\tikfour{0.5}{0}{1.5}{1}$.

Equation \eqref{eq:youngsym} is reminiscent of the Young symmetrizer for the Young tableau
\begin{equation}
\begin{ytableau}
   \none &  1& \none[\dots] &2q \\
     1\\
     \none[\dots]\\
     2j\\
  \end{ytableau}
  \label{supertableau}
\end{equation}
which here takes the form of a ``supertableau'' applying independently a symmetrizer to the $2q$ bosonic strands and an antisymmetrizer to the $2j$ fermionic strands. Compared to the usual Young supertableux, e.g. in \cite{gourdin}, there is an empty box at the top left merely because we shifted the first row, to make explicit the fact that each box must be counted either in the column or in the row. 

\ytableausetup{smalltableaux}
\ytableausetup{aligntableaux=center}

An unpleasant aspect of formula \eqref{eq:youngsym} is that it depends on the position of the ``fermionic'' strands, whereas we would like to have a geometrical meaning for strands without specifying their ``bosonic'' or ``fermionic'' nature. This important issue will be addressed in the next subsection.


These considerations can be generalized without difficulties to $osp(r|2s)$ for arbitrary $r$ and $s$. There is only an additional important remark to make on the case $r$ odd. Indeed in this case there is an index $i$ which is not associated to any charge, for example for $osp(1|2)$ the index $2$ does not affect the charge $J_z$. Then the number of extra strands associated to the charges can be odd (whereas in the case $r$ even it is necessarily even): in this case for even $L$ there will 
be one extra strand with this index $i$ that acts as a bosonic strand with a weight $1$. For odd $L$ the same observation holds for an even number of strands associated to the charges.

\subsection{Transfer matrix eigenvalues and loop configurations \label{sec:eigenvalues}}

On an $L\times M$ lattice with $M\to\infty$ the trace of the $M$-th power of the transfer matrix in size $L$ over the vector space with given charges behaves as $\propto \lambda_1^M$ where $\lambda_1$ is the maximal eigenvalue of the transfer matrix in this sector. Denoting by $\lambda_0$ the maximal eigenvalue of the transfer matrix, the quantity $(\log \lambda_1-\log \lambda_0)^{-1}$ gives the correlation length on an infinite cylinder of circumference $L$ for the property of the configurations induced by the sector of $\lambda_1$. 

In the limit $M\to\infty$ some remarks have to be made on \eqref{eq:youngsym}. On an infinite cylinder there is no periodic boundary conditions to impose that a bosonic (resp.\ fermionic) strand falls back on a bosonic (resp.\ fermionic) strand, i.e. that bosonic and fermionic strands are permuted among themselves after a certain number of applications of the transfer matrix. In \eqref{eq:youngsym} for $M$ large, imposing the decomposition $\sigma=\sigma_j\sigma_q$ instead of taking a generic permutation $\sigma$ only changes a multiplicative factor that is independent of $M$, and thus does not affect the free energy that is in both cases $-\log \lambda_1$. Any permutation $\sigma$ should be possible in \eqref{eq:youngsym} in this $M \to \infty$ limit, not only those that can be decomposed into $\sigma_j\sigma_q$. Thus on an infinite cylinder we have
\begin{equation}
\label{eq:youngnew}
e^{-MF_M(j,q)}=\sum_{\sigma\in\mathfrak{S}_{2j+2q}}\epsilon_{2j}(\sigma)\tilde{Z}_{L,M}(\sigma)\,,
\end{equation}
with $F_M(j,q)\to -\log\lambda_1$ when $M\to\infty$, and where $\tilde{Z}_{L,M}(\sigma)$ is the sum of all the configurations where $2j+2q$ strands (with the $2j$ fermionic ones at the left at row $1$) are permuted by $\sigma$ after $M$ rows, on a $L\times M$ lattice \textit{without} periodic boundary condition in the $M$ direction. $\epsilon_{2j}(\sigma)$ is the 'partial signature' of the first $2j$ elements of $\sigma$, i.e., 
attributes a factor $-1$ to each $(i_1,i_2)$ with $i_1 < i_2 \leq 2j$ such that $\sigma(i_1)>\sigma(i_2)$. The sum now runs over all the permutations $\sigma$ of $2j+2q$ elements.

For example, for $j=1$, $q=1/2$ there are three strands, with $2$ 'fermionic' strands at the left. It gives the configurations $\tikthree{0}{0.5}{1}- \tikthree{0.5}{0}{1}+\tikthree{0}{1}{0.5}+\tikthree{0.5}{1}{0}-\tikthree{1}{0}{0.5}-\tikthree{1}{0.5}{0}$. 

The advantage of \eqref{eq:youngnew} 
is that although the summands still depend on the initial position of the fermionic strands, it can be easily transformed into a version that does not distinguish between bosonic and fermionic strands, by summing over 
all ${2j+2q \choose 2j}$ ways of attributing $2j$ fermionic and $2q$ bosonic labels to the $2j+2q$ strands. The fermionic signs are then attributed as before. Thus one gets
\begin{equation}
\label{eq:youngnew1}
e^{-M\tilde{F}_M(j,q)}=\sum_{\sigma\in\mathfrak{S}_{2j+2q}}\left({2j+2q\choose 2j}-2\iota_{2j}(\sigma) \right)\tilde{Z}_{L,M}(\sigma)\,,
\end{equation}
with $\tilde{F}_M(j,q)\to -\log\lambda_1$ when $M\to\infty$, and where $\iota_{2j}(\sigma)$ is the number of subsets of $2j$ strands among the $2j+2q$ strands permuted by $\sigma$, that intersect between themselves an odd number of times. A formal mathematical definition of $\iota_{2j}(\sigma)$ is
\begin{equation}
\label{eq:iota}
\iota_{2j}(\sigma)=\#\left\{I\subset \{1,...,2j+2q\}\,\, \text{such that }\,\, \# I=2j\,,\text{and } \#\{(i,j) \in I^2\,\,,\,\, i<j\,, \sigma(i)>\sigma(j) \} \text{ is odd} \right\}
\end{equation}
Note that with this definition \eqref{eq:youngnew1} no longer refers to bosonic/fermionic strands, but simply to a total number of $2j+2q$ unspecified strands. Clearly, $\iota_0(\sigma)=\iota_1(\sigma)=0$. Moreover, $\iota_2(\sigma)$ is exactly the total number of intersections between the strands (in a graphical representation where two strands intersect $0$ or $1$ time and do not wind around the horizontal periodic boundary). And $\iota_3(\sigma)$ is the total number of intersections between the strands, where each intersection between strands $i<j$ is weighted by $2j+2q-|i-j|-|\sigma(i)-\sigma(j)|$.

For instance, one has the following correspondances on the \textit{infinite} cylinder between the Young tableaux and the loop configurations:
\begin{equation}
\begin{aligned}
\ytableaushort{123} \longrightarrow \tikthree{0}{0.5}{1}+ \tikthree{0.5}{0}{1}+\tikthree{0}{1}{0.5}+\tikthree{0.5}{1}{0}+\tikthree{1}{0}{0.5}+\tikthree{1}{0.5}{0}\qquad\qquad (j=0\,,q=3/2)\\
\ytableaushort{1,2,3} \longrightarrow \tikthree{0}{0.5}{1}- \tikthree{0.5}{0}{1}-\tikthree{0}{1}{0.5}+\tikthree{0.5}{1}{0}+\tikthree{1}{0}{0.5}-\tikthree{1}{0.5}{0}\qquad\qquad (j=3/2\,,q=0)\\
\ytableaushort{\none 1,1,2} \longrightarrow 3\tikthree{0}{0.5}{1}+\tikthree{0.5}{0}{1}+\tikthree{0}{1}{0.5}-\tikthree{0.5}{1}{0}-\tikthree{1}{0}{0.5}-3\tikthree{1}{0.5}{0}\qquad\qquad (j=1\,,q=1/2)\\
\end{aligned}
\end{equation}
These combinations now refers to three generic strands, without the need to specify which ones are fermionic, contrarily to \eqref{eq:youngnew} where the different strands are either fermionic or bosonic.

Note that due to the horizontal periodic boundary, there can be a permutation between two strands without crossings. The fermionic signs also count these situations.\\

When the circumference of the cylinder itself becomes large, a conformal transformation onto the plane gives access to the critical exponents of the corresponding watermelon exponents on the plane. To use the Bethe ansatz to compute these corrections in large sizes $L$, one needs to find which eigenvalue is maximal for each sector. A possible source of difficulty is that the sector associated to other conserved quantities in which this $\lambda_1$ lies may change with the size of the system. For example for $osp(1|2)$ the state with integer magnetization $j>0$ with minimal energy in the thermodynamic limit does not have symmetric Bethe roots, but in small sizes nothing prevents the state with equal magnetization but with symmetric Bethe roots from having lower energy. The determination of the finite-size corrections to these states close to the thermodynamic limit nevertheless permits to determine which one is the lowest.

In the following, we explicitly check the correspondance between the constraint encoded by a tableau \begin{ytableau}
   \none &  1& \none[\scriptstyle \dots] &2q \\
     \scriptstyle 1\\
     \none[\scriptstyle \dots]\\
     \scriptstyle 2j\\
  \end{ytableau} and the eigenvalue of the transfer matrix, with a numerical code for loops with crossings. 
  That is, we start from an eigenvalue of the $osp$ transfer matrix, find its Bethe roots, deduce the corresponding charges from them, and then compare to a numerical transfer matrix that implements \eqref{eq:youngnew} on an intersecting loop soup with fermionic/bosonic strands, and \eqref{eq:youngnew1} on a generic intersecting loop soup, and verify that it gives exactly the same eigenvalue. 

\subsubsection{$osp(2|2)$}
In Table \ref{tab:osp22}
we give the explicit correspondance between some eigenvalues of the transfer matrix of the $osp(2|2)$ model in size $L=6$, and the Bethe roots and the kind of constraints that it imposes on the loop configurations.
\ytableausetup{smalltableaux}
\ytableausetup{aligntableaux=center}
\begin{table}[H]
\begin{center}
\begin{tabular}{|c|c|c|c|}
\hline
Eigenvalue & Bethe roots $\{\lambda\},\{ \mu \}$ & $J_z,Q_z$ & Constraint on the loops\\
\hline
\hline
167.295 & $\{ 0.02897,-0.02897,0\},\{0.0180,-0.0180\}$ &$1/2,1/2$ & no constraint, or \ytableaushort{\none1,1 }\\
\hline
95.732 & $\{0.5992,-0.5992,0.1471,-0.1471\},\{0\}$ & $1/2,3/2$ & \ytableaushort{\none123,1 } \\
\hline
63.761 &$\{0.3281,-0.0286 \},\{0.3281,-0.0286 \}$ & $1,0$&  \ytableaushort{1,2}\\
\hline
44.140 &$\{-0.5774,-0.1355,0.1584\},\{ -0.1182\}$ & $1,1$ & \ytableaushort{\none12,1,2 }\\
\hline
29.63 &$\{ -0.147,0.147\},\{0 \}$ &$3/2,1/2$ &\ytableaushort{\none1,1,2,3 }\\
\hline
22.750 &$\{0.8660,-0.8660,0.2887,-0.2887,0 \},\{\}$ & $1/2,5/2$ & \ytableaushort{\none12345,1}\\
\hline
3.482 &$\{-0.1340 \},\{ -0.1340\}$ & $2,0$ & \ytableaushort{1,2,3,4}\\
\hline
\end{tabular}
\end{center}
\caption{Correspondance between some transfer matrix eigenvalues, Bethe roots, charges, and loop configurations for the $osp(2|2)$ case in size $L=6$ at the isotropic integrable point.}
\label{tab:osp22}
\end{table}






Note that the Bethe roots for integral charges are associated to non-symmetric states, given in \eqref{eq:osp22nonsym} with $\bar{m}=m-1$.
Note also that the equivalence between the absence of constraints and  $\ytableaushort{\none1,2}$ on the infinite cylinder is very specific to $osp(2|2)$ in even size where the weight of a loop is zero, since in both cases it amounts to forbidding the contraction between the two strands.

\subsubsection{$osp(3|2)$}
The same work can be done for $osp(3|2)$ where the weight for a (contractible) loop is $1$. Here we notice the importance of the remark of section \ref{sec:osp32multiplet}, that gives the true charges of the multiplet a Bethe vector belongs to, and that has direct consequences on the configurations of loops associated to it. The $J_z$ and $Q_z$ indicated in Table \ref{tab:osp32} are those of the highest-weight of the multiplet the Bethe vector belongs to (note that with the conventions of $osp(3|2)$, $Q_z=q$ imposes $q$ bosonic strands and not $2q$).
\begin{table}[H]
\begin{center}
\begin{tabular}{|c|c|c|c|}
\hline
Eigenvalue & Bethe roots $\{\lambda\},\{ \mu \}$ & $J_z,Q_z$ & Constraint on the loops\\
\hline
\hline
656.84 & degenerate roots  &$0,0$ & no constraint\\
\hline
584.97 & $\begin{matrix}\{-0.14\pm 0.75 i ,-0.14\pm 0.75 i ,\pm 0.75 i \},\\\{-0.14\pm 0.25 i ,-0.14\pm 0.25 i ,\pm 0.25 i  \}\end{matrix}$ & $1/2,1$ & \ytableaushort{\none1,1 }\\
\hline
 323.40 &  $\begin{matrix}\{ 0.0645 \pm  0.7501 i ,-0.0645 \pm  0.7501 i\},\\\{0.0645 \pm 0.249 i,-0.0645 \pm 0.249 i \}\end{matrix}$ & $1/2,3$ &\ytableaushort{\none123,1 }\\
\hline
175.96 & $\begin{matrix}\{-0.555,-0.057\pm 0.749 i, 0.057 \pm 0.749i ,0.555 \},\\ \{ -0.057 \pm 0.249 i,0.057 \pm 0.249 i\}\end{matrix}$ &$3/2,1$ & \ytableaushort{\none1,1,2,3 }\\
\hline
 100.40 &  $\begin{matrix}\{\pm 0.7500i \},\\\{ \pm 0.2499i\}\end{matrix}$ &$1/2,5$ & \ytableaushort{\none12345,1 }\\
 \hline
 67.27 & $\begin{matrix}\{-0.883,\pm 0.7500 i,0.883 \},\\\{ \pm 0.2499 i\}\end{matrix}$ &$3/2,3$ &\ytableaushort{\none123,1,2,3 }\\
 \hline
 19.39 & $\begin{matrix}\{ -0.883,-0.308,\pm 0.7500 i,0.308,0.883\},\\\{\pm 0.2499 i \}\end{matrix}$ & $5/2,1$&\ytableaushort{\none1,1,2,3,4,5 }\\
\hline
\end{tabular}
\end{center}
\caption{Correspondance between some transfer matrix eigenvalues, Bethe roots, charges, and loop configurations for the $osp(3|2)$ case in size $L=8$ at the isotropic integrable point.}
\label{tab:osp32}
\end{table}



\subsubsection{$osp(1|2)$}
The same exercice for $osp(1|2)$ is a bit formal since it gives a weight $-1$ to each contractible loop, that cannot be interpreted as a probability. However the correspondance between eigenvalues of the transfer matrix and specific configurations of loops still holds, see Table \ref{tab:osp12}.

\begin{table}[H]
\begin{center}
\begin{tabular}{|c|c|c|c|}
\hline
Eigenvalue & Bethe roots $\{\lambda\}$ & $J_z$ & Constraint on the loops\\
\hline
\hline
254.23 &$\{-0.558,-0.227,0,0.227,0.558 \}$ & $1/2$ & 
  \begin{ytableau}
   \none & *(lightgray) 1 \\
   1
  \end{ytableau}\\
\hline
225.95 & $\{-0.449,-0.126,0.5 i, -0.5 i, 0.126, 0.449 \}$ &$0$ & no constraint \\
\hline
95.23 &  $\{-0.616,-0.262,-0.028,0.200 \}$& $1$ &\ytableaushort{1,2}\\
\hline
44.54 &$\{-0.233,0,0.233 \}$ & $3/2$ &
  \begin{ytableau}
   \none & *(lightgray) 1 \\
   1\\
   2\\
   3
  \end{ytableau}\\
\hline
7.59 & $\{-0.265,-0.018 \}$ & $2$ &\ytableaushort{1,2,3,4}\\
\hline
\end{tabular}
\end{center}
\caption{Correspondance between some transfer matrix eigenvalues, Bethe roots, charges, and loop configurations for the $osp(1|2)$ case in size $L=6$ at the isotropic integrable point.}
\label{tab:osp12}
\end{table}
Tableaux with odd number of boxes would appear for odd sizes only. The fact that there is one ``bosonic'' strand for half-integer spin (denoted by a grey box) is explained in the last paragraph of section \ref{sec:chargesloops}.

Imposing a constraint can increase the ``partition function'' only because some Boltzmann weights are negative in the $osp(1|2)$ case. Remark also that since there is only one fermionic charge in $osp(1|2)$, one cannot get configurations like $\ytableaushort{1234 }$  and we are almost restricted to purely determinant-like combinations of probabilities. This is reminiscent of the correlation functions for the spanning trees and forests model that also exhibits an $osp(1|2)$ symmetry \cite{ivashkevich,caracciolojacobsensaleur}. However these combinations should appear in the $osp(3|4)$ model.

\subsection{Watermelon $2$-point functions for loops with crossings \label{sec:watermelon}}

We here collect our results for the logarithmic scaling of two-point functions in models of intersecting loops.

In the geometry of the plane in the scaling limit, for a permutation $\sigma\in\mathfrak{S}_{n+m}$ of $n+m$ elements, we denote by $P^{n+m}_\sigma(x)$ the probability that $n+m$ strands emanate from some small neighborhood and come close together again in another small neighborhood, separated from the first one by a distance $x$, with their ordering having been permuted by $\sigma$ in-between the two neighborhoods, see Figure \ref{fig:watermelon}.

\begin{figure}[H]
\begin{center}
\begin{tikzpicture}
\draw [line width=0.5mm, black] plot [smooth, tension=-1] (0,0) to[out=45,in=180-20]  (1.5,1) to[out=-20,in=180+40]  (4,-1) to[out=40,in=180+80] (6,0);
\draw [line width=0.5mm, black] plot [smooth, tension=-1] (0,0) to[out=20,in=180+20]  (2,0.5) to[out=20,in=180-70] (6,0);
\draw [line width=0.5mm, black] plot [smooth, tension=-1] (0,0) to[out=-10,in=180+20]  (2,-0.5) to[out=20,in=180-20]  (4,0.25) to[out=-20,in=180+30] (6,0);
\draw [line width=0.5mm, black] plot [smooth, tension=-1] (0,0) to[out=-50,in=180-0]  (2,-1) to[out=-0,in=180-20]  (5.25,0.5) to[out=-20,in=180-30] (6,0);
 \node [black] at (0,0){\Huge $\bullet$};
 \node [black] at (6,0){\Huge $\bullet$};
 \node [white] at (0,0){\huge $\bullet$};
 \node [white] at (6,0){\huge $\bullet$};
\end{tikzpicture}
\caption{Example of a $4$-legs watermelon $2$ point function.}
\label{fig:watermelon}
\end{center}
\end{figure}
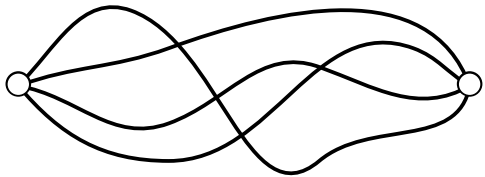

We recall that the fully-packed $O(n)$ loop model on the square lattice \cite{nienhuis82,nienhuis84,nienhuis87,saleurduplantier2}, i.e. the model consisting in filling a square lattice with the two tiles $P$ and $E$ in \eqref{fig:loopsoupne} and with a weight $N\equiv -2\cos\gamma$ for a loop is critical. We remind the reader that to make a connection between the loop configurations on the cylinder and $P^{n+m}_\sigma(x)$, we use conformal invariance to map the plane onto the cylinder, sending points $0$ and $x$ to $\mp \infty$. Then the configurations of loops on the plane that come from a neighbourhood of $0$ and meet again in a neighbourhood of $x$ exactly correspond on the cylinder to strands propagating all along the cylinder without forming loops, see Figure \ref{fig:looplane}. For example, this permits to show that the two-point function $P^2_\sigma(x)$ with $\sigma$ the identity decays as
\begin{equation}
P^2_\sigma(x)\sim \frac{1}{x^{(1-\gamma/\pi)-\tfrac{(\gamma/\pi)^2}{1-\gamma/\pi}}}
\end{equation}
 \begin{figure}[H]
 \begin{center}
\includegraphics[scale=0.6]{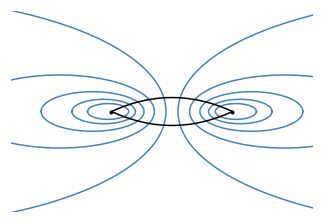} 
\end{center}
\caption{Two strands joining a neighbourhood of $0$ and a neighbourhood of $x$ (in black), together with contour lines (in blue) of a conformal transformation that maps the plane onto the cylinder, with points $0$ and $x$ mapped to $\mp\infty$. Each blue line corresponds to a line encircling the cylinder at a constant height after the mapping.}
\label{fig:looplane}
\end{figure}

In this section we use all our previous results to determine the generalization of these power-law decays to the intersecting loop models. The same reasoning can be used to translate an intersecting loops configurations on the plane such as Figure \ref{fig:watermelon} to configurations of loops on the cylinder, where a certain number of strands propagate without forming loops, and undergoing a given permutation.

We saw that on the cylinder, the largest eigenvalue of the transfer matrix in a given sector is related to loop configurations through \eqref{eq:youngnew1}. Following this relation, we define the weight
\begin{equation}
\label{eq:www}
W^{n,m}_{\sigma}=1-\tfrac{2\iota_m(\sigma)}{{n+m\choose m}}\,,
\end{equation}
which is merely the same weight as in \eqref{eq:youngnew1} after a normalization. We recall that $\iota_m(\sigma)$ is the number of subsets of $m$ strands among the $n+m$ strands that intersect between themselves an odd number of times (in a graphical representation where there is no winding around the two end points), defined in \eqref{eq:iota}. We gave an example with the permutation $\sigma=(4\,1\,3\,2)$ in Figure \ref{fig:watermelon}. We have in this figure $\iota_0=\iota_1=0$, $\iota_2=4$, $\iota_3=2$, $\iota_4=0$, so that $W^{4,0}_\sigma=1,W^{3,1}_\sigma=1,W^{2,2}_\sigma=-1/3,W^{1,3}_\sigma=0,W^{0,4}_\sigma=1$. Note that we always have $W^{n,0}_\sigma=W^{n-1,1}_\sigma=1$.

\subsubsection{Even number of legs}


Equation \eqref{eq:youngnew1} then directly translates into relations between the $P^{n+m}_\sigma(x)$. The cases of loop weights $N=1,0,-1$ are related to the $osp(r|2)$ models with $r=3,2,1$ respectively. The results can be read off from eqs.~\eqref{eq:osp32logcorr2}, \eqref{osp22gaps} and \eqref{eq:osp12gapsym}, 
by taking into account \eqref{scalegaps}--\eqref{afflecklogeq} that link the two terms in the expression to respectively the power law and logarithmic exponents. 
In the following, $\sim$ gives the asymptotic behaviour in $x$ up to a constant multiplicative factor.

\subsubsection*{Loop weight $1$~:}
\begin{equation}
\boxed{\sum_{\sigma \in \mathfrak{S}_{n+m}} W^{n,m}_{\sigma} P^{n+m}_\sigma(x) \sim  x^{-\tfrac{m^2-1}{2}}(\log x)^{\tfrac{m^2-1}{2}-n(n+1)}\,, \quad \text{for $n$ and $m$ odd}\,.}
\end{equation}
The case $m=1$ where $W^{n,1}_{\sigma}=1$ is consistent with Monte Carlo simulations in \cite{nahum}.

\subsubsection*{Loop weight $0$~:}

\begin{equation}
\label{eq:watermelon0}
\boxed{
 \sum_{\sigma \in \mathfrak{S}_{n+m}} W^{n,m}_{\sigma} P^{n+m}_\sigma(x) \sim
 \begin{cases}
 x^{-\tfrac{m^2}{2}}(\log x)^{\tfrac{m^2}{4}-\tfrac{n^2}{2}}\,, & \text{for $m\geq 2$ even and $n$ even} \,, \\
 x^{-\tfrac{m^2-1}{2}}(\log x)^{\tfrac{m^2}{4}-\tfrac{n^2}{2}+\tfrac{1}{4}}\,, & \text{for $n$ and $m$ odd}\,. \\
 \end{cases}}
\end{equation}

\subsubsection*{Loop weight $-1$~:}
One has access with $osp(1|2)$ to much less information. Keeping in mind that $P$ is not a probability in this case but only a ratio of two partition functions, one can still write
\begin{equation}
\boxed{
\begin{aligned}
\sum_{\sigma \in \mathfrak{S}_{m}} W^{0,m}_{\sigma} P^{m}_\sigma(x) &\sim x^{-\tfrac{m(m+2)}{2}}(\log x)^{\tfrac{m(m-2)}{6}}\,, \quad \text{for $m$ even} \,,\\
 \sum_{\sigma \in \mathfrak{S}_{1+m}}W^{1,m}_{\sigma} P^{1+m}_\sigma(x) & \sim x^{-\tfrac{m^2-1}{2}}(\log x)^{\tfrac{m^2+3}{6}}\,, \qquad \quad \text{for $m$ odd}\,.
\end{aligned}}
\end{equation}\\

\subsubsection{One leg \label{sec:oneleg}}
The information on the watermelon exponents for an odd number of legs is contained in the spin chains of odd size $L$. In particular the one-leg case corresponds to the order parameter. For a loop weight $0$, it corresponds to the case $m=0,n=0$ in \eqref{eq:osp22gapnm}, and from \eqref{gkappa2logL} this gives a gap
\begin{equation}
\frac{L^2 \Delta e_L}{2\pi v_F}=-\frac{1}{4\log L}\,,
\end{equation}
corresponding to the following behaviour of the order parameter 
\begin{equation}
\langle \phi(x)\phi(0)\rangle \sim (\log x)^{\tfrac{1}{2}}\,.
\end{equation}
For a loop weight $1$, the Bethe roots associated to the ground state of the $osp(3|2)$ model in odd size $L$ are composed of $(L-1)/2$ strings, that happen to give exactly the same energy $e_L$ in all odd sizes, as in the even size case. Thus the energy gap is exactly $0$ and one gets
\begin{equation}
\langle \phi(x)\phi(0)\rangle \sim 1\,.
\end{equation}
For a loop weight $-1$, the ground state in odd size corresponds to $m=0$ in \eqref{eq:osp12gapsym}, that gives a gap
\begin{equation}
\frac{L^2 \Delta e_L}{2\pi v_F}=-\frac{1}{3\log L}\,,
\end{equation}
hence
\begin{equation}
\langle \phi(x)\phi(0)\rangle \sim (\log x)^{\tfrac{2}{3}}\,.
\end{equation}
For these three cases we observe the behavior
\begin{equation}
\langle \phi(x)\phi(0)\rangle \sim (\log x)^{\tfrac{N-1}{N-2}}\,,
\end{equation}
(recall $N\equiv r-2s$) which will be discussed further in the conclusion.

\subsection{Away from integrability}

The integrable spin chains previously studied correspond to a crossing weight $w$ equal to $(2-N)/4$. For these values---and by using the Bethe-ansatz---
 we showed  that the leading logarithmic corrections are indeed  described by the supersphere sigma model. If there is universality,  this correspondance has no particular reason to hold only at the integrable point: the supersphere sigma model should be relevant to describe the long distance physics of the dense loop soups  for all finite crossing weights $w>0$.  Of course, away from the integrable point, this might be much more difficult to check, since then only direct numerical simulations are available. In Figure \ref{fig:logcor} we show as an example the measured logarithmic corrections corresponding to the 4-leg watermelon 2-point function, for different values of crossing weight $w$, for vanishing loop weight $N=0$.  The leading correction studied in this paper corresponds to the purple line. While Bethe-ansatz results show it does indeed give the correct results in the $L\to\infty$ limit, it is clear that for the sizes studied using direct transfer matrix diagonalization, next order corrections play an important role. It seems however that these corrections can be captured quite easily. The full solution to the RG equations for the sigma model coupling constant is 
 \begin{equation}
 {1\over g}={1\over g_0}+{2-N\over\kappa}\log(L/L_0)\,.
 \end{equation}
 Setting $L_0=1$ (i.e. measuring lengths in units of the lattice spacing) gives
 \begin{equation}
 g=\frac{\kappa}{2-N}\frac{1}{\log L+\tfrac{\kappa}{(2-N)g_0}}\label{RGsol}\,.
 \end{equation}
 Meanwhile, we find that for $N=0$ the numerical results can be collapsed approximately on (\ref{RGsol}) with ${\kappa\over 2g_0}\approx \pi w$. In other words, we have, to a very good approximation,   $g_0\approx {\kappa\over 2\pi w}$. Hence, we see that $w$ plays the role of the inverse bare coupling constant
 
 This observation suggests that at large $w$ the corrections to the gap can be obtained with the same formulas we have derived earlier in this paper, but by using, instead of the running coupling constant $g\propto 1/\log L$, the bare constant $g_0\propto 1/w$. Conversely, we can also imagine solving the problem at large $w$ by elementary means, hence ``re-deriving'' the formulas for the corrections.


 \begin{figure}[H]
 \begin{center}
\includegraphics[scale=0.35]{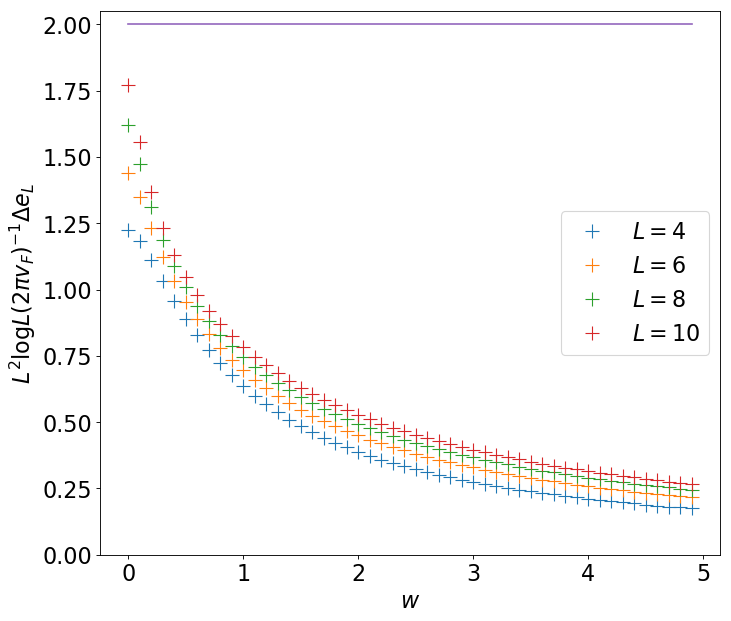} 
\includegraphics[scale=0.35]{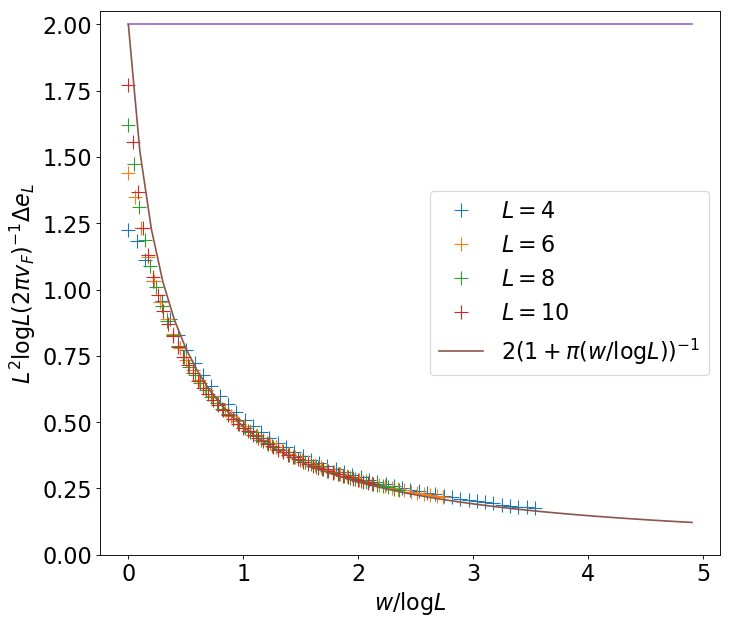} 
\end{center}
\caption{Left: measure of $\tfrac{L^2\log L}{2\pi v_F}\Delta e_L$ for the $4$-legs watermelon 2-point function with loop weight $0$, as a function of $w$ for different sizes. The limit value predicted by the supersphere sigma model is indicated in purple. The integrable point is $w=0.5$. Right: the same data as a function of $w/\log L$.}
\label{fig:logcor}
\end{figure}

%

To this end, we fix the weight for a loop to $N\geq 0$ and denote $T$ the transfer matrix on a cylinder of size $L\times M$. This transfer matrix acts on $L$ strands that are either connected to another strand, or are free, see Figure \ref{fig:extr}. The states are described by a vector space which can be decomposed into a direct sum $\oplus_k E_k$, where $E_k$ is the vector space generated by the states with $k$ free strands among $L$ strands, thus of dimension ${L\choose k}(L-k)!!$.

An important observation is that the transfer matrix, after building a row with $L$ tiles chosen from the three possible tiles in Figure \ref{fig:loopsoupne}, cannot create new free strands, i.e. $T E_k\subset \oplus_{k'\leq k}E_{k'}$. Hence, the transfer matrix $T$ is block-triangular and to find its eigenvalues one can work in a specific sector with a fixed number of free strands $k$. We will denote by $T_k$ the restriction of the transfer matrix to $E_k$ the sector with $k$ free strands.
\begin{figure}[H]
\begin{equation}
 \tikTra\qquad\qquad\qquad \tikTrb
\end{equation}
 \caption{Examples of two states in size $L=6$ with two free strands.}
 \label{fig:extr}
\end{figure}

Let us now study the limit of large crossing weight $w\to\infty$. In this limit the transfer matrix $T_k$ is dominated by choosing a crossing at each site, see Figure \ref{fig:row1}. It creates a loop in the horizontal direction and acts as the identity on the strands. Thus at leading order 
\begin{equation}
T_k=N w^{L} +O(w^{L-1})\,.
\end{equation}
 \begin{figure}[H]
 \begin{center}
\includegraphics[scale=0.35]{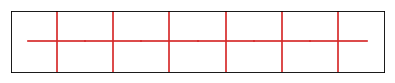}
\end{center}
\caption{The only dominant term in $T_k$ at order $w^L$: there are only tiles with crossings.}
\label{fig:row1}
\end{figure}
The next order is obtained by choosing a left or right corner among the $L$ sites, see Figure \ref{fig:row2}. It does not create any loops and acts as the identity on the other strands. Then
\begin{equation}
T_k=Nw^{L}+2L w^{L-1}+O(w^{L-2})\,.
\end{equation} 
 \begin{figure}[H]
 \begin{center}
\includegraphics[scale=0.35]{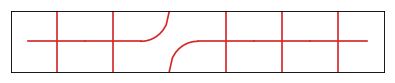}
\includegraphics[scale=0.35]{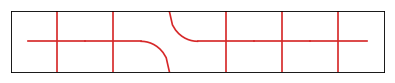}
\end{center}
\caption{Two examples of terms at order $w^{L-1}$ in $T_k$: there is only one tile without crossing.}
\label{fig:row2}
\end{figure}
Denote now $w^{L-2}R_k$ the transfer matrix that corresponds to the next order, i.e., that creates only two corners among the $L$ sites, see Figure \ref{fig:row3}. Since it commutes with the dominant order, one simply has to compute its dominant eigenvalue. Whatever is the connection between the strands, there are $4\cdot  {{L}\choose{2}}$ possibilities of placing the corners, but $2\cdot {{k}\choose{2}}$ of them (when the two corners are in opposite direction as in the two cases at the bottom in Figure \ref{fig:row3}) will connect $2$ of the $k$ free strands, which must not be counted; and among the other possibilities, $2\cdot \tfrac{L-k}{2}$ will create a loop. Thus the sum of the entries of each column of $R_k$ is always equal to $4\cdot  {{L}\choose{2}}-2\cdot {{k}\choose{2}}+(N-1)(L-k)$. Since the entries of $R_k$ are nonnegative ($N\geq 0$ case), one can conclude that the dominant eigenvalue of $R_k$ is exactly $4\cdot  {{L}\choose{2}}-2\cdot {{k}\choose{2}}+(N-1)(L-k)$. It follows that the dominant eigenvalue of $T_k$ at order $w^{L-2}$ is
\begin{equation}
\lambda_k=N w^L +2L w^{L-1}+w^{L-2}\left( 4\cdot  {{L}\choose{2}}-2\cdot {{k}\choose{2}}+(N-1)(L-k)\right)+O(w^{L-3})\,.
\end{equation}
 \begin{figure}[H]
 \begin{center}
\includegraphics[scale=0.35]{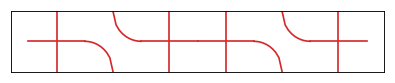}
\includegraphics[scale=0.35]{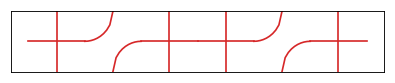}\\
\vspace{20pt}
\includegraphics[scale=0.35]{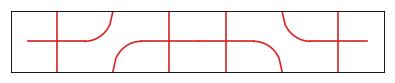}
\includegraphics[scale=0.35]{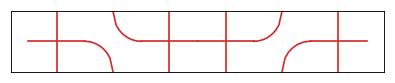}
\end{center}
\caption{Four examples of terms at order $w^{L-2}$ in $T_k$: there are two tiles without crossing among the $L$ tiles.}
\label{fig:row3}
\end{figure}
This expansion has been checked numerically. It can also be continued at order $w^{L-3}$. At order $w^{L-4}$ complications appear since at each transfer matrix step the number of possibilities depends on the state; and from order $w^{L-5}$ on, the interaction between different orders counts and  probably cannot be simply taken into account.


Assume now $N=0$. We have
\begin{equation}
-\log(\lambda_k/\lambda_0)=\frac{k(k-2)}{2}\frac{1}{wL}+O(w^{-2})\,,
\end{equation}
corresponding to a gap
\begin{equation}
\frac{L^2 \Delta e_L^k}{2\pi v_F}=\frac{k(k-2)}{2} \frac{1}{2\pi w}\,.
\end{equation}

This matches the result  equation (\ref{osp22gaps}) with $j=1/2,q=(k-1)/2$ (that corresponds to $1$ fermionic box and $k-1$ bosonic boxes in the Young tableau, equivalent to $k$ free bosonic strands on the infinite cylinder), and  the coupling constant must be put to $\kappa^{-1}g={1\over 2\pi w}$, in agreement with our earlier discussion. We conclude that, remarkably, the weak-coupling sigma model provides a very accurate description of the loop soup at large $w$ with the very simple correspondence $g\propto 1/w$. See Figure \ref{fig:logcor2} for further numerical evidence with other sectors. For reasons we do not quite understand, this simple correspondence 
 seems to be true only for $N=0$.\\

 \begin{figure}[H]
 \begin{center}
\includegraphics[scale=0.35]{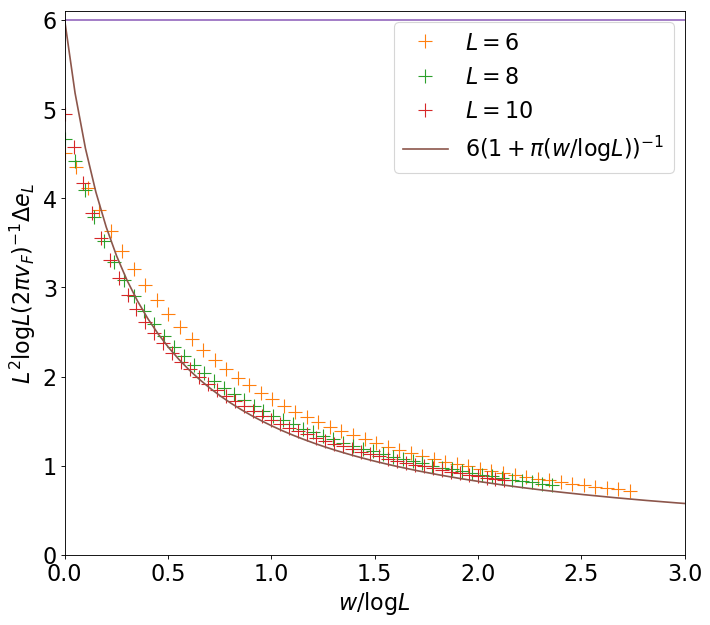} 
\includegraphics[scale=0.35]{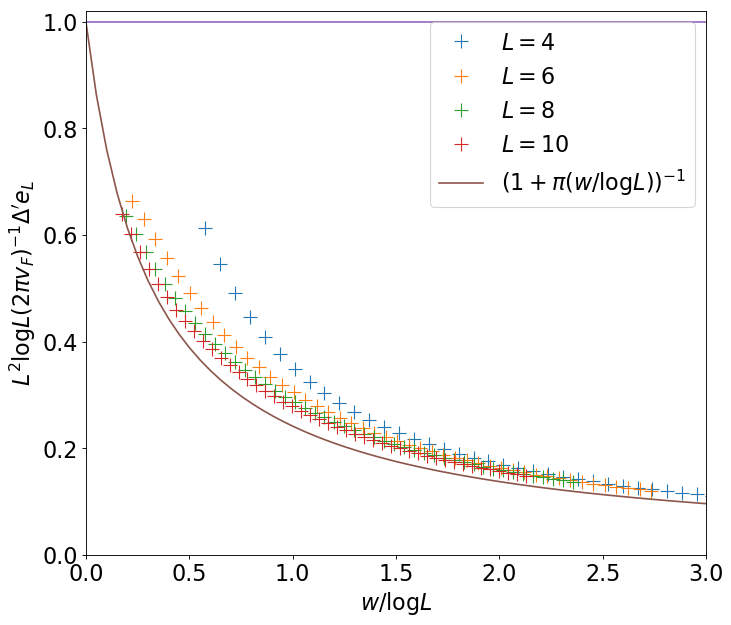} 
\end{center}
\caption{Left: measure of $\tfrac{L^2\log L}{2\pi v_F}\Delta e_L$ for the $6$-legs watermelon 2-point function with loop weight $0$, as a function of $w/\log L$ for different sizes. Right: measure of $\tfrac{L^2\log L}{2\pi v_F}\Delta' e_L$ with $\Delta ' e_L$ the difference between the $(q,j)=(1,1)$ sector and the $(q,j)=(0,1)$ sector,  with loop weight $0$, as a function of $w/\log L$ for different sizes. The limit values predicted by the supersphere sigma model are indicated in purple.}
\label{fig:logcor2}
\end{figure}

Let us nevertheless give some comments on the $N\neq 0$ case. One can compute the large $w$ regime of the eigenvalues as well:
\begin{equation}
-\log(\lambda_k/\lambda_0)=\frac{k(k+N-2)}{N}\frac{1}{w^2}+O(w^{-3})\,,
\end{equation}
which corresponds to  a gap
\begin{equation}
\label{eq:gaplargew}
\frac{L^2 \Delta e_L^k}{2\pi v_F}=\frac{k(k+N-2)}{N}\frac{L}{2\pi w^2}+O(w^{-3})\,.
\end{equation}
However the presence of the factor $L$ and the behavior in $w^{-2}$ suggest that the coupling constant $g$ at $N\neq 0$ could behave as $\propto (\log L +...+ \text{C }\times w^2/L)^{-1}$ where the dots indicate a term subdominant in $w$, and $C$ a constant. When $w\to\infty$ the $w^2/L$ part dominates and one observes the behavior \eqref{eq:gaplargew}. 
However it turns out that the finite-size corrections are not as easily captured as in the $N=0$ case. Note finally that the expansion \eqref{eq:gaplargew} for the energy gaps is valid for the singular case $N=2$ as well.

\section{Conclusion}

In a previous paper \cite{frahmmartins} the amplitude of logarithmic corrections for the states of the $osp(3|2)$ spin chain with spins $j=1/2$, $q$ (thus for only one degree of freedom) was interpreted as the value of the Casimir. This observation was  generalized in \cite{FM} for similar states and other  $osp(n|2m)$ models. It must be emphasized that the amplitude of the corrections in general is {\sl not given} by the Casimir. 

The emergence of the Casimir in this context can be traced back to the mini-superspace approach to the problem, discussed in detail in \cite{candu2}. In this approach, one neglects fluctuations of the sigma model fields along the space direction (i.e. along the spin chain, in the discretized version), and focusses only on fluctuations in the (imaginary) time direction. This corresponds to the ``particle limit'' of string theory, that is, quantum mechanics on the target---here a supersphere. In this limit, the Hamiltonian becomes proportional (with proportionality constant $g$) to the Laplacian on the target, an object that can easily be diagonalized using group theory. Using this approach, one finds easily that the scaled gaps in this approximation should be of the form
\begin{equation}
{L^2\over 2\pi v_F}\Delta e_L={l(l+r-4)\over 2(4-r)\log(L/L_0)}, L\to\infty\label{scaledgapssigma}\,,
\end{equation}
for the supersphere
 $OSp(r|2)/OSp(r-1|2)$. Of course, identical results are expected for the more general model based on $OSp(r|2s)$ provided $r-2s<2$ so the model flows to weak coupling in the IR. The combination  $r-4$ in (\ref{scaledgapssigma}) must then be replaced by $r-2s-2$:
\begin{equation}
{L^2\over 2\pi v_F}\Delta e_L={l(l+r-2s-2)\over 2(2+2s-r)\log(L/L_0)}, L\to\infty\label{scaledgapssigmai}\,.
\end{equation}
 For $s=0$, $r=3$,  that is the $O(3)$ sigma model, (\ref{scaledgapssigmai})  has been checked in great detail in \cite{balog3}. We recall that for $N>2$ the $O(N)$ models have different renormalization properties, as seen from the beta function above \eqref{scalingofg}, and that in this case it corresponds to the UV regime. 

Associated with (\ref{scaledgapssigmai}) in the case of ordinary spheres---that is, sigma models on $O(N)/O(N-1)$---are homogeneous symmetric polynomials   \cite{vilenkin,Garrettreview} in the order parameter components $\phi_i,i=1,\ldots, N$. They can be associated with fully symmetric representations  on $l$ boxes, with  $l=1$  corresponding to the fundamental  representation, i.e., the order parameter itself. Of course, from the general formula for the Casimir of $O(N)$ representations
\begin{equation}
{\cal C}=\sum_{i=1}^{[{N\over 2}]}\lambda_i^2+(N-2i)\lambda_i\,,
\end{equation}
for the Young tableau $(\lambda_1,\ldots,\lambda_{[{N\over 2}]})$, we see that the numerator of (\ref{scaledgapssigmai}) is just the Casimir of the corresponding representations.

This picture remains true in the case of superspheres with $N=r-2s$, although some care has to be taken  because of the complexity of $osp(r|2s)$ representation theory.

The point is that the minisuperspace approach is only able to give corrections to dimensions of the fields without derivatives - that is, in practice, corrections to scaled gaps which vanish in the limit $L\to\infty$. Whenever a scaled gap has a finite part, corresponding to a finite value of the critical exponent in the IR fixed point theory, the associated field must involve derivatives of the sigma model field, and the minisuperspace formula per se cannot be applied. There is   no reason in this case to expect the residue of the correction to still be given by the Casimir. 

This can be immediately checked on the three examples $osp(1|2)$, $osp(2|2)$ and $osp(3|2)$, where the logarithmic corrections given in  \eqref{osp12leadingcorr}, \eqref{osp22gaps}, and \eqref{eq:osp32logcorr2} do not match the expression of the Casimir in \eqref{eq:casimirosp12}, \eqref{eq:casimirosp22} and \eqref{eq:casimirosp32}. More precisely, we observe that only the bosonic part is given by the Casimir, the fermionic part for a spin $j$ half-integer being always given by $j^2$. Precisely one has the scaled gaps

\begin{equation}
{L^2\over 2\pi v_F}\Delta e_L^{j,q}=(h+\bar{h})+\frac{1}{(2-N)}\left(j^2-\frac{1}{2}q(q+N)+\frac{1-2N}{4} \right)\frac{1}{\log L/L_0}\,.
\end{equation}

In terms of the Young tableau describing the representation corresponding to the state, $2j$ is the number of 'fermionic' boxes (those aligned horizontally) and $q$ the number of 'bosonic' boxes (those aligned vertically).

It is intriguing to ask whether the amplitude of logarithmic corrections in the scaled gaps for more general states (with possibly different chiral and antichiral content)
\begin{equation}
{L^2\over 2\pi v_F}\Delta e_L^{\ldots}=(h+\bar{h})+{{\cal A}(\ldots)\over \log L}\label{moregengaps}\,,
\end{equation}
can be expressed simply in terms of the quantum numbers of the corresponding $osp$ representation, together with some other quantum numbers such as the orders of derivatives etc \cite{wegner1989}.  There does exist a related calculation of such an amplitude---deduced from the leading power law behavior of the two-point function of the corresponding operators---in the conformal case  \cite{cagnazzo,mitev}. However, since our sigma models are not conformal, there is no reason for these results  to apply here (except for $OSp(4|2)$). The fully symmetric case is an exception, unified by the mini-superspace approach, for which   
${\cal A}={C\over 2(2+2s-r)}$. 

We also note that the order parameter always appears with negative logarithmic corrections when compared with the group invariant singlet ground state. This is presumably related with the fact that the symmetry is spontaneously broken. Note that negative logarithmic corrections on the cylinder means correlation functions in the plane that grow logarithmically with distance. This is in agreement with the general perturbative result for the two point function of the order parameter at weak coupling for ordinary $O(N))$ models \cite{polyakov}:
\begin{equation}
\langle \phi_a(x)\phi_a(0)\rangle=\left[1-\hbox{C } g_\sigma^2(N-2)\log x\right]^{N-1\over N-2}\,.
\end{equation}
with $C$ a constant.
When $N<2$, the sign of the correction switches, showing that fluctuations do not destroy the spontaneous order---in agreement with the result that the symmetry is in fact broken. Taking this expression seriously gives long-distance correlations at weak coupling proportional to $(\log x)^{N-1\over N-2}$. This is in agreement with the result in (\ref{scaledgapssigma}) using equation (\ref{afflecklogeq}).

We also believe that  in \cite{frahmmartins} the charge associated to the continuum degree of freedom was wrongly interpreted as  the fermionic charge (the one that takes half-integer values), whereas it should be associated to the bosonic charge. We think the misunderstanding is linked to a question of highest-weight vector with respect to the $osp(3|2)$ algebra that is explained in section \ref{sec:osp32multiplet}, although the logarithmic corrections therein are numerically correct. Indeed in \cite{frahmmartins} states are studied that are said to belong to the $(0,q)$ sectors with $q\geq 1/2$, which seems to imply that the continuum is associated to the fermionic charge, see \cite{vanderjeugt} in which the only irreducible representation with a null fermionic charge is the trivial representation. In fact they belong to the $(1/2,q-1/2)$ sectors, and $q$ is indeed associated to the bosonic chargee. It is crucial to consider the charges of the highest-weight state (and not the Bethe state) to make the correspondance between the bosonic part of the logarithmic corrections and the Casimir; but also when discussing the loop configurations associated to the sector, see section \ref{sec:eigenvalues}.

A word on the spin chains with $sl(r|2s)^{(2)}$ symmetry: these chains also have the $osp(r|2s)$ symmetry and are critical, but they happen to be non-relativistic with a quadratic dispersion relation. The energy difference between the first excited states behaves as $L^{-3}$. Although at order $L^{-2}$ this gives rise, formally, to a continuum, the result cannot be related to any critical exponents because of the absence of conformal invariance.

An important aspect that is missing in our results is the calculation of the finite part of the density of states for the continuum component of the spectrum in the cases of $OSp(2|2)$ and $OSp(3|2)$. While it seems possible to determine this density for some values of the conformal weight using brute force and the Bethe-ansatz, we do not know for now how to obtain formulas in full generality, such as the ones checked (but not proved either) in \cite{ikhlef,ikhlef2}.

It is finally relevant to question the role of the periodic boundary conditions in our calculations. Different boundary conditions, e.g. open boundary conditions modify the scaled gaps and thus the critical exponents of the physical observables, and one can ask how this is translated in the field theory setup. Formula  \eqref{eq:pert}  in appendix \ref{sec:appc} for the modifications to the finite-size corrections from the Bethe equations is a first step in this direction.

\smallskip
\noindent{\bf Acknowledgments}: this work was supported in part by the Advanced ERC Grant NuQFT. We thank V. Mitev and Y. Ikhlef for discussions.

\smallskip

\appendix
\section{Change of grading \label{sec:grading}}
In the algebraic Bethe Ansatz with fermionic degrees of freedom, a choice has to be made on the grading used, i.e., to chose which index of the R-matrix is bosonic or fermionic. Different gradings lead to different Bethe equations and different expression for the eigenvalues and eigenvectors. But it turns out that in all the models considered one can pass from a grading to another by applying a transformation on the Bethe equations and  Bethe eigenvalues. It shows that \textit{for the eigenvalues}, all the gradings are equivalent, in the sense that if there is a set of Bethe roots in one grading that gives a precise eigenvalue, then there has to exist Bethe roots in all the other gradings (possibly degenerate) that give exactly the same eigenvalue. Nevertheless, nothing guarantees that the corresponding eigenvector in another grading will be non-zero. In other words, there may be some eigenvectors that we can build with the Bethe Ansatz only in particular gradings. All the gradings are equivalent for the eigenvalues, but not for the eigenvectors.\\

We present in the following the transformation in a rather general way. It is a generalization of what is presented in \cite{essler}. We assume that we have $r$ distinct families of $M_n$ Bethe roots $\lambda^n_i$, $n$ being the index of the family and $i$ the index of the root inside the family, and a set of reals $\alpha^n_{m,s}$, $n$ and $m$ indexing the families, and $s$ being an aditionnal index varying from $1$ to $u_{n,m}$ (that is needed for the Bethe equations to keep the same shape in another grading). We assume that the Bethe equations read for all $n=1,...,r$ and $i=1,...,M_n$:
\begin{equation}
\left(\prod_{s=1}^{u_{n,0}}\dfrac{\sinh(\lambda^n_i+i\gamma\alpha^n_{0,s})}{\sinh(\lambda^n_i-i\gamma\alpha^n_{0,s})}  \right)^L=\prod_{m=1}^r\prod_{j=1}^{M_m}\prod_{s=1}^{u_{n,m}}\dfrac{\sinh(\lambda^n_i-\lambda^m_j+i\gamma\alpha^n_{m,s})}{\sinh(\lambda^n_i-\lambda^m_j-i\gamma\alpha^n_{m,s})}\,,
\end{equation}
for $\gamma>0$ a parameter. Note that the case of non-$q$ deformed Bethe equations can be recovered by taking $\gamma\to 0$ and rescale $\lambda^n_i$ by $\gamma\lambda^n_i$.
We impose without making it explicit that if $n=m$ in the product then the condition $i\neq j$ has to be taken. Two important assumptions have to be made in order to do be able to do the transformation:
\begin{itemize}
\item symmetry of the Bethe equations: $\alpha^n_{m,s}=\alpha^m_{n,s}$ and $u_{n,m}=u_{m,n}$.
\item existence of a non-self-coupling family: there exists $n$ such that $\alpha^{n}_{n,s}=0$\,.
\end{itemize}
These assumptions are stable under the transformation, as it will be shown. They are satisfied for the models studied in this paper.\\

Let $n$ be such that $\alpha^{n}_{n,s}=0$. The first step is to rewrite the equation for the $n$-th family as being the root of the polynomial $P(X)$ that reads:
\begin{equation}
\begin{aligned}
P(X)=&\left( \prod_{s=1}^{u_{n,0}}  (X q^{\alpha^n_{0,s}}-q^{-\alpha^n_{0,s}})  \right)^L \prod_{m=1,\neq n}^r\prod_{j=1}^{M_m}\prod_{s=1}^{u_{n,m}} (X q^{-\alpha^n_{m,s}}-t_j^m q^{\alpha^n_{m,s}})\\
&-\left( \prod_{s=1}^{u_{n,0}}  (X q^{-\alpha^n_{0,s}}-q^{\alpha^n_{0,s}})  \right)^L \prod_{m=1,\neq n}^r\prod_{j=1}^{M_m}\prod_{s=1}^{u_{n,m}} (X q^{\alpha^n_{m,s}}-t_j^m q^{-\alpha^n_{m,s}})\,,
\end{aligned}
\end{equation}
where we set $t_j^m=\exp 2\lambda_j^m$ and $q=e^{i\gamma}$. The Bethe equation for the $n$-th family is thus $P(t_i^n)=0$. But $P$ is a ploynomial of degree $Lu_{n,0}+\sum_{m=1,\neq n}^r u_{n,m}M_m$ which is not necessarily equal to $M_n$ the number of Bethe roots in family $n$. Set $Lu_{n,0}+\sum_{m=1,\neq n}^r u_{n,m}M_m=M_n+M_n'$ and define $s_i^n=\exp 2\mu_i^n$ as the $M_n'$ other roots of $P$. These will be the Bethe roots of family $n$ in the other grading. They satisfy exactly the same Bethe equation as the former roots, but the equations for the other family are changed. All the $\lambda_j^n$ must be changed for the $\mu_j^n$. Consider first an $\alpha^m_{n,s}\neq 0$ that leads to a factor $A$ of the type:
\begin{equation}
A=\log \prod_{j=1}^{M_n}\dfrac{\sinh(\lambda^m_i-\lambda^n_j+i\gamma\alpha^m_{n,s})}{\sinh(\lambda^m_i-\lambda^n_j-i\gamma\alpha^m_{n,s})}=-2M_n\alpha^m_{n,s} \log q+ \sum_{j=1}^{M_n}\log \dfrac{t_i^m q^{2\alpha^m_{n,s}}-t_j^n}{t_i^m q^{-2\alpha^m_{n,s}}-t_j^n }\,.
\end{equation}
Define then the function $f(z)$ as:
\begin{equation}
f(z)=\log \dfrac{t_i^m q^{2\alpha^m_{n,s}}-z}{t_i^m q^{-2\alpha^m_{n,s}}-z}\,.
\end{equation}
In the complex plane, $(\log P)'(z)$ has $M_n+M_n'$ poles at $t_j^n$ and $s_j^n$, and $f(z)$ has a branch cut where the argument of the log is real negative, which is a segment from $t_i^m q^{2\alpha^m_{n,s}}$ to $t_i^m q^{-2\alpha^m_{n,s}}$. Consider a contour $\mathcal{C}$  encircling the roots $t_j^n$ and not the roots $s_j^n$, neither the branch cut. The residue theorem gives:
\begin{equation}
A=-2M_n\alpha^m_{n,s} \log q+\dfrac{1}{2i\pi}\oint_\mathcal{C}f(z)(\log P)'(z)dz\,.
\end{equation}
We now would like to deform the contour so that it encircles (in the other direction) the other poles. But then the branch cut enters in the integral, which becomes $2\pi i$ times the integral of $(\log P)'$ over the segment. It gives:
\begin{equation}
A=-2(M_n+M_n')\alpha^m_{n,s} \log q-\sum_{j=1}^{M_n'}\log \dfrac{t_i^m q^{\alpha^m_{n,s}}-s_j^nq^{-\alpha^m_{n,s}}}{t_i^m q^{-\alpha^m_{n,s}}-s_j^n q^{\alpha^m_{n,s}}}+\log\dfrac{P(t_i^m q^{2\alpha^m_{n,s}})}{P(t_i^m q^{-2\alpha^m_{n,s}})}\,.
\end{equation}
But using the symmetry $\alpha^m_{n,s}=\alpha^n_{m,s}$, one of the two terms in $P$ is zero when evaluated at these points, so that we get:
\begin{equation}
\begin{aligned}
\dfrac{P(t_i^m q^{2\alpha^m_{n,s}})}{P(t_i^m q^{-2\alpha^m_{n,s}})}=&-q^{2\alpha^m_{n,s}(Lu_{n,0}+\sum_{m'=1,\neq n}^ru_{n,m'})}\left(\prod_{s'=1}^{u_{n,0}}\dfrac{t_i^m q^{-\alpha^n_{0,s'}+\alpha^m_{n,s}}-q^{\alpha^n_{0,s'}-\alpha^m_{n,s}}}{t_i^m q^{\alpha^n_{0,s'}-\alpha^m_{n,s}}-q^{-\alpha^n_{0,s'}+\alpha^m_{n,s}}} \right)^L\\
&\times \prod_{m'=1,\neq n}^r\prod_{j=1}^{M_{m'}}\prod_{s'=1}^{u_{n,m'}} \dfrac{t_i^m q^{\alpha^m_{n,s}+\alpha^{n}_{m',s'}}-t_j^{m'} q^{-\alpha^m_{n,s}-\alpha^{n}_{m',s'}}}{t_i^m q^{-\alpha^m_{n,s}-\alpha^{n}_{m',s'}}-t_j^{m'} q^{\alpha^m_{n,s}+\alpha^{n}_{m',s'}}}\,.
\end{aligned}
\end{equation}
Avoiding the term $i=j$ for $m'=m$ for every $s'$ (which is present by symmetry of the $\alpha$'s, since $\alpha^m_{n,s}\neq 0$), the $-$ factor becomes $(-1)^{u_{n,m}-1}$. Since there are $u_{n,m}$ such multiplicative factors, they contribute to $1$. The equations can be transformed back into $\sinh$ form, so that we get the Bethe equations for all $k=1,...,r$ and $i=1,...,M_k$:

\begin{equation}
\left(\prod_{s=1}^{u_{k,0}}\dfrac{\sinh(\lambda^k_i+i\gamma\kappa^k_{0,s})}{\sinh(\lambda^k_i-i\gamma\kappa^k_{0,s})}  \right)^L=\prod_{m=1}^r\prod_{j=1}^{M_m}\prod_{s=1}^{u_{k,m}}\dfrac{\sinh(\lambda^k_i-\lambda^m_j+i\gamma\kappa^k_{m,s})}{\sinh(\lambda^k_i-\lambda^m_j-i\gamma\kappa^k_{m,s})}\,,
\end{equation}
with for $m,m'\neq n$:
\begin{equation}
\label{changegrading}
\begin{aligned}
&\{\kappa^m_{0,s}\}_s=\{ \alpha^m_{n,s}-\alpha^n_{0,s'}\}_{s',s}\cup \{ -\alpha^m_{0,s'}\}_{s'}\\
&\{\kappa^m_{n,s}\}_s=\{ \alpha^m_{n,s}\}_{s}\\
&\{\kappa^m_{m',s}\}_s=\{-\alpha^m_{m',s}\}_{s}\cup \{ -\alpha^m_{n,s}-\alpha^n_{m',s'}\}_{s,s'}\,,
\end{aligned}
\end{equation}
and for all $m$:
\begin{equation}
\kappa^n_{m,s}=\alpha^n_{m,s}\,,
\end{equation}
We recall that in these new Bethe equations, the new $M_n$ is the former $M_n'$, and the new $\lambda^n_i$ are the former $\mu^n_i$. The two assumptions are still satisfied in this new grading, since the $\alpha$'s are symmetric and since the family $n$ is still non-self-coupling (it is the only whose Bethe roots changed, but whose Bethe equations did not). We stress the fact that in the formulas \eqref{changegrading}, according to our conventions, a $\kappa$ equal to zero must not be counted, a $\kappa$ cancels a $-\kappa$, and importance must be given to the range of the $s$ and $s'$ (in particular in $\{...\}_{s,s'}$ if one of them sums over the empty set then the whole set is empty).

For example, for $osp(2|2)$ one has $\alpha^1_0=\alpha^2_0=1/2$, $\alpha^1_1=\alpha^2_2=0$, $\alpha^1_2=\alpha^2_1=1$. After the transformation it gives no source term for the second family, and $\kappa^2_1=1$, $\kappa^2_2=-2$.

\section{Degeneracies of the $osp(4|2)$ model\label{sec:osp42}}

To give numerical support to the arguments in section \ref{sec:osp42deg}, we present here the results of an exact diagonalization in finite-size $L=2,4,6$ and the root structure corresponding to each state. The degeneracy in parentheses corresponds to the degeneracies of the same eigenvalue for $su(2)$. We use the abbreviation `s' for string, `t' for theta root. $su(2)'$ means that it is a solution of  the XXX Bethe equations \eqref{BExxx} with a multiplicative $-1$ factor.


\subsubsection*{$L=2$}
\begin{center}
\begin{tabular}{|c|c|c|c|}
\hline
eigenvalue & particularity & degeneracy  & roots \\
\hline
$e_1$ & su(2) & 1 (1)& degenerate \\
$e_2$ & $-e_3$ & 18 & void \\
$e_3$ & su(2) & 17 (3) & 1 t at 0 \\
\hline
\end{tabular}
\end{center}

\subsubsection*{$L=4$}
\begin{center}
\begin{tabular}{|c|c|c|c|}
\hline
eigenvalue & particularity & degeneracy & roots \\
\hline
$e_1$ & su(2) & 1 (1) & 2 s degenerate \\
$e_2$ & $-e_3$ & 18  & 1 s at 0 \\
$e_3$ &su(2)& 17 (3) & 1 s at 0, 1 t at 0 \\
$e_4$& su(2) & 1 (1) & (s in su(2))  \\
$e_5$ &su(2)& 17  & 1 s at 0+, 1 t at 0- \\
$e_6$ &su(2)& 17 & 1 s at 0-, 1 t at 0+ \\
$e_7$ &$-e_5$& 18 (3) & 1 s at 0+ \\
$e_8$ &$-e_6$& 18 (3) & 1 s at 0- \\
$e_9$ &su(2)& 307 (5) & void\\
$e_{10}$ &$-e_9$&  306  & 1 t at 0\\
$e_{11}$ &$\approx i e_9$& 288  & 1 t $>$ 0\\
$e_{12}$ &$\approx -i e_9$& 288 & 1 t $<$ 0\\
\hline
\end{tabular}
\end{center}

\subsubsection*{$L=6$}
\begin{center}
\begin{tabular}{|c|c|c|c|}
\hline
eigenvalue & particularity & degeneracy & roots \\
\hline
$e_1$ & su(2) & 1 (1) & 3 s degenerate \\
$e_2$ & $-e_3$ &  18 & 2 s around 0 \\
$e_3$ & su(2) & 17 (3) & 2 s around 0, t at 0 \\
$e_4$ & su(2) & 1 (1) & (s in su(2))\\
$e_5$ &su(2)  &  17 (3) & 2 s $> $ 0 , t at 0 \\
$e_6$ &su(2)  & 17 (3) &2 s $< $ 0 , t at 0 \\
$e_7$ & $-e_5$  & 18 & 2 s $> $ 0 \\
$e_8$ & $-e_6$ & 18& 2 s $< $ 0\\
$e_9$ &su(2)  &  17 (3) & 1 s $> $ 0 , t at 0 \\
$e_{10}$ &su(2)  & 17 (3) & 1 s $< $ 0 , t at 0 \\
$e_{11}$ & $-e_9$  & 18 &1 s $> $ 0 \\
$e_{12}$ & $-e_{10}$ & 18& 1 s $< $ 0 \\
$e_{13}$& su(2) & 307 (5)& 1 s at 0  \\
$e_{14}$ & $-e_{13}$& 306 & 1 s at 0, 1 t at 0 \\
$e_{15}$ & su(2)'& & 1 s $>$ 0, 1 t at 0 \\
$e_{16}$ & su(2)'& & 1 s $<$ 0, 1 t at 0 \\
... & ... & ... & ... \\
\hline
\end{tabular}
\end{center}
When the Hamiltonian limit is taken, a link with the degeneracies derived in \cite{candu2} could be studied.

\section{Logarithmic corrections from the Bethe ansatz \label{sec:appc}}
\subsection{Generalities}
In this appendix we relate the previously derived logarithmic corrections to the parameters of the Bethe equations, and give a formula for the perturbation of the exponents $h+\bar{h}$ and $\alpha$ in \eqref{afflecklogeq} when the Bethe equations are perturbed by an additional source term at one site, that occurs for example when the boundary conditions are modified.

To fix the ideas, we consider the $su(2)$ or $osp(1|2)$ spin chains that can both be recast into the logarithmic form
\begin{equation}
z_L(\lambda)=s(\lambda)+\frac{t(\lambda)}{L}-\frac{1}{L}\sum_i r(\lambda-\lambda_i)\,,
\end{equation}
where the Bethe roots $\lambda_i$ satisfy $z_L(\lambda_i)=I_i/L$ with $I_i$ a Bethe number. The energy $e_L$ is then
\begin{equation}
e_L=-\frac{2\pi}{L}\sum_i s'(\lambda_i)\,,
\end{equation}
$s$ and $r$ are functions that read
\begin{equation}
\label{rs}
\begin{aligned}
s(\lambda)&=\frac{1}{\pi}\arctan 2\lambda\,, \quad \text{for $su(2)$ and $osp(1|2)$}\\
r(\lambda)&=\frac{1}{\pi}\arctan \lambda\,, \quad \text{for $su(2)$}\\
&=\frac{1}{\pi}\arctan \lambda-\frac{1}{\pi}\arctan 2\lambda\,,\quad \text{for $osp(1|2)$}\,,
\end{aligned}
\end{equation}
and $t(\lambda)$ is an additional source term that is zero for periodic boundary conditions, but that can be non-zero in case of isolated roots (such as 'strings') or open boundary conditions. It is assumed to be odd and continuous, and to satisfy the expansion \eqref{eq:decay} hereafter. We define $r_\infty$, $\eta$, $t_\infty$, $\eta_t$ by the expansion at large $\lambda>0$
\begin{equation}
\label{eq:decay}
r(\lambda)=r_\infty-\frac{\eta}{\lambda}+o(\lambda^{-1})\,,\quad t(\lambda)=t_\infty-\frac{\eta_t}{\lambda}+o(\lambda^{-1})\,.
\end{equation}
Moreover we have the Fermi velocities $v_F=\pi$ for $su(2)$ and $v_F=2\pi/3$ for $osp(1|2)$.

\subsection{Perturbation to the critical exponents}
Assume that there are $n_\pm$ vacancies in the positive/negative roots, at the outmost positions, and define $\epsilon_L$ by
\begin{equation}
e_L=e_\infty+\frac{f}{L}+\frac{2\pi v_F}{L^2}\epsilon_L\,,
\end{equation}
with a certain surface energy term $f$. Our result is that for $su(2)$ and $osp(1|2)$, for symmetric stated $n_+=n_-\equiv n$, it reads
\begin{equation}
\label{eq:pert}
\boxed{
\begin{aligned}
\epsilon_L=&-\frac{1}{12}+\frac{t_\infty^2}{1+2r_\infty}+n^2(1+2r_\infty)+2t_\infty n\\
&-\frac{v_F}{\log L}\left(\eta n^2+2\left(\eta_t-\eta\frac{t_\infty}{1+2r_\infty} \right)n -3\eta \left(\frac{t_\infty}{1+2r_\infty} \right)^2+2\eta_t \frac{t_\infty}{1+2r_\infty}\right)\,.
\end{aligned}
}
\end{equation}
For asymmetric states $n_+\neq n_-$, we conjecture the following formula
\begin{equation}
\label{eq:pertasym}
\begin{aligned}
\epsilon_L=&-\frac{1}{12}+\frac{t_\infty^2}{1+2r_\infty}+\frac{(n_++n_-)^2}{4}(1+2r_\infty)+\frac{(n_+-n_-)^2}{4}\frac{1}{1+2r_\infty}+t_\infty (n_++n_-)\\
&-\frac{v_F}{\log L}\left(\eta n_+n_-+\left(\eta_t-\eta\frac{t_\infty}{1+2r_\infty} \right)(n_++n_-) -3\eta \left(\frac{t_\infty}{1+2r_\infty} \right)^2+2\eta_t \frac{t_\infty}{1+2r_\infty}\right)\,.
\end{aligned}
\end{equation}

The physical meaning of \eqref{eq:pert} is the perturbation of the critical exponents (those of the algebraic decay, as well as those of the logarithmic decay) when a modification of the system can be recast into the perturbation of the Bethe equations by an odd function $t(\lambda)/L$.
\subsubsection*{Sketch of the derivation}

It is not the purpose of this paper to give a detailed proof to would add too many technicalities; however we give the main lines of the derivation, using the work done in \cite{granetjacobsensaleur}. 
The logarithmic corrections appear here because the function $r(\lambda)$ decays algebraically at infinity and not exponentially like in the XXZ case. Denoting $S_L(\phi)=\tfrac{1}{L}\sum_i\phi(\lambda_i)$ the sum of a function over the Bethe roots, and $w_L=S_L-S_\infty$, one first establishes that $w_L(\phi)$ can be written as
\begin{equation}
w_L(\phi)=\frac{1}{L}\int t^{\prime \rm dr}\phi+\sum_{\omega, 1+\hat{r'}(\omega)=0}A_\omega \hat{\phi}(\omega)+\sum_{n>0}B_n^\pm \{ \phi^{\rm dr}\}_n^\pm\,,
\end{equation}
with $\widehat{(f^{\rm dr})}=\hat{f}(1+\hat{r'})^{-1}$ and $\{f\}_n^\pm$ the coefficient in $x^{-n}$ in the expansion of $f(x)$ at $\pm\infty$. The terms in $B_n^\pm$ are possible even if $\phi$ does not decay algebraically, because $r$ does, contrary to the XXZ case. Assume for simplicity that $n_+=n_-\equiv n$, i.e. the Bethe roots are symmetric. The energy of the chain at order $L^{-2}(\log L)^{-1}$ is given by $e_L=e_\infty+\tfrac{f}{L}-2\pi\hat{s'}(iv_F)(A_{iv_F}+A_{-iv_F})$ with $f$ some surface energy. Denote $\bar{w}_L(\phi)=w_L(\phi)-\tfrac{1}{L}\int t^{\prime \rm dr} \phi$. Decomposing
\begin{equation}
z_L(\lambda)=z_\infty(\lambda) +\frac{t(\lambda)-(t^{\prime \rm dr}\star r)(\lambda)}{L}-\bar{w}_L(r(\lambda-\cdot))\,,
\end{equation}
one finds by computing $S_L(|z_L|)$ by two ways, first using $z_L(\lambda_i)=I_i/L$, second using $S_L=S_\infty+\bar{w}_L+\tfrac{t^{\prime \rm dr}}{L}$, that
\begin{equation}
\begin{aligned}
(A_{iv_F}+A_{-iv_F})\hat{s'}(iv_F)&=-\frac{v_F}{L^2}\left(-\frac{1}{12}+(1+2r_\infty)(n+\varphi)^2\right)\\
&-\frac{v_FB_1^\pm}{L(1+2r_\infty)}\left( \eta (n+\varphi)+2\left(\eta_t-\eta \frac{2t_\infty}{1+2r_\infty} \right)\right)\,,
\end{aligned}
\end{equation}
with $\varphi=\underset{\lambda\to\infty}{\lim} t(\lambda)-(t^{\prime \rm dr}\star r)(\lambda)=t_\infty/(1+2r_\infty)$. To determine the $B_1^\pm$, without the function $t$ it would be proportional to $n/\log L$, as it can be seen by computing $S_L(1/\log(\alpha-z_L))$ with $\alpha=\underset{\lambda\to\infty}{\lim}z_\infty(\lambda)$. But the presence of $t$ acts like an 'odd' twist $\pm \varphi$ for positive/negative roots, and it amounts to changing $n$ by $n+\varphi$. Finally
\begin{equation}
B_1^\pm=-v_F(1+2r_\infty)\frac{n+\varphi}{L\log L}\,,
\end{equation}
hence formula \eqref{eq:pert} in the case $n_+=n_-=n$.

\subsection{Numerical results \label{sec:numerics}}
In this subsection are given numerical checks of formula \eqref{eq:pert}. The Bethe equations are solved numerically for sizes up to $\approx 1500$, and the results are extrapolated to the thermodynamic limit using a ratio of two polynomials in $\log L$. Many extrapolated results only slightly change with the degrees of the polynomials or with the sizes that are used in the extrapolation; however some cases with a wilder extrapolating curve such as the rightmost case in Figure \ref{fig:su2open} do vary more, although the global shape of the curve is often stable.
\subsubsection{Periodic $su(2)$}
The periodic $su(2)$ case is obtained with $r(\lambda)=\tfrac{1}{\pi}\arctan\lambda$ and $t(\lambda)=0$, hence $v_F=\pi$, $1+2r(\infty)=2$, $t(\infty)=0$, $\eta=\tfrac{1}{\pi}$ and $\eta_t=0$. It yields
\begin{equation}
\epsilon_L=-\frac{1}{12}+2n^2-\frac{n^2}{\log L}
\end{equation}
Only this result was already known \cite{woynaeckle,hamerbatchelorbarber}.

\subsubsection{Open $su(2)$}
The open $su(2)$ case with trivial boundary matrix $K$ has the following Bethe equations
\begin{equation}
\left(\frac{\lambda_i+i/2}{\lambda_i-i/2}\right)^L=\prod_{j\neq i}\frac{\lambda_i-\lambda_j+i}{\lambda_i-\lambda_j-i}\frac{\lambda_i+\lambda_j+i}{\lambda_i+\lambda_j-i}
\end{equation}
where the roots $\lambda_i$ are strictly positive. One can rewrite it with a usual root configuration $\mu_i$ by considering only symmetric root structures, i.e. set of roots that contains $-\mu_i$ if it contains $\mu_i$, and adding the appropriate source term
\begin{equation}
\left(\frac{\mu_i+i/2}{\mu_i-i/2}\right)^L\frac{2\mu_i+i}{2\mu_i-i}\frac{\mu_i+i}{\mu_i-i}=\prod_{j\neq i}\frac{\mu_i-\mu_j+i}{\mu_i-\mu_j-i}
\end{equation}
which is obtained with $r(\lambda)=\tfrac{1}{\pi}\arctan\lambda$ and $t(\lambda)=\tfrac{1}{\pi}\arctan 2\lambda+\tfrac{1}{\pi}\arctan\lambda$, hence $1+2r(\infty)=2$, $t(\infty)=1$, $\eta=\tfrac{1}{\pi}$ and $\eta_t=\tfrac{3}{2\pi}$. For $L/4-n$ roots $\lambda_i$, there are $L/2-2n+1$ roots $\mu_i$, that yield
\begin{equation}
\epsilon_L=-\frac{1}{12}+2n^2-\frac{n(n+1)}{\log L}
\end{equation}
\begin{figure}[H]
 \begin{center}
\includegraphics[scale=0.38]{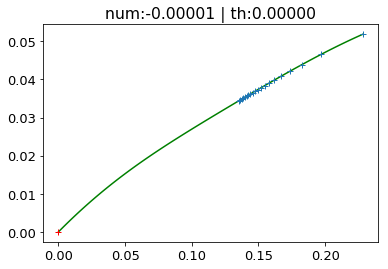}
\includegraphics[scale=0.38]{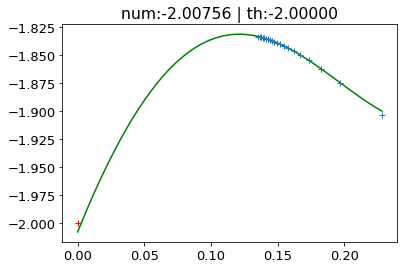}
\includegraphics[scale=0.38]{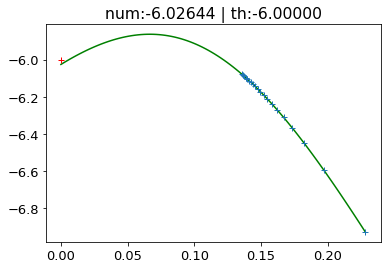}
\end{center}
\caption{Logarithmic correction to the open $su(2)$ states for $n=0,1,2$ (from left to right). The measured limit value and the theoretical values are indicated above the plots.}
 \label{fig:su2open}
\end{figure}

\subsubsection{Periodic $osp(1|2)$}
For periodic $osp(1|2)$ with $L-1-2n$ real roots, one has $v_F=\tfrac{2\pi}{3}$, $1+2r(\infty)=1$, $t(\infty)=0$, $\eta=\tfrac{1}{2\pi}$ and $\eta_t=0$, that yield
\begin{equation}
\epsilon_L=-\frac{1}{12}+(n+\tfrac{1}{2})^2-\frac{(n+\tfrac{1}{2})^2}{3\log L}
\end{equation}
\begin{figure}[H]
 \begin{center}
\includegraphics[scale=0.37]{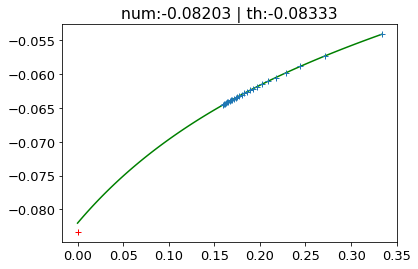}
\includegraphics[scale=0.37]{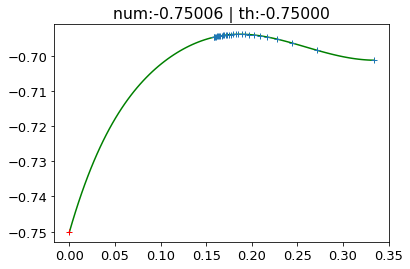}
\includegraphics[scale=0.37]{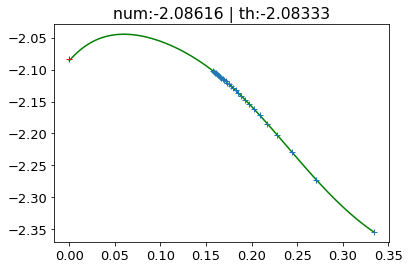}
\end{center}
\caption{Logarithmic correction to the periodic $osp(1|2)$ states for $n=0,1,2$ (from left to right). The measured limit value and the theoretical values are indicated above the plots.}
 \label{fig:su2open}
\end{figure}

\subsubsection{Periodic $osp(1|2)$ with strings}
In case of an exact $2$-string at $0$ for periodic $osp(1|2)$ with $L-2-2n$ other real roots, one has $1+2r(\infty)=1$, $\eta=\tfrac{1}{2\pi}$, $t(\lambda)=\tfrac{1}{\pi}\arctan\lambda-\tfrac{1}{\pi}\arctan 2\lambda-\tfrac{1}{\pi}\arctan 2\lambda/3$, hence $t(\infty)=-\tfrac{1}{2}$, and $\eta_t=-\tfrac{1}{\pi}$, that yield
\begin{equation}
\epsilon_L=\frac{1}{6}+n(n+1)-\frac{(n+1)(n-2)+\tfrac{5}{4}}{3\log L}
\end{equation}

\begin{figure}[H]
 \begin{center}
\includegraphics[scale=0.37]{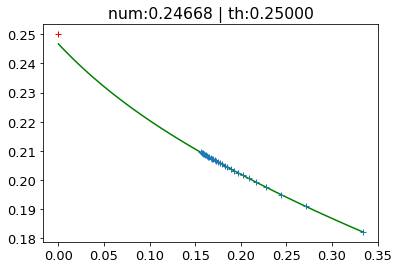}
\includegraphics[scale=0.37]{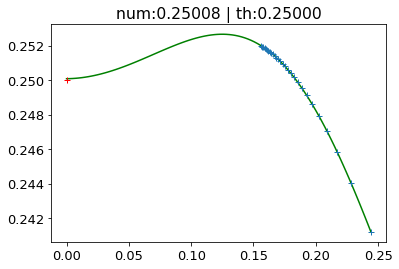}
\includegraphics[scale=0.37]{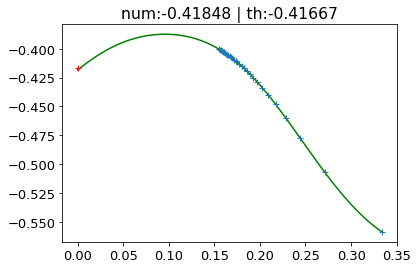}
\end{center}
\caption{Logarithmic correction to the periodic $osp(1|2)$ states with an exact string at $0$ for $n=0,1,2$ (from left to right). The measured limit value and the theoretical values are indicated above the plots.}
 \label{fig:su2open}
\end{figure}

\subsubsection{Open $osp(1|2)$}
For open $osp(1|2)$ with $L/2-n$ positive roots, hence $L-2n+1$ normal roots, one has $1+2r(\infty)=1$, $\eta=\tfrac{1}{2\pi}$, $t(\lambda)=\tfrac{1}{\pi}\arctan\lambda-\tfrac{1}{\pi}\arctan 4\lambda$, hence $t(\infty)=0$, and $\eta_t=\tfrac{3}{4\pi}$, that yield
\begin{equation}
\epsilon_L=-\frac{1}{12}+(n-\tfrac{1}{2})^2-\frac{(n-\tfrac{1}{2})(n+\tfrac{5}{2})}{3\log L}
\end{equation}

\begin{figure}[H]
 \begin{center}
\includegraphics[scale=0.37]{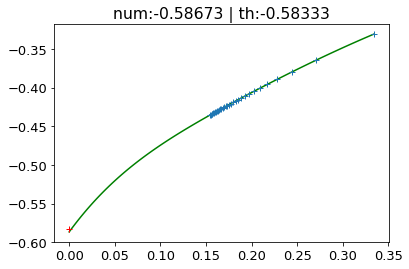}
\includegraphics[scale=0.37]{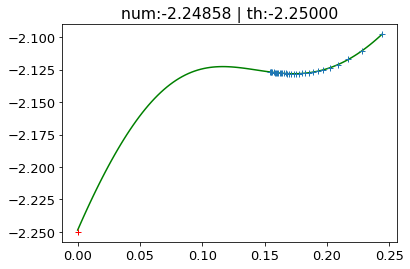}
\includegraphics[scale=0.37]{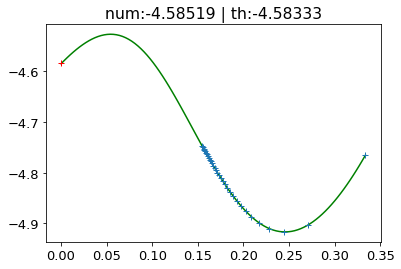}
\end{center}
\caption{Logarithmic correction to the open $osp(1|2)$ for $n=0,1,2$ (from left to right). The measured limit value and the theoretical values are indicated above the plots.}
 \label{fig:su2open}
\end{figure}

\subsection{Loop configurations for $osp(1|2)$ with open boundary conditions}
In case of open boundary conditions with a trivial reflection matrix $K=Id$ \cite{arnaudonavan}, the $osp$ spin chains can be interpreted as loop soups with crossings with periodic boundary conditions in the vertical direction, and with particular boundary conditions in the horizontal direction. It is depicted in Figure \ref{fig:loopOpen}: if the strands cross the left or right boundaries, they are folded back onto the upper or lower adjacent row.

 \begin{figure}[H]
 \begin{center}
\includegraphics[scale=0.5]{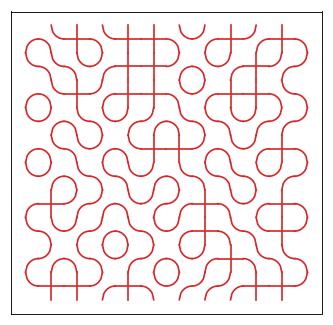}
\end{center}
\caption{A configuration of a dense loop soup with crossings, with 'open' boundary conditions.}
\label{fig:loopOpen}
\end{figure}
An explicit correspondance between the eigenvalues of the transfer matrix, and the constraint induced on the loops with these boundaries can be established like in the periodic case, see Table \ref{tab:osp12open}.

\ytableausetup{smalltableaux}
\ytableausetup{aligntableaux=center}
\begin{table}[H]
\begin{center}
\begin{tabular}{|c|c|c|c|}
\hline
Eigenvalue & Bethe roots $\{\lambda\}$ & $J_z$ & Constraint\\
\hline
\hline
35061 &$\{0.123,0.256,0.414,0.631,1.00 \}$ & $1/2$ &  
  \begin{ytableau}
   \none & *(lightgray) 1 \\
   1\\
  \end{ytableau}\\
\hline
29945 &$\{0.126,0.263,0.430,0.676,1.17+0.47i,1.17-0.47i \}$ & $0$ & no constraint\\
\hline
17050 &$\{0.122,0.254,0.410,0.627 \}$ & $1$ & \ytableaushort{1,2}\\
\hline
3656.1 &$\{0.123,0.256,0.418 \}$ & $3/2$ & 
  \begin{ytableau}
   \none & *(lightgray) 1 \\
   1\\
   2\\
   3
  \end{ytableau}\\
\hline
349.78 &$\{0.126,0.266\}$ & $2$ & \ytableaushort{1,2,3,4}\\
\hline
\end{tabular}
\end{center}
\caption{Correspondance between transfer matrix eigenvalues, Bethe roots, charges, and loop configurations for the open $osp(1|2)$ case in size $L=6$ at the isotropic integrable point.}
\label{tab:osp12open}
\end{table}
The previous results can be used then to determine the watermelon exponents of these configurations. We remark that the exponents in case of open boundaries display the same quadratic part, but a different linear part compared to the periodic boundary case.

\section{Proof of lemma \ref{lemma} \label{app:lemma}}
We give in this appendix a proof of lemma \ref{lemma}.

\begin{proof}
First note that the absolute value of the weight of the configuration is independent of the indices of the loops that compose it, so that only the sign can change. This will precisely occur because of fermionic signs and the fact that $(-1)^{p_b}=-(-1)^{p_f}$ if $b$ is bosonic ($p_b = 0$) and $f$ fermionic $(p_f = 1)$. 

Let us choose a loop $l$ in the configuration (see Figure~\ref{theloopl}) and assume first that it does not cross the boundaries and does not intersect itself (but it can intersect other loops). After the change of index, all the $(-1)^{p_a}$ with $a$ lying on the loop are multiplied by $-1$.
\begin{figure}[H]
\begin{center}
\begin{tikzpicture}[scale=0.5]
\begin{scope}[xscale=-1]
\foreach \k in {0.,1.,...,9.}
\draw[dotted] (0,\k) --++ (9,0);
\foreach \k in {0.,1.,...,9.}
\draw[dotted] (\k,0) --++ (0,9);
\draw (2,3) -- (2,7)--(4,7)--(4,6)--(6,6)--(6,7)--(7,7)--(7,4)--(6,4)--(6,3)--(7,3)--(7,2)--(4,2)--(4,3)--(2,3) ;
\end{scope}
\end{tikzpicture}
\end{center}
\caption{The loop $l$. The other loops are not drawn. 
} \label{theloopl}
\end{figure}
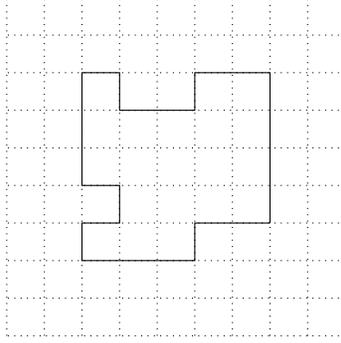
Let us first consider the term $(-1)^{p_ip_j}$ in the $I$ term in \eqref{eq:r}, that is present at each NW or SE corner. After the change of index, each NW and each SE corner thus contributes to a $-1$. In Figure~\ref{looplredcrosses} we drew a red cross at an edge around each such corners to indicate this additional sign term.
 
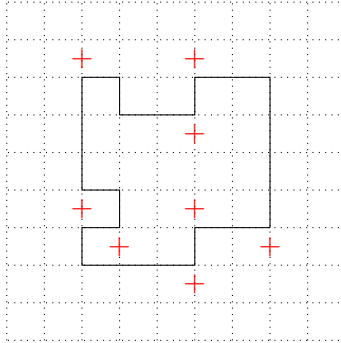
\begin{figure}[H]
\begin{center}
\begin{tikzpicture}[scale=0.5]

\begin{scope}[xshift=10cm,xscale=-1]
\foreach \k in {0.,1.,...,9.}
\draw[dotted] (0,\k) --++ (9,0);
\foreach \k in {0.,1.,...,9.}
\draw[dotted] (\k,0) --++ (0,9);
\draw (2,3) -- (2,7)--(4,7)--(4,6)--(6,6)--(6,7)--(7,7)--(7,4)--(6,4)--(6,3)--(7,3)--(7,2)--(4,2)--(4,3)--(2,3) ;
\draw (2,2.5) node{\color{red} $+$};
\draw (4,1.5) node{\color{red} $+$};
\draw (6,2.5) node{\color{red} $+$};
\draw (4,3.5) node{\color{red} $+$};
\draw (7,3.5) node{\color{red} $+$};
\draw (7,7.5) node{\color{red} $+$};
\draw (4,5.5) node{\color{red} $+$};
\draw (4,7.5) node{\color{red} $+$};
\end{scope}
\end{tikzpicture}
\end{center}
\caption{The red crosses indicate an edge around each corner for which a $(-1)^{p_a}$ appears after changing the index of the loop, taking into account the term $(-1)^{p_ip_j}$ in \eqref{eq:r}.}
\label{looplredcrosses}
\end{figure}

Let us study now the term $(-1)^{\sum_{m=1}^M\sum_{j=2}^L(p_{\alpha_j^m}+p_{\alpha^{m+1}_j})\sum_{i=1}^{j-1}p_{\alpha^m_i}}$. This one 'links' $\alpha^m_i$ and $\alpha^n_j$ whenever $m=n\pm 1$ and $i\gtrless j$, or $m=n$ and $i\neq j$. In the left panel of Figure \ref{looplbluecrosses} the straight blue lines cross all the vertical edges that this sign term 'link' to the vertical edge indicated by a black cross. Notice that a straight line that begins inside the loop and that goes out intersects it an odd number of times; a straight line that begins outside the loop and that goes inside it and comes out intersects it an even number of times. 

Consider an $\alpha^m_i$ (a vertical edge) that does not belong to the loop $l$. The parity of the number of $\alpha$'s linked to it by this sign term depends thus on the position of the points $(m+1/2,i+1/2)$ and $(m-3/2,i-1/2)$ (the blue bullets in Figure \ref{looplbluecrosses}): if they both lie \textit{inside} the loop or \textit{outside} the loop, it is even; if one is inside and the other one outside it is odd. There is thus a $(-1)^{p_a}$ for each vertical edge with index $a$ for which this number is odd, as shown in the middle panel of Figure~\ref{looplbluecrosses}. Now the vertices that correspond to an intersection gives a possibility of simplification: every couple of $(-1)^{p_a}$ that are on each side of an edge that belongs to the loop $l$ simplifies (since their index is primed and $p_a=p_{a'}$). One recovers signs only around SE or NW corners, see the right panel of Figure~\ref{looplbluecrosses}. All these signs then exactly compensate with the signs of the first sign term.

\begin{figure}[H]
\begin{center}
\begin{tikzpicture}[scale=0.5]
\begin{scope}[yshift=-10cm,xscale=-1]
\foreach \k in {0.,1.,...,9.}
\draw[dotted] (0,\k) --++ (9,0);
\foreach \k in {0.,1.,...,9.}
\draw[dotted] (\k,0) --++ (0,9);
\draw (2,3) -- (2,7)--(4,7)--(4,6)--(6,6)--(6,7)--(7,7)--(7,4)--(6,4)--(6,3)--(7,3)--(7,2)--(4,2)--(4,3)--(2,3) ;
\draw (3,5.5) node{\color{black} $+$};
\draw[blue] (2.5,5.5) -- (0,5.5);
\draw[blue] (2.5,4.5) -- (0,4.5);
\draw[blue] (3.5,5.5) -- (9,5.5);
\draw[blue] (3.5,6.5) -- (9,6.5);
\draw (3.5,6.5) node{\color{blue} $\bullet$};
\draw (2.5,4.5) node{\color{blue} $\bullet$};
\end{scope}

\begin{scope}[yshift=-10cm, xshift=10cm,xscale=-1]
\foreach \k in {0.,1.,...,9.}
\draw[dotted] (0,\k) --++ (9,0);
\foreach \k in {0.,1.,...,9.}
\draw[dotted] (\k,0) --++ (0,9);
\draw (2,3) -- (2,7)--(4,7)--(4,6)--(6,6)--(6,7)--(7,7)--(7,4)--(6,4)--(6,3)--(7,3)--(7,2)--(4,2)--(4,3)--(2,3) ;
\draw (2,2.5) node{\color{blue} $+$};
\draw (3,2.5) node{\color{blue} $+$};
\draw (3,3.5) node{\color{blue} $+$};
\draw (4,3.5) node{\color{blue} $+$};
\draw (4,1.5) node{\color{blue} $+$};
\draw (5,1.5) node{\color{blue} $+$};
\draw (5,2.5) node{\color{blue} $+$};
\draw (6,1.5) node{\color{blue} $+$};
\draw (7,3.5) node{\color{blue} $+$};
\draw (3,6.5) node{\color{blue} $+$};
\draw (3,7.5) node{\color{blue} $+$};
\draw (4,5.5) node{\color{blue} $+$};
\draw (4,7.5) node{\color{blue} $+$};
\draw (5,5.5) node{\color{blue} $+$};
\draw (5,6.5) node{\color{blue} $+$};
\draw (7,7.5) node{\color{blue} $+$};
\end{scope}

\begin{scope}[yshift=-10cm, xshift=20cm,xscale=-1]
\foreach \k in {0.,1.,...,9.}
\draw[dotted] (0,\k) --++ (9,0);
\foreach \k in {0.,1.,...,9.}
\draw[dotted] (\k,0) --++ (0,9);
\draw (2,3) -- (2,7)--(4,7)--(4,6)--(6,6)--(6,7)--(7,7)--(7,4)--(6,4)--(6,3)--(7,3)--(7,2)--(4,2)--(4,3)--(2,3) ;
\draw (2,2.5) node{\color{blue} $+$};
\draw (4,3.5) node{\color{blue} $+$};
\draw (4,1.5) node{\color{blue} $+$};
\draw (6,2.5) node{\color{blue} $+$};
\draw (7,3.5) node{\color{blue} $+$};
\draw (4,5.5) node{\color{blue} $+$};
\draw (4,7.5) node{\color{blue} $+$};
\draw (7,7.5) node{\color{blue} $+$};
\end{scope}
\end{tikzpicture}
\end{center}
\caption{The blue line intersects the vertical edges that are linked to the black cross in $(-1)^{\sum_{m=1}^M\sum_{j=2}^L(p_{\alpha_j^m}+p_{\alpha^{m+1}_j})\sum_{i=1}^{j-1}p_{\alpha^m_i}}$. Then the blue crosses indicate the edges for which a $(-1)^{p_a}$ appears, and then after simplification. }
\label{looplbluecrosses}
\end{figure}
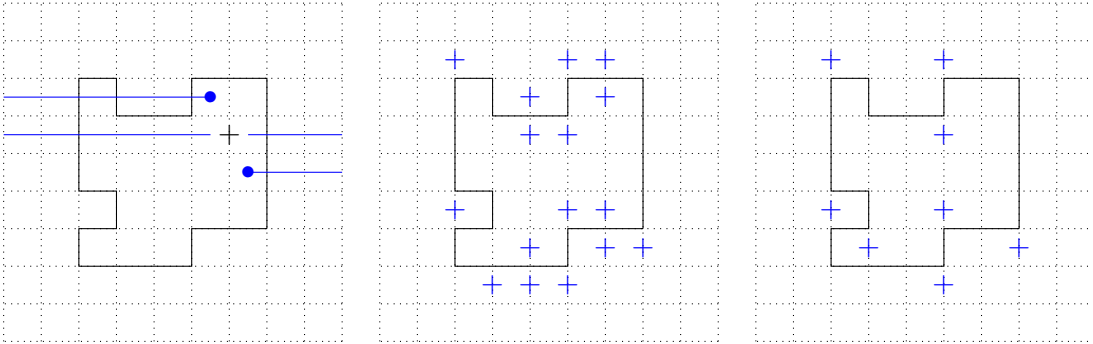
Consider now an $\alpha^m_i$ (vertical edges) that is on the loop $l$. To avoid counting twice a change of sign, only the left-going straight blue lines in the left panel of Figure \ref{looplbluecrosses} must be taken into account. There is a change of sign if and only if $\alpha^m_i$ is inside a NE corner. This gives a $-1$ for each NE corner.

There is now the sign term that comes from the third term in \eqref{eq:r}. This one contributes to $-1$ or $1$ for each SW and NE corner (according to whether the index of the loop is smaller or larger than $N/2$, by inspection of \eqref{eq:r} -- but the sign is the same for both types of corners) in the loop $l$. Together with the $-1$ for each NE corner, it comes that there is finally a $-1$ for each SW \textit{or} NE corner. But the number of NE corners (or of SW corners) is always odd for a loop that can be contracted into a point, which is the case for a non-self intersecting loop that does not cross the boundaries. Therefore the total contribution after the change of index of the loop $l$ is $-1$.

If the loop $l$ crosses only the left and right boundaries, then the previous arguments are still valid (because an horizontal line will always cross the loop an even number of times), but the number of SW or NE corners has now opposite parity as the number of times the left and right boundaries are crossed. Because of the sign term $(-1)^{\sum_{m=1}^M p_{c_1^m}}$ it cancels out and the total sign factor is still $-1$. Note the importance of taking the supertrace of the monodromy matrix to have this extra sign term.

If the loop $l$ crosses the top and bottom boundaries, then the previous arguments are slightly modified (because an horizontal line will always cross the loop a number of times that has same parity as the number of times the up and down boundaries are crossed) but still hold. The only important difference is that there is no equivalent term to $(-1)^{\sum_{m=1}^M p_{c_1^m}}$ for the up and down boundaries, so that the resulting sign factor is $(-1)^{b_v+1}$ where $b_v$ is the number of times the loop crosses the up and down boundaries.

\begin{figure}[H]
\begin{center}
\begin{tikzpicture}[scale=0.5]
\begin{scope}[xscale=-1]
\foreach \k in {0.,1.,...,9.}
\draw[dotted] (0,\k) --++ (9,0);
\foreach \k in {0.,1.,...,9.}
\draw[dotted] (\k,0) --++ (0,9);
\draw (0,3) -- (4,3)--(4,2)--(6,2)--(6,6)--(8,6)--(8,3)--(9,3) ;
\end{scope}

\begin{scope}[xshift=10cm,xscale=-1]
\foreach \k in {0.,1.,...,9.}
\draw[dotted] (0,\k) --++ (9,0);
\foreach \k in {0.,1.,...,9.}
\draw[dotted] (\k,0) --++ (0,9);
\draw (2,0) -- (2,4)--(7,4)--(7,9) ;
\draw (7,0) -- (7,1)--(6,1)--(6,0) ;
\draw (2,9) -- (2,6)--(4,6)--(4,7)--(6,7)--(6,9) ;
\end{scope}
\end{tikzpicture}
\end{center}
\caption{Examples of non-contractible loops in both directions.}
\end{figure}

Finally, the intersections of the loop $l$ with itself can be equally considered as a NW-SE couple of corners (if the indices of the two strands that intersect are the same) or a NE-SW couple of corners (if the indices of the two strands that intersect are primed) multiplied by a $-1$ in both cases. Indeed the fermionic signs stay the same during this transformation after the change of index, by inspection of \eqref{eq:r}. This transforms a loop with $n$ self-intersections into a collection of $n+1$ independent non-self-intersecting loops, but whose indices have to be collectively changed at the same time when the index of the original loop $l$ is changed. This gives an additional $(-1)^{n}$ that is compensated by the $-1$ that comes with each transformation of a self-intersection into a corner.

\end{proof}



\bibliography{sigmamodel}

\begin{thebibliography}{10}

\bibitem{denseloops}
J.~L. Jacobsen, N.~Read, and H.~Saleur, ``Dense loops, supersymmetry, and
  {G}oldstone phases in two dimensions,'' {\em Phys. Rev. Lett.}, vol.~90,
  p.~090601, 2003.

\bibitem{niedermayer}
S.~D. Katz, F.~Niedermayer, D.~N\'ogr\'adi, and C.~T\"or\"ok, ``Comparison of
  algorithms for solving the sign problem in the {O}(3) model in $1+1$
  dimensions at finite chemical potential,'' {\em Phys. Rev. D}, vol.~95,
  p.~054506, Mar 2017.

\bibitem{schomerusreview}
T.~Quella and V.~Schomerus, ``Superspace conformal field theory,'' {\em Journal
  of Physics A: Mathematical and Theoretical}, vol.~46, no.~49, p.~494010,
  2013.

\bibitem{Zirnbauer}
M.~Zirnbauer, ``Conformal field theory of the integer quantum hall plateau
  transition.'' unpublished work; arXiv:hep-th/9905054, 1999.

\bibitem{Zirnbauer1}
M.~Zirnbauer, ``The integer quantum hall plateau transition is a current
  algebra after all.'' unpublished work; arXiv:1805.12555, 2018.

\bibitem{gruzberg}
I.~A. Gruzberg, A.~W.~W. Ludwig, and N.~Read, ``Exact exponents for the spin
  quantum {H}all transition,'' {\em Phys. Rev. Lett.}, vol.~82, 1999.

\bibitem{parisisourlas}
G.~Parisi and N.~Sourlas, ``Self avoiding walk and supersymmetry,'' {\em J.
  Physique Lett.}, vol.~41, 1980.

\bibitem{readsaleursigma}
N.~Read and H.~Saleur, ``Exact spectra of conformal supersymmetric nonlinear
  sigma models in two dimensions,'' {\em Nuclear Physics B}, vol.~613, no.~3,
  pp.~409 -- 444, 2001.

\bibitem{LCFTreview}
``Special issue on logarithmic conformal field theory,'' {\em J. Phys. A},
  vol.~46, no.~49, 2013.

\bibitem{nienhuisrietman}
B.~Nienhuis and R.~Rietman, ``A solvable model for intersecting loops,'' {\em
  ITFA 92-35}, 1993.

\bibitem{martinsnienhuisrietman}
M.~J. Martins, B.~Nienhuis, and R.~Rietman, ``An intersecting loop model as a
  solvable super spin chain,'' {\em Phys. Rev. Lett.}, vol.~81, p.~504, 1998.

\bibitem{affleck}
I.~Affleck, ``Mass generation by merons in quantum spin chains and the {O}(3)
  \ensuremath{\sigma} model,'' {\em Phys. Rev. Lett.}, vol.~56, pp.~408--411,
  Feb 1986.

\bibitem{haldane}
F.~D.~M. Haldane, ``{O}(3) nonlinear $\ensuremath{\sigma}$ model and the
  topological distinction between integer- and half-integer-spin
  antiferromagnets in two dimensions,'' {\em Phys. Rev. Lett.}, vol.~61,
  pp.~1029--1032, Aug 1988.

\bibitem{birgit2}
H.~Saleur and B.~Wehefritz-Kaufmann, ``Integrable quantum field theories with
  supergroup symmetries: the {OSp}(1|2) case,'' {\em Nuclear Physics B},
  vol.~663, no.~3, pp.~443 -- 466, 2003.

\bibitem{frahmmartins}
H.~Frahm and M.~J. Martins, ``Finite-size effects in the spectrum of the
  ${O}sp(3|2)$ superspin chain,'' {\em Nucl. Phys. B.}, vol.~894, p.~665, 2015.

\bibitem{FM}
H.~Frahm and M.~J. Martins, ``The fine structure of the finite-size effects for
  the spectrum of the ${O}sp(n|2m)$ spin chain,'' {\em Nucl. Phys. B},
  vol.~930, p.~545, 2018.

\bibitem{scheunert}
M.~Scheunert, {\em The theory of {L}ie superalgebras}.
\newblock Springer, 1979.

\bibitem{wegner1989}
F.~Wegner, ``Four-loop order $\beta$-function of nonlinear $\sigma$-models in
  symmetric spaces,'' {\em Nucl. Phys. B}, vol.~316, p.~663, 1988.

\bibitem{floratospecther}
E.~G. Floratos and D.~Petcher, ``A two-loop calculation of the mass gap for the
  ${O}({N})$ model in finite volume,'' {\em Nucl. Phys. B}, vol.~252, p.~689,
  1985.

\bibitem{essler}
F.~H.~L. Essler, V.~E. Korepin, and K.~Schoutens, ``Exact solution of an
  electronic model of superconductivity in $1+1$ dimensions,'' {\em Int. J.
  Mod. Phys. B}, vol.~8, p.~3205, 1994.

\bibitem{arnaudonavan2}
D.~Arnaudon, J.~Avan, N.~Cramp\'{e}, A.~Doiku, L.~Frappat, and E.~Ragoucy,
  ``R-matrix presentation for (super)-{Y}angians ${Y}(g)$,'' {\em J. Math.
  Phys.}, vol.~44, p.~302, 2003.

\bibitem{galleasmartins}
W.~Galleas and M.~J. Martins, ``R-matrices and spectrum of vertex models based
  on superalgebras,'' {\em Nucl. Phys. B}, vol.~699, p.~455, 2004.

\bibitem{arnaudonavan}
D.~Arnaudon, J.~Avan, N.~Cramp\'{e}, A.~Doiku, L.~Frappat, and E.~Ragoucy,
  ``Bethe ansatz equations and exact {S} matrices for the $osp(m|2n)$ open
  super spin chain,'' {\em Nucl. Phys. B}, vol.~687, p.~257, 2004.

\bibitem{blotecardy}
H.~W. Bl\"ote, J.~L. Cardy, and M.~P. Nightingale, ``Conformal invariance, the
  central charge, and universal finite-size amplitudes at criticality,'' {\em
  Phys. Rev. Lett.}, vol.~56, 1986.

\bibitem{affleckc}
I.~Affleck, ``Universal term in the free energy at a critical point and the
  conformal anomaly,'' {\em Phys. Rev. Lett.}, vol.~56, 1986.

\bibitem{cardyoperator}
J.~L. Cardy, ``Operator content of two-dimensional conformally invariant
  theories,'' {\em Nucl. Phys. B}, vol.~270, 1986.

\bibitem{afflecklog}
I.~Affleck, D.~Gepner, H.~J. Schulz, and T.~Ziman, ``Critical behaviour of
  spin-s {H}eisenberg antiferromagnetic chains: analytic and numerical
  results,'' {\em J. Phys. A: Math. Gen.}, vol.~22, 1989.

\bibitem{berezintolstoy}
F.~A. Berezin and V.~N. Tolstoy, ``The group with {G}rassmann structure
  ${UOS}p(1|2)$,'' {\em Commun. Math. Phys.}, vol.~78, 1981.

\bibitem{martinsosp12}
M.~J. Martins, ``The exact solution and the finite-size behaviour of the
  ${O}sp(1|2)$-invariant spin chain,'' {\em Nucl. Phys. B}, vol.~450, p.~768,
  1995.

\bibitem{mccoy}
G.~Albertini, S.~Dasmahapatra, and B.~M. McCoy, ``Spectrum and completeness of
  the integrable $3$-state {P}otts model: a finite size study,'' {\em Int. J.
  Mod. Phys. A.}, vol.~7, 1992.

\bibitem{mccoy2}
K.~Fabricius and B.~M. McCoy, ``Bethe's equation is incomplete for the {XXZ}
  model at roots of unity,'' {\em J. Stat. Phys.}, vol.~103, 2001.

\bibitem{cardy1986}
J.~L. Cardy, ``Logarithmic corrections to finite-size scaling in strips,'' {\em
  J. Phys. A: Math. Gen.}, vol.~19, p.~L1093, 1986.

\bibitem{kausch}
H.~G. Kausch, ``Curiosities at $c=-2$,'' {\em ArXiv:hep-th/9510149}, 1995.

\bibitem{lukyanov}
S.~Lukyanov, ``Low energy effective {H}amiltonian for the {XXZ} spin chain,''
  {\em Nucl. Phys. B}, vol.~522, 1998.

\bibitem{ikhlef2}
Y.~Ikhlef, J.~L. Jacobsen, and H.~Saleur, ``An integrable spin chain for the
  ${SL}(2,{R})/{U}(1)$ black hole sigma model,'' {\em Phys. Rev. Lett.},
  vol.~108, p.~081601, 2012.

\bibitem{lu}
D.~Lu, ``On the classification of irreducible $osp(2|2)$ representations,''
  {\em Kyushu J. Math.}, vol.~66, 2012.

\bibitem{galleasmartins2}
W.~Galleas and M.~J. Martins, ``Exact solution and finite size properties of
  the ${U}_q(osp(2|2m))$ vertex model,'' {\em Nucl. Phys. B}, vol.~768, p.~219,
  2007.

\bibitem{vanderjeugt}
J.~V. der Jeugt, ``Finite and infinite dimensional representations of the
  orthosymplectic superalgebra $osp(3|2)$,'' {\em J. Math. Phys.}, vol.~25,
  p.~3334, 1984.

\bibitem{razumovstroganov}
A.~V. Razumov and Y.~G. Stroganov, ``Combinatorial nature of ground state
  vector of ${O}(1)$ loop model,'' {\em Theor. Math. Phys.}, vol.~138, 2004.

\bibitem{knutsonpzj}
A.~Knutson and P.~Zinn-Justin, ``A scheme related to the {B}rauer loop model,''
  {\em Advances in Mathematics}, vol.~214, 2007.

\bibitem{candu1}
C.~Candu, V.~Mitev, and V.~Schomerus, ``Spectra of coset sigma models,'' {\em
  Nucl. Phys. B}, vol.~877, p.~900, 2013.

\bibitem{cagnazzo}
A.~Cagnazzo, V.~Schomerus, and V.~Tlapak, ``On the spectrum of superspheres,''
  {\em Journal of High Energy Physics}, vol.~2015, p.~13, Mar 2015.

\bibitem{mitev}
V.~Mitev, T.~Quella, and V.~Schomerus, ``Principal chiral model on
  superspheres,'' {\em Journal of High Energy Physics}, vol.~2008, no.~11,
  p.~086, 2008.

\bibitem{candu2}
C.~Candu and H.~Saleur, ``A lattice approach to the conformal
  ${OS}p(2{S}+2|2{S})$ supercoset sigma model. {P}art {I}: algebraic structures
  in the spin chain. {T}he {B}rauer algebra,'' {\em Nucl. Phys. B}, vol.~808,
  p.~441, 2009.

\bibitem{candu3}
C.~Candu and H.~Saleur, ``A lattice approach to the conformal
  ${OS}p(2{S}+2|2{S})$ supercoset sigma model. {P}art ii: the boundary
  spectrum,'' {\em Nucl. Phys. B}, vol.~808, p.~487, 2009.

\bibitem{saleur1999}
H.~Saleur, ``The continuum limit of $sl({N}|{K})$ integrable super spin
  chains,'' {\em Nucl. Phys. B}, vol.~578, p.~552, 2000.

\bibitem{martinsramos}
M.~J. Martins and P.~B. Ramos, ``Solution of a supersymmetric model of
  correlated electrons,'' {\em Phys. Rev. B}, vol.~56, 1997.

\bibitem{kasteleyn}
P.~W. Kasteleyn, ``Dimer statistics and phase transitions,'' {\em J. Math.
  Phys.}, vol.~4, 1963.

\bibitem{saleurpolymer}
H.~Saleur, ``Polymers and percolation in two dimensions and twisted {N}=2
  supersymmetry,'' {\em Nucl. Phys. B}, vol.~382, 1992.

\bibitem{canduglnm}
C.~Candu, ``Continuum limit of $gl({M}|{N})$,'' {\em JHEP}, vol.~1107, 2011.

\bibitem{gourdin}
M.~Gourdin, ``Relation between the supertableaux of the supergroups
  ${OS}p(2|2)$ and ${SU}(1|2)$,'' {\em J. Math. Phys.}, vol.~27, 1986.

\bibitem{ivashkevich}
E.~V. Ivashkevich, ``Correlation functions of dense polymers and $c=-2$
  {C}onformal {F}ield {T}heory,'' {\em J. Phys. A}, vol.~32, p.~1691, 1999.

\bibitem{caracciolojacobsensaleur}
S.~Caracciolo, J.~L. Jacobsen, H.~Saleur, A.~D. Sokal, and A.~Sportiello,
  ``Fermionic field theory for trees and forests,'' {\em Phys. Rev. Lett.},
  vol.~93, p.~080601, 2004.

\bibitem{nienhuis82}
B.~Nienhuis, ``Exact critical point and critical exponents of {$O(n)$} models
  in two dimensions,'' {\em Phys. Rev. Lett.}, vol.~49, 1982.

\bibitem{nienhuis84}
B.~Nienhuis, ``Critical behavior of two-dimensional spin models and charge
  asymmetry in the {C}oulomb gas,'' {\em J. Stat. Phys.}, 1984.

\bibitem{nienhuis87}
B.~Nienhuis, ``Coulomb gas formulations of two-dimensional phase transitions,''
  {\em Phase transitions and critical phenomena}, vol.~11, 1987.

\bibitem{saleurduplantier2}
B.~Duplantier and H.~Saleur, ``Exact critical properties of two-dimensional
  dense self-avoiding walks,'' {\em Nucl. Phys. B}, vol.~290, p.~291, 1987.

\bibitem{nahum}
A.~Nahum, P.~Serna, A.~M. Somoza, and M.~Ortu{\~n}o, ``Loop models with
  crossings,'' {\em Phys. Rev. B}, vol.~87, p.~184204, 2013.

\bibitem{balog3}
J.~Balog and A.~Heged\'us, ``The finite size spectrum of the 2-dimensional
  ${O}(3)$ non-linear sigma-model,'' {\em Nucl. Phys. B}, vol.~829, 2010.

\bibitem{vilenkin}
N.~J. Vilenkin, {\em Special functions and the theory of group
  representations}.
\newblock American Mathematical Society, 1968.

\bibitem{Garrettreview}
P.~Garrett, ``Harmonic analysis on spheres,'' 2010.

\bibitem{polyakov}
A.~Polyakov, ``Interaction of goldstone particles in two dimensions.
  {A}pplications to ferromagnets and massive {Y}ang-{M}ills fields,'' {\em
  Physics Letters B}, vol.~59, no.~1, pp.~79 -- 81, 1975.

\bibitem{ikhlef}
Y.~Ikhlef, J.~L. Jacobsen, and H.~Saleur, ``A staggered six-vertex model with
  non-compact continuum limit,'' {\em Nucl. Phys. B}, vol.~789, p.~483, 2008.

\bibitem{granetjacobsensaleur}
E.~Granet, J.~L. Jacobsen, and H.~Saleur, ``A distribution approach to
  finite-size corrections in {B}ethe ansatz solvable models,'' {\em Nucl. Phys.
  B.}, vol.~934, p.~96, 2018.

\bibitem{woynaeckle}
F.~Woynarovich and H.-P. Eckle, ``Finite-size corrections and numerical
  calculations for long spin $1/2$ {H}eisenberg chains in the critical
  region,'' {\em J. Phys. A: Math. Gen.}, vol.~20, p.~L97, 1987.

\bibitem{hamerbatchelorbarber}
C.~J. Hamer, M.~T. Batchelor, and M.~N. Barber, ``Logarithmic corrections to
  finite-size scaling in the four-state {P}otts model,'' {\em J. Stat. Phys.},
  vol.~52, 1988.

\end{thebibliography}
\bibliographystyle{ieeetr}

\end{document}